\newtheorem{theorem}{Theorem}[section]
\newtheorem{lemma}[theorem]{Lemma}
\newtheorem{corollary}[theorem]{Corollary}
\newtheorem{remark}{Remark}
\title{Capacity of All Nine Models of Channel Output Feedback for the Two-user Interference Channel}
\author{Achaleshwar Sahai, Vaneet Aggarwal, Melda Yuksel and
Ashutosh Sabharwal\footnote{A. Sahai and A. Sabharwal are with the
department of ECE, Rice University, Houston, TX 77005, USA (email:
\{as27,ashu\}@rice.edu). V. Aggarwal is with AT\&T Labs-Research, Florham Park, NJ 07932, USA (email: vaneet@research.att.com). M. Yuksel is with TOBB University of Economics and Technology, Ankara, Turkey (email:
yuksel@etu.edu.tr). The material in this paper was presented in part
at the IEEE Information Theory Workshop, Taormina, Italy, 2009 \cite{asahai1}
and at the IEEE International Symposium on Information Theory, Austin, Texas, 2010
\cite{asahai2}. A. Sahai and A. Sabharwal were partially supported by NSF grant CNS-1012921 and a grant from Texas Instruments.}}
\date{}
\begin{document}
\maketitle

\begin{abstract}

In this paper, we study the impact of different channel output feedback
architectures on the capacity of the two-user interference
channel. For a two-user interference channel, a feedback link can
exist between receivers and transmitters in 9 canonical architectures
(see Fig.~\ref{fig:allcombs}), ranging from only one feedback link to
four feedback links. We derive the exact capacity region for the
symmetric deterministic interference channel and the constant-gap
capacity region for the symmetric Gaussian interference channel for
all of the 9 architectures. We show that for a linear deterministic
symmetric interference channel, in the weak interference regime, all
models of feedback, except the one, which has only one of the receivers feeding back to
its own transmitter, have the identical capacity region. When only one of the receivers feeds back to its own transmitter, the capacity region is a strict subset of the capacity region of the rest of the feedback models in
the weak interference regime. However, the
sum-capacity of \emph{all} feedback models is identical in the weak
interference regime. Moreover, in the strong interference regime all
models of feedback with at least one of the receivers feeding back to
its own transmitter have the identical sum-capacity. For the Gaussian
interference channel, the results of the linear deterministic model
follow, where capacity is replaced with approximate capacity.
\end{abstract}

\section{Introduction}

The two-user interference channel has been studied in the literature
since 1970's to understand one of the main performance limits of
multiuser communication
networks~\cite{car,sato78,sato,costa82,costa85,sason,etkin}. Feedback
in interference channels has been considered in order to achieve a
possible improvement in data rates. A large body of work on
interference channels \cite{kramer02,kramer04,gast06,jiang07} explores
feedback strategies, where each receiver sends channel output feedback
to its own transmitter. More generalized form of feedback in a
two-user interference channel is considered in
\cite{tunita,tunout,tuninetti-sc,tuninetti}. Recent work in
\cite{suh2009,suh-allerton} particularly analyzes the capacity region
of two-user deterministic and Gaussian interference channels, where
each of the receivers send channel output feedback to its own
transmitter. The authors of \cite{alireza} consider the case of rate
limited channel output feedback and investigate its capacity region, where each user feeds back to its own
transmitter.

The conventional model of channel output feedback in a two-user
interference channel has each receiver feeding back to its intended
transmitter \cite{suh2009,alireza,suh-allerton}. However, several
different feedback architectures are possible based on the presence or
absence of feedback links between both receivers and both
transmitters. The feedback architecture can be asymmetric if feedback
resources available to different transmitter-receiver pairs are
different. Consider two mobile terminals in two neighboring cells,
communicating with their corresponding base stations. If the mobile
user in the first cell is closer to its base-station, then its
base-station can support a strong feedback link. At the same time, if
the mobile station in the neighboring cell is farther away from its
base-station, it will experience a poor or possibly no feedback
channel. In such a case, we say only one direct-link feedback is
available. In another scenario, suppose one of the receivers in the
interference channel is capable of sending feedback to both the
transmitters, whereas the other receiver does not send any feedback.
Then it would be a case of single receiver broadcasting feedback. The
conventional model of channel output feedback is \emph{insufficient}
to understand the effect of feedback on the capacity region of the
interference channel. We need to consider different feedback
architectures, which forms the focus of our study.

In this paper, we conduct a comprehensive study of the capacity region
of all feedback architectures in two-user linear deterministic
\cite{avestimehr,bresler2} and Gaussian interference channels. The
feedback architectures that we study are all parametrized by the
feedback links they support. In a two-user interference channel, there
can be as many as 4 possible feedback links, i.e., one feedback link
from each receiver to each transmitter. Therefore, excluding the case
of no feedback links, a total of $2^4 - 1 = 15$ feedback models are
possible. Barring the symmetrical cases, 9 canonical feedback models
are possible, which are shown in Fig.~\ref{fig:allcombs}. In this
work, we study the capacity region of all the 9 feedback models shown
in Fig.~\ref{fig:allcombs}. In order to gain insights about good
communication schemes that apply to the different feedback models, we
first analyze them under the symmetric linear deterministic model of
interference. Then, we extend the results to the Gaussian interference
channel, deriving the approximate capacity region by developing outer
and inner bounds, which are within constant bits of one another.

In this paper, the comprehensive study of capacity region of different
feedback architectures leads to three main results. The first main
result of the paper is that for a linear symmetric deterministic
interference channel, all 9 canonical feedback models except one (with
only one direct-link feedback, shown in Fig.~\ref{fig:fb_one-direct})
have the identical capacity region in the weak interference
regime. Moreover, the capacity region of single direct link feedback
model is a strict subset of the capacity region of the rest of the
feedback models. The first main result extends to the Gaussian channel
case where all models of feedback, except single direct-link feedback
model, have the same approximate capacity region which is within
constant bits from their respective outer-bounds.

The second main result of the paper is that for a linear symmetric
deterministic interference channel, all feedback models have the
\emph{identical} sum-capacity in the weak interference regime. This
result is particularly interesting because if sum-capacity is the
performance metric, any one feedback link is sufficient to achieve the
maximum feedback sum-capacity. 

The third main result of the paper is that to achieve maximum feedback
sum-capacity, availability of one direct feedback link is sufficient
for all regimes of interference, i.e., the sum-capacity with single
direct feedback link is identical to the sum-capacity with all four
feedback links for all regimes of interference. The second and third
main results also hold for the Gaussian interference channel, if the
term sum-capacity is replaced with approximate sum-capacity.

We show the above three results by deriving exact
(deterministic)/approximate (Gaussian) capacity regions of all of the
9 canonical feedback models. We find two new outer-bounds and propose
two new achievability schemes. For the deterministic channel model,
the achievability scheme attains all points on the outer bound,
whereas in the Gaussian model, the inner bound is a constant number of
bits away from the outer bound ($2.59$ bits/Hz for feedback models in
Fig.~2(a), 2(b), and 2(c), $4.59$~bits/Hz for Fig.~2(d) and
Fig.~2(e)). The achievability for all the feedback models is derived
in two steps. First, an achievable strategy is proposed for two atomic
feedback models: one with single direct feedback link and another with
single cross feedback link (where one of the receivers feeds back to
its interfering transmitter). Then, using a combination of the
achievable strategies for the two atomic feedback models, the
achievable rate region of the rest of the feedback models is derived.

 The first achievable strategy we propose for single direct feedback link
 is based on using a Han-Kobayashi type message
 splitting~\cite{han}. Our coding strategy is similar to the one employed in
 two-user interference channel without feedback in the sense that the
 coding scheme splits the message at each transmitter into two parts,
 private and common. However, the coding strategy differs in the
 transmission of the common message. The common message generated at
 the second transmitter is transmitted twice, once by the transmitter, where it is generated, and once from the other transmitter, where it is known via feedback. The purpose of the re-transmission of the common
 message depends on the regime of interference. In the strong
 interference regime, feedback offers gain, if it allows the common
 message to travel from its source to destination via an alternate
 independent path of higher capacity (than the direct link). In the weak interference regime, the first
 transmitter can perform block-Markov encoding based on the common
 message of the second transmitter. Block-Markov encoding of messages
 based on the common message of the second transmitter, helps the
 first receiver to resolve some of the past interference, without
 causing any apparent interference at the second receiver.

The above achievable strategy turns out to be insufficient to show the
exact/approximate capacity region for deterministic/Gaussian
interference channel with feedback models shown in
Fig.~\ref{fig:fb_others}. The second achievable strategy, for single
cross-link feedback model, is based on block-Markov encoding of
messages at the second transmitter and dirty paper encoding at the first
transmitter. Since the second transmitter performs block-Markov
encoding, and cross-link feedback is available to the first
transmitter, the first transmitter can learn about the ``future''
interference that its receiver will face. Based on the channel output
feedback from the second receiver, the first transmitter performs
dirty paper encoding to protect its receiver from future
interference. Using this second achievable strategy in combination with
the first achievable strategy, the capacity region for cross-link
feedback is proven.

{\bf Relations to similar work}: The coding strategy in
\cite{suh2009,tuninetti-sc,prabhakaran-sc} also employ a Han-Kobayashi
type message splitting. In \cite{suh2009}, the feedback model has each
transmitter receiving feedback from its respective receiver, and while
the message is split into only two parts, private and common, only a
part of the common message of the other transmitter is re-transmitted
in subsequent blocks. Our coding scheme for the single direct-link
feedback re-transmits all the common message of only one of the
transmitters. In \cite{tuninetti-sc,prabhakaran-sc}, the message is
split into four parts: two common and two private and feedback induces
source cooperation by making sources learn the common message of the
other transmitter. In our coding scheme too, the purpose of
re-transmitting the common message is to induce cooperation/allow
routing.

We would also like to remark that the work on generalized feedback in
\cite{tuninetti,tunout,tunita,tuninetti-sc}, as well as the work on
source cooperation by two sources overhearing each other's messages
over a noisy channel in \cite{prabhakaran-sc} are closely related to
our work. The outer and inner bounds derived in
\cite{tunita,tuninetti-sc} and \cite{prabhakaran-sc}, concurrent to
our work in \cite{asahai1,asahai2}, can be particularized to obtain
the sum-capacity result shown in Lemma~\ref{th:sumcap}. In this work,
we comprehensively study the exact and approximate capacity regions
for linear deterministic and Gaussian interference channel models
respectively for all canonical feedback models.

The rest of the paper is organized as follows. Section
\ref{SecPrelims} introduces the Gaussian channel model and its
deterministic approximation. Section~\ref{SecPrelims} also presents
all the different feedback models that will be studied in the
paper. Section~\ref{sec:mainresults} is a preview of the main results
and insights regarding them. Section~\ref{SecDC} and \ref{SecGC}
present the capacity regions (exact and approximate respectively) for
the linear deterministic and Gaussian interference channels for all
models of feedback. Section~\ref{sec:conclude} concludes the paper
with discussions.

\section{Channel Model and Preliminaries}
\label{SecPrelims}
In this section, we describe the two-user symmetric Gaussian and deterministic
interference channel models and the 9 canonical feedback architectures that will be used throughout the paper.

\subsection{Channel Model}
A two-user interference channel consists of two transmitters,
$\mathsf{T_1}$ and $\mathsf{T_2}$, and two receivers $\mathsf{D_1}$
and $\mathsf{D_2}$. Each receiver $\mathsf{D}_u$ is interested in the
message transmitted by transmitter $\mathsf{T}_u$ for $u\in\{1,2\}$,
while the message from the other transmitter is interference.

The two-user symmetric Gaussian interference channel, shown in
Fig.~\ref{fig:symmgauss} is a special case of the two-user interference
channel, where the noise at both the receivers have zero mean, unit
variance complex Gaussian distribution. Let $W_u$ denote the message
$\mathsf{T}_u$ transmits in $N$ successive transmissions,
where $W_u \in \mathcal{W}_u = \{1,2\ldots 2^{NR_u} \}$, $N \in
\mathbb{N}$ and $R_u \in \mathbb{R}$. The function $f_{uj}: W_u
\mapsto X_{uj}$ denotes the encoding that maps the message to the input over the
channel, $X_{uj}\in \mathbb{C}$, $j \in [1,2, \ldots
  N]$. Let $X_{u}^N = [X_{u1}, X_{u2}, \ldots X_{uN}]$ and $Y_{u}^N = [Y_{u1}, Y_{u2},
  \ldots Y_{uN}]$, where $X_{uj}$ ($Y_{uj}$) denotes the signal transmitted (received) at the $j^\mathrm{th}$ time instant at   $\mathsf{T}_u$ ($\mathsf{D}_u$). Then, when $g_{ij}\in \mathbb{C}$ are the channel gains, the received signals at the two receivers are given by
\begin{eqnarray}
 Y_{1j} & = & g_{11}X_{1j} + g_{21}X_{2j} + Z_{1j} \nonumber \\
 Y_{2j} & = & g_{22}X_{2j} + g_{12}X_{1j} + Z_{2j} \nonumber .
\end{eqnarray}
The decoding function $h_u$ maps the output $Y_{u}^N$ to a symbol
$\widehat{W}_u \in \mathcal{W}_u$ ($h_u: Y_{u}^N \mapsto
\widehat{W}_u$).

In this paper, we will focus on the symmetric Gaussian channel, where the direct gains
are equal, $g_{11} = g_{22} = g_{d}$, the cross gains are equal,
$g_{12} = g_{21} = g_c$, and the noises $Z_{1j}$ and $Z_{2j}$ are both
distributed as $\mathcal{CN}(0,1)$. Moreover, the transmitted power is
constrained such that $\mathbb{E}(|X_{1j}|^2) \le P_1$,
$\mathbb{E}(|X_{2j}|^2) \le P_2$, and $P_1 = P_2 = P$, where the
$\mathbb{E}(.)$ denotes the expected value of a random variable. We
also define the signal to noise ratio ($\mathsf{SNR}$) and the
interference to noise ratio ($\mathsf{INR}$) as
\begin{equation}
 \mathsf{SNR} = |g_d|^2P,\text{ } \mathsf{INR} = |g_c|^2P. \nonumber
\end{equation}
The regime of interference is weak, when $\mathsf{SNR} \geq
\mathsf{INR}$ and strong when $\mathsf{SNR} < \mathsf{INR}$.
Moreover, the ratio of $\mathsf{INR}$ to $\mathsf{SNR}$ in dB scale
will be denoted by
\begin{equation}
\alpha = \frac{\log(\mathsf{INR})}{\log(\mathsf{SNR})}.
\end{equation}

\begin{figure}
\centering \subfigure[]{\label{fig:symmdeter}
\resizebox{1.8in}{!}{\input{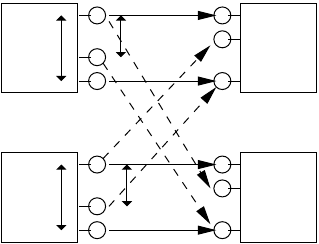_t}}}
\hspace{0.5in}
\subfigure[]{\label{fig:symmgauss}
\resizebox{2.2in}{!}{\input{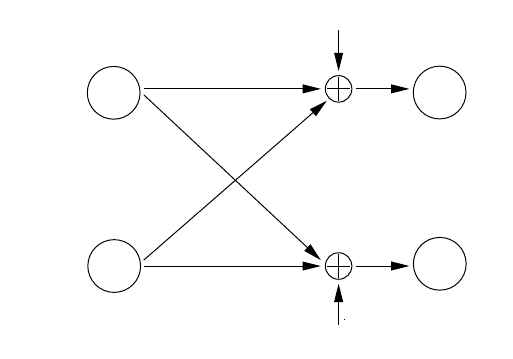_t}}}
   \caption {The (a) deterministic and (b) Gaussian models for the two-user interference channel.}
 \label{fig:ifc}
\end{figure}

The deterministic interference channel \cite{avestimehr} is a good
approximation of the Gaussian interference channel, when signal and
interference powers are much larger compared to the noise. We will use
the deterministic approximation of the two-user Gaussian interference
channel with feedback to develop insights for designing achievable communication
strategies for the Gaussian model. The
deterministic interference channel is described as follows. Associated
with the link between transmitter $\mathsf{T}_u$, $u \in \{1,2\}$, and
receiver $\mathsf{D}_{k}$, $k \in \{1,2\}$, is a non-negative integer
$n_{uk}$ (which corresponds to the channel gain in the Gaussian
channel). Let $q = \max_{u,k} (n_{uk})$. Overloading the notation for
input and output, the inputs at $u^{\mathrm{th}}$ transmitter at time
$j$ is denoted by $X_{uj} \in \mathbb{F}_2^q$. Equivalently, $X_{uj}$
can be written as $X_{uj} =\left[ X_{uj_1}X_{uj_2} \ldots
  X_{uj_q}\right]^T $, such that $X_{uj_1}$ and $X_{uj_q}$ are the
most and the least significant bits respectively. The received signal
at time $j$ is denoted by the vector $Y_{kj}\in \mathbb{F}_2^q$ or
equivalently $Y_{kj} =\left[ Y_{kj_1}Y_{kj_2} \ldots Y_{kj_q}\right]
^{T}$. Specifically, the received signal $Y_{kj}$, $k=1,2,$ of a deterministic
interference channel is given by
\begin{equation}%
\begin{array}
[c]{cc}%
Y_{kj}=\mathbf{S}^{q-n_{1k}}X_{1j}\oplus\mathbf{S}^{q-n_{2k}}X_{2j} &
k=\{1,2\},
\end{array}
\end{equation}
where $\oplus$ denotes the XOR operation, and $\mathbf{S}$ is a
$q\times q$ shift matrix with ones on the first diagonal below the
main diagonal, and zeros everywhere else. The symmetric deterministic
channel, shown in Fig.~\ref{fig:symmdeter}, is characterized by two
values: $n = n_{11} = n_{22}$ and $m = n_{12} = n_{21}$. Here $n$ and
$m$ indicate the number of signal bit levels that we can send through
the direct links and the cross links, respectively. When $\frac{m}{n}
\leq 1$, the system is in the weak interference regime, and when
$\frac{m}{n} > 1$, the system is in the strong interference regime. We
denote by $\mathbf{O}_{p} = [0,0, \ldots, 0]^T $ such that the
cardinality of $\mathbf{O}_{p}$ is $p$.


\subsection{Feedback Models}

In this paper, we will use feedback to imply channel output feedback
from the receivers to the transmitters. The feedback is assumed to be
strictly causal and noiseless. There are four feedback links from the
two receivers to the two transmitters. A feedback model is defined by
the four-tuple $(F_{11} F_{12} F_{21} F_{22})$, where
\begin{equation}
F_{ku} = \begin{cases} 1 & \text{if there is a feedback link from }
  \mathsf{D}_k \text{ to } \mathsf{T}_u, \\ 0 & \text{otherwise}.
\end{cases}
\end{equation}
Fig.~\ref{fig:allcombs} shows the 9 principal feedback combinations and lists their
symmetrical equivalent feedback models. With
feedback, we can formalize the transmitted symbols as
\begin{equation}
  X_{uj} = f_{uj}(W_u, Y_1^{j-1}F_{1u}, Y_2^{j-1} F_{2u}),\text{ }u=
  \{1,2\}, \nonumber
\end{equation}
where $F_{ku} = 1$ implies that the channel output, $Y_{k}^{j-1}$, is
known causally to the $u^{\mathrm{th}}$ transmitter. The feedback link
from a receiver to its own transmitter is the direct-link feedback,
and the link to the other transmitter is the cross-link feedback. If
only two direct-links of feedback exist, then $F_{11}=F_{22}=1$ and
$F_{12}=F_{21}=0$. If only one direct-link feedback exists, then
$F_{12}=F_{21}=0$ and either $F_{11}$ or $F_{22}$ is 1 while the other
is zero. Since we consider the symmetric interference channel, unless
otherwise specified, we will, without loss of generality, assume that
one direct-link feedback model is equivalent to $F_{11}=1$ and
$F_{22}=0$.
When feedback is broadcast from a single receiver, we will assume that
$F_{11}=F_{12}=1$ while $F_{21}=F_{22}=0$.
In the $(1111)$ feedback model, both receivers broadcast their channel outputs.

\subsection{Achievable Rate and Capacity Definitions}
A rate pair $(R_1, R_2)$ is said to be \emph{achievable}, if for
independent and identically distributed (i.i.d.) messages $W_1 \in \mathcal{W}_1$ and $W_2 \in
\mathcal{W}_2$, where $\mathcal{W}_u = \{1, \cdots, 2^{NR_u}\}$ and $u
\in \{1,2\}$, there exist encoders $f_{uj}$ and decoders $h_u$ so that
the probability that the decoded messages $\widehat{W}_1$ and
$\widehat{W}_2$ at ${\sf D}_1$ and ${\sf D_2}$ respectively are in
error goes to 0 as $N\to\infty$. More precisely, for $u = \{1,2\}$ define the average
error probability of the message $\mathsf{T}_u$ transmits to $\mathsf{D}_u$
as
\begin{equation}
\epsilon_{u,N} = \mathbb{E}(\mathrm{Pr}(\widehat{W}_u \neq W_u )).
\end{equation}
Then the rate pair $(R_1, R_2)$ is said to be achievable, if both
$\epsilon_{1,N}$ and $\epsilon_{2,N}$ can be driven to zero as $N \to
\infty$.  The capacity region is the closure of all achievable rate
pairs $(R_1,R_2)$. Since there are different capacity regions for
different feedback models, we will use a superscript representing the
state $(F_{11} F_{12}F_{21}F_{22})$. The capacity region and the
sum-capacity of the $(F_{11} F_{12}F_{21}F_{22})$ feedback model are
respectively denoted by $\mathcal{C}^{(F_{11} F_{12}F_{21}F_{22})}$
and $C_\mathrm{sum}^{(F_{11}F_{12}F_{21}F_{22})}$, while the
achievable rate region and the sum-rate are denoted by $\mathcal{R}^{(F_{11} F_{12} F_{21} F_{22})}$ and $R^{(F_{11} F_{12}
  F_{21} F_{22})}_{\rm sum}$ respectively.


\begin{figure}[t]
\begin{center}
\subfigure[Feedback models with at least two direct feedback
  links]{\label{fig:both-direct}
  \resizebox{!}{1.0in}{\input{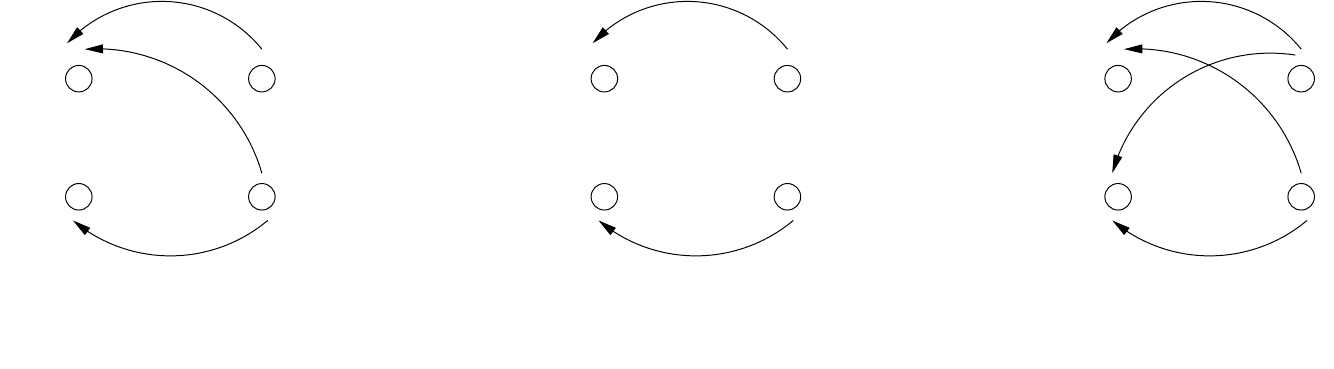_t}}}
\\ \subfigure[One of the receivers is broad-casting
  feedback]{\label{fig:fb_one-broadcast}
  \resizebox{!}{1.0in}{\input{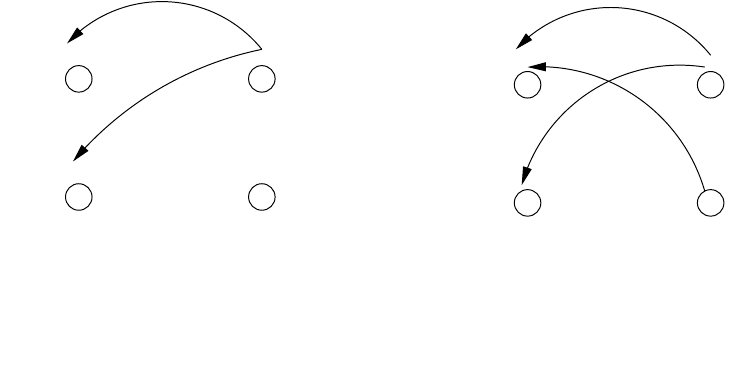_t}}} \hspace{0.2cm}
\subfigure[Both receivers feeding back to the unintended
  transmitter]{\label{fig:fb_both-cross}
  \resizebox{!}{1.0in}{\input{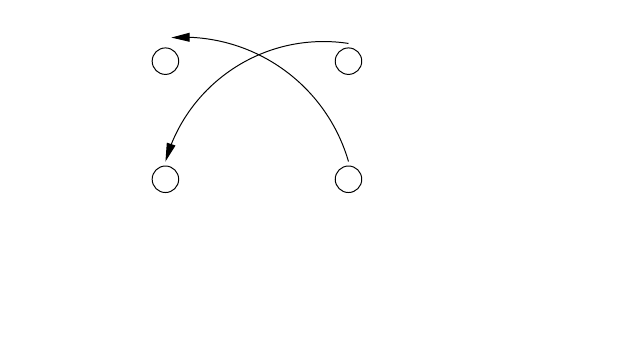_t}}} \hspace{0.2cm}
\subfigure[Only one receiver feeding back to its
  transmitter]{\label{fig:fb_one-direct}
  \resizebox{!}{1.0in}{\input{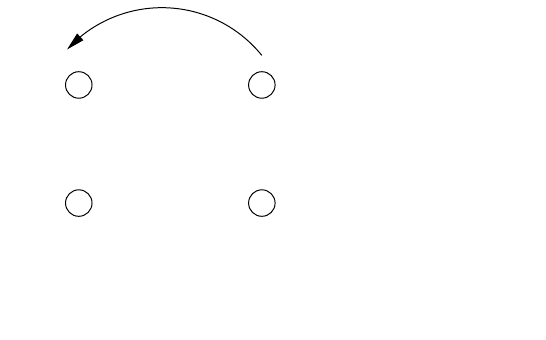_t}}}
\\ \subfigure[Feedback models where only one of the transmitter
  receives cross-link or cross- as well as direct-link feedback]{\label{fig:fb_others}
  \resizebox{!}{1.0in}{\input{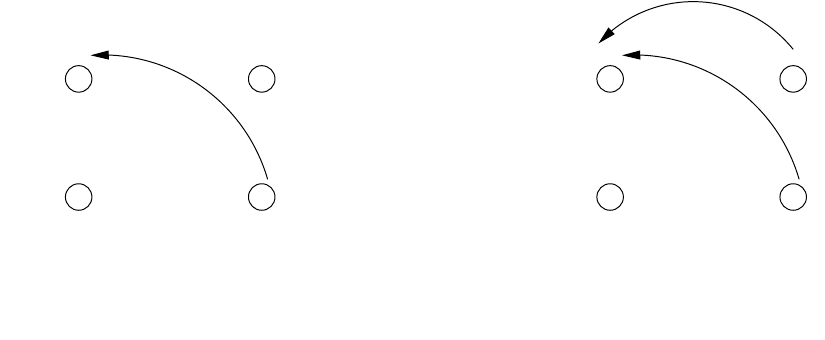_t}}}
\caption{The 9 canonical feedback models. The figure
  shows only the feedback links, while the underlying interference
  channel is depicted in Fig.~\ref{fig:ifc}. The feedback state of
  each of the feedback models is also shown. These 9 models (15
  including the symmetric cases) constitute all possible cases of
  feedback.}\label{fig:allcombs}
\end{center}
\end{figure}

\subsection{Prior Results}
To contrast the capacity region and the sum-capacity results derived
in this paper to the no feedback case, the following theorem is presented:
\begin{theorem}[\cite{costa82,bresler2,etkin}] \label{lem:det_region}The
  capacity region of the two-user symmetric deterministic interference
  channel without any feedback, $\mathcal{C}^{(0000)}$ is the closure
  of all $(R_1,R_2)$ satisfying
\begin{eqnarray}
R_1&\le&n \label{eq:th1r1} \\ R_2&\le&n \label{eq:th1r2} \\ R_1+R_2&\le&
\min((n-m)^+ + \max(m,n), 2 \max(m , (n -m))) \label{eq:th1r1r2}  \\
R_1+2R_2&\le& \max(m,n) + (n - m)^+ + \max(m, (n - m)) \label{eq:th1r12r2}  \\
2R_1+R_2&\le& \max(m,n) + (n - m)^+ + \max(m, (n - m)). \label{eq:th12r1r2}
\end{eqnarray}
\end{theorem}

The capacity region of the two-user deterministic interference
channel, with feedback from both the receivers to their respective
transmitters, i.e. the $(1001)$ feedback model, has been studied in
\cite{suh-allerton,suh2009} and is given by:
\begin{theorem}[\cite{suh-allerton}]\label{th:psumcap} \label{lem:det_region_fb}
 The capacity region of the two-user symmetric deterministic
 interference channel with two direct feedback links,
 $\mathcal{C}^{(1001)}$, is the closure of all $(R_1,R_2)$ satisfying
\begin{eqnarray}
R_1&\le& \max(n,m) \label{eq:th2r1} \\ R_2&\le& \max(n,m) \label{eq:th2r1}
\\ R_1+R_2&\le& (n - m)^+ + \max(n, m). \label{eq:th2r1r2}
\end{eqnarray}
\end{theorem}
Theorem \ref{th:psumcap} shows that with feedback, the capacity region
of the deterministic interference channel enlarges and the
sum-capacity improves, as the $(1001)$ feedback model deactivates the
bounds \eqref{eq:th1r12r2} and \eqref{eq:th12r1r2} in Theorem
\ref{lem:det_region}.


\section{Preview of Main Results}
\label{sec:mainresults}

In this paper, we will prove the capacity/approximate-capacity region
of all 9 canonical feedback models for the deterministic/Gaussian
channels. Before presenting the technical details in Sections
\ref{SecDC} and \ref{SecGC}, in this section, we highlight our main
results and offer related insights.
\begin{enumerate}
\item Except the single direct-link feedback model $(1000)$, all
  feedback models have the identical capacity region in the weak
  interference regime. Thus, all feedback models (except the $(1000)$
  feedback model) achieve the capacity region achievable by all four
  feedback links, $\mathcal{C}^{(1111)}$. In particular, this result includes that the capacity
  region of the single cross-link feedback model is identical with the capacity
  region with all four feedback links, i.e., $\mathcal{C}^{(0010)}
  \equiv \mathcal{C}^{(1111)}$ in the weak interference regime. Moreover, the capacity region
  $\mathcal{C}^{(1000)}$ is a strict subset of
  $\mathcal{C}^{(1111)}$. 
\item All feedback models have the identical sum-capacity in the weak
  interference regime.
\item In the strong interference regime, feedback models with at least
  one direct-link of feedback have the same sum-capacity as that with
  all four feedback links, i.e., $C_{\rm sum}^{(1000)} = C_{\rm
    sum}^{(1\mathsf{xxx})} = C_{\rm sum}^{(1111)}$.
\end{enumerate}
The above results that hold for deterministic channels apply to
Gaussian channels, if the term capacity is replaced with approximate
capacity. We develop the above results by deriving two new
outer-bounds and proposing two new achievability schemes. An
illustration of the achievability schemes through examples and
intuitions about the above results follow.

\subsection{Weak Interference Regime} \text{ }
{\bf Gain due to source cooperation}: If a source receives feedback,
it can \emph{causally} learn a part of the message being transmitted
by the other source. Thus source cooperation can be induced, which
improves the capacity region and the sum-capacity. For instance, let ${\sf
  T}_1$ receive feedback, then it can causally learn a part of the
message transmitted by the interfering source, ${\sf T}_2$. If the
``past'' interference impairs decoding the intended signal at ${\sf
  D}_1$, then with the help of causal feedback, ${\sf T}_1$ can encode
its message in the forthcoming blocks such that it can help its
receiver resolve the ``past'' interference. On the other hand, the
knowledge of the message transmitted by ${\sf T}_2$ can also be used
by ${\sf T}_1$ to encode its message such that it is robust against
``future'' interference from ${\sf T}_2$. We illustrate the two forms
of source cooperation, which are possible in direct-link and cross-link
feedback models through two examples in a deterministic model with
$\frac{m}{n} = \frac{1}{3}$.

\emph{Example 1, direct-link feedback:} In the coding strategy shown
in Fig.~\ref{fig:relay_31}, $\mathsf{T}_1$ learns the interference,
$b_1$, received at $\mathsf{D}_1$ in the first block via feedback and
transmits it in the second block on a bit-level that is above its
interference floor. This enables $\mathsf{D}_1$ to decode the
interference that occurred in the first block, while causing no
apparent interference at ${\sf D_2}$ (since $b_1$ is an intended
signal at ${\sf D}_2$). With the number of blocks approaching $\infty$, the
rate pair $(2,3)$ is achievable.

\emph{Example 2, cross-link feedback:} In the coding strategy shown in
Fig.~\ref{fig:dirty_31}, ${\sf T}_1$ learns the message transmitted by
${\sf T}_2$ in the first block. The transmitter ${\sf T}_2$ follows a
block-Markov type encoding and repeats $b_2$ in the second time
block. Since ${\sf T}_1$ knows $b_2$ at the end of first block, via
cross-link feedback, and is also aware that $b_2$ is the likely
interference in the second block, it performs a dirty paper like
encoding scheme to ensure that its message is robust to interference. In the
second block three bits of intended message are decodable at ${\sf
  D}_1$ and two bits are decodable at ${\sf D}_2$, thus leading to
rate pair $(3,2)$ as number of blocks $\to \infty$.

In either of the examples above, one of the transmitter-receiver pairs
communicates essentially ``interference-free'', even though the other
transmitter is transmitting at bit-levels which cause
interference. Such interference-free communication is impossible
without feedback, unless one of the transmitter-receiver pair
sacrifices its rate. An important difference between direct and
cross-link feedback is that with direct-link feedback, ${\sf T}_1$ can
know only the ``past'' interference, while with cross-link feedback
$\mathsf{T}_1$ has access to possibly both ``past'' and ``future''
interference. Thus, with cross-link feedback, the rate pairs $(3,2)$
as well as $(2,3)$ are achievable if $\frac{m}{n} =
\frac{1}{3}$. However, with single direct-link feedback, where ${\sf
  T}_1$ is the only source receiving feedback, the rate pair $(3,2)$
is \emph{not} achievable. Therefore, in the weak interference regime,
cross-link feedback model has a larger capacity region than
direct-link link feedback, $\mathcal{C}^{(1000)} \subset
\mathcal{C}^{(0010)}$. Also, if both direct feedback links are
present, then by symmetry both rate pairs $(3,2)$ as well as $(2,3)$
are achievable. Thus, $\mathcal{C}^{(1000)} \subset
\mathcal{C}^{(1001)}$.

\begin{figure}[t]
\centering \subfigure[Using the direct-link $(1000)$ feedback model, $b_1$
  received in the second block at $\mathsf{D}_1$ helps resolve the
  interference at ${\sf D}_1$ in the first block. Also note that all the
  intended bits can be decoded at $\mathsf{D}_2$, and thus there is no
  interference observed at ${\sf D}_2$. The rate pair $(2,3)$ is achievable,
  when number of blocks $\to
  \infty$.]{\label{fig:relay_31}\scalebox{0.4}{\input{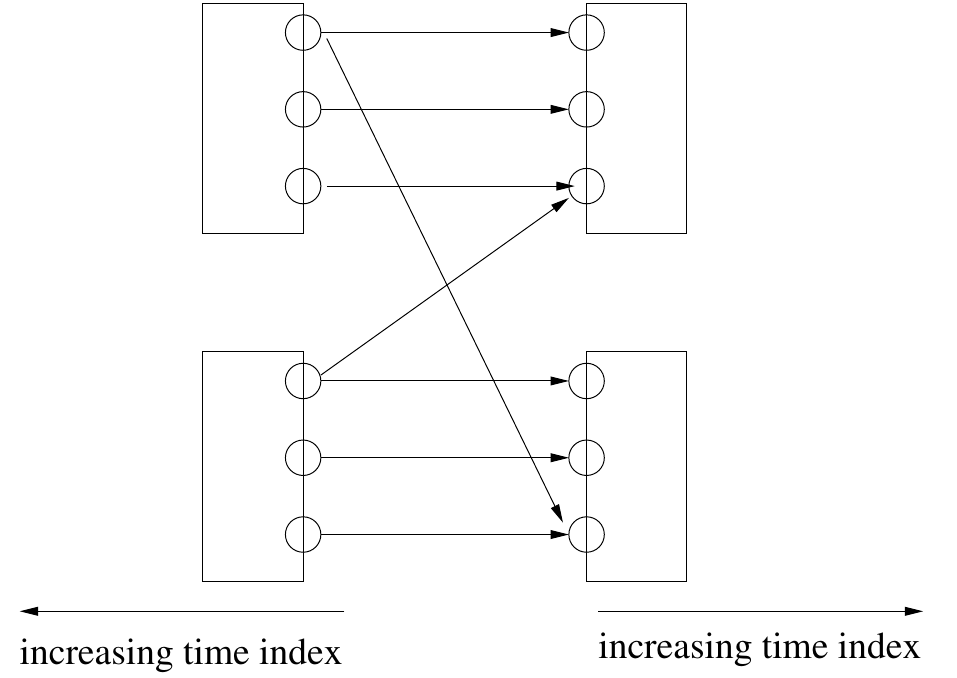_t}}}
\hspace{1cm} \subfigure[Using the cross-link $(0010)$ feedback model,
  $b_2$ is known at ${\sf T}_1$ at the end of the first block of
  transmission. Using the knowledge of $b_2$, dirty paper encoding is
  performed at ${\sf T}_1$, such that the interference from ${\sf
    T}_2$ does not impair reception at ${\sf D}_1$. The rate
  pair $(3,2)$ is achievable, when number of blocks $\to
  \infty$.]{\scalebox{0.4}{\label{fig:dirty_31}\input{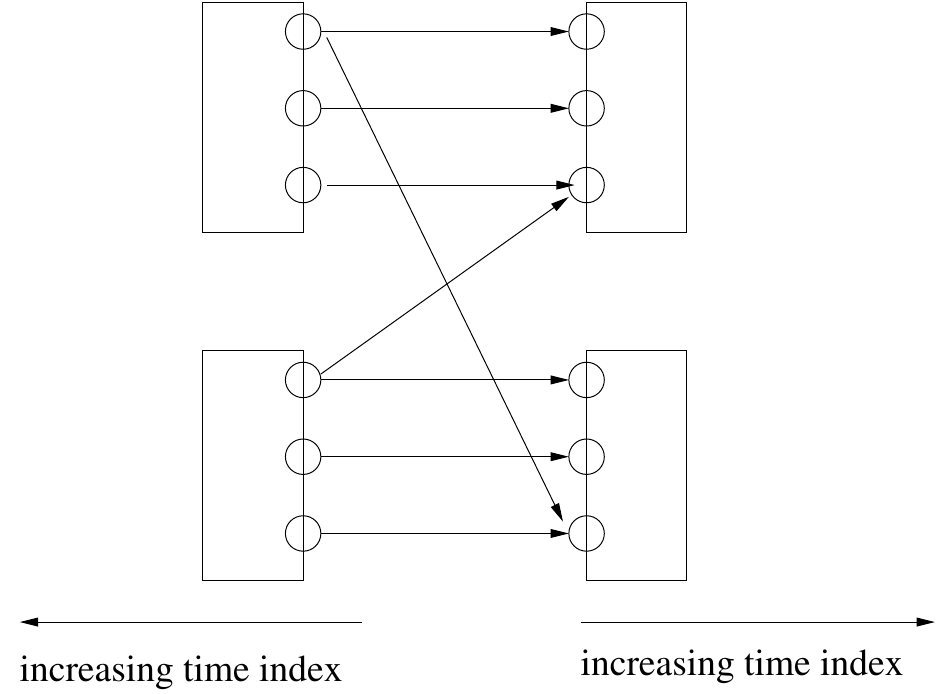_t}}}
\caption{The first two blocks of encoding for $(1000)$ and $(0010)$ feedback models for the deterministic interference channel with $n = 3, m =1$.}
\end{figure}

{\bf Limited gain due to feedback delay}: Feedback implies that
cooperation between sources can occur only causally. In the example
shown in Fig~\ref{fig:relay_31}, ${\sf T}_1$ expends resources (bits)
to help its receiver resolve ``past'' interference, while in the example
shown in Fig~\ref{fig:dirty_31}, ${\sf T}_2$ expends resources in
creating known interference at ${\sf D}_1$. Even with all
four feedback links present, i.e., $(1111)$ feedback model, there is a
trade-off between expending resources to transmit a new message
versus resolving past interference/creating known
interference. Therefore, in the weak interference
regime, having all four feedback links does not enlarge the capacity
region compared to two direct-link feedback or cross-link feedback,
i.e., $\mathcal{C}^{(0010)} \equiv \mathcal{C}^{(1001)} \equiv
\mathcal{C}^{(1111)}$.

{\bf Equivalence of sum-capacity}: The capacity region of the single
direct-link feedback model is smaller than the rest of the feedback
models. This is so because in single direct-link feedback model,
unlike other feedback models, only one of the sources, say ${\sf
  T}_1$, can assist the other source, ${\sf T}_2$ such that there is
no apparent interference at its intended receiver ${\sf
  D}_2$. However, such one-sided assistance is sufficient to achieve
the same sum-capacity as would be achievable with two sided assistance
(possible with cross-link, two direct-link or all four feedback
links). Thus, ${C}_{\rm sum}^{(1000)} = {C}_{\rm
  sum}^{(1\mathsf{xxx})} = C_{\rm sum}^{(0010)} = C_{\rm
  sum}^{(1111)}$.




\subsection{Strong Interference Regime} \text{ }
In the strong interference regime, feedback offers improvement in both
the sum rate and the capacity region, if it enables an alternate
independent path of higher capacity for messages to travel from its
source to its destination. As a direct consequence, any feedback model,
which does not lead to an alternate path, attains no gain. On the
other hand, in feedback models, which obtain gains out of feedback
(models with at least one direct feedback link), in the strong
interference regime, the gain is limited by the capacity of the
alternate path. We describe how this limitation leads to the result
that all feedback models with at least one direct feedback link have
the same sum-capacity.

\begin{figure}[t]
\begin{center}
\subfigure[The dashed line indicates the alternate path from
  $\mathsf{T}_2$ to
  $\mathsf{D}_2$.]{\label{fig:altpath}\scalebox{0.4}{\input{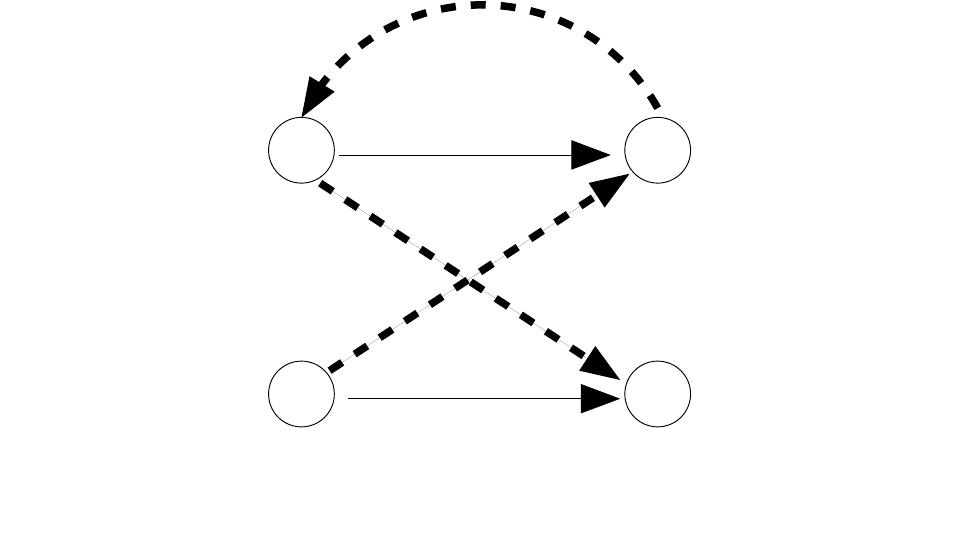_t}}} \hspace{2cm}
\subfigure[Both alternate paths for communicating a message from a
  transmitter to its intended receiver are depicted, one is the dashed
  line and the other is the dotted line. The alternate paths share a
  common, finite capacity
  sub-path.]{\label{fig:altpath2}\scalebox{0.4}{\input{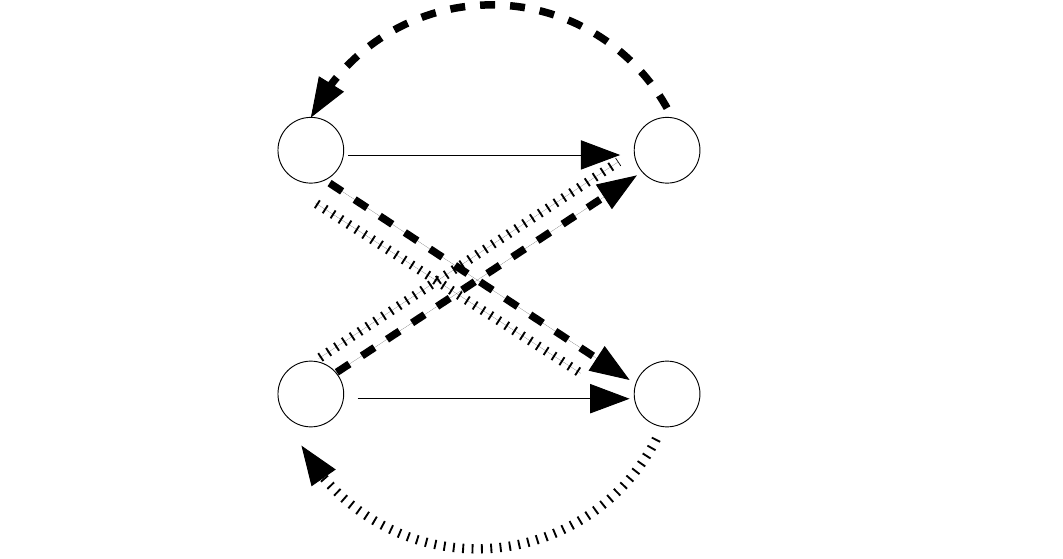_t}}}
\caption{Alternate paths, which improve rates in the strong
  interference regime.}
\end{center}
\end{figure}

{\bf Gain due to alternate independent path}: In
Fig.~\ref{fig:altpath}, single direct-link feedback enables an
alternate independent path for messages to travel from $\mathsf{T}_2$
to $\mathsf{D}_2$. The feedback link between $\mathsf{D}_1$ and
$\mathsf{T}_1$, in conjunction with the interfering links between
$\mathsf{T}_2$-$\mathsf{D}_1$ and $\mathsf{T}_1$-$\mathsf{D}_2$ forms
the alternate path. Since the interfering links are stronger than the
direct link and feedback is of infinite capacity, the rate at which
$\mathsf{T}_2$-$\mathsf{D}_2$ can communicate is higher than the rate
possible without feedback. Note that, only
$\mathsf{T}_2$-$\mathsf{D}_2$ pair has an alternate independent path,
and therefore only the rate of $\mathsf{T}_2$-$\mathsf{D}_2$
increases. On the other hand, in the example shown in
Fig.~\ref{fig:altpath2}, with two direct feedback links, both
transmitter-receiver pairs have alternate independent
paths. Consequently rates of both source-destination pairs can be
boosted. Therefore, the capacity region achievable with the feedback model
with both direct feedback links is larger than the capacity region
achievable with the feedback model with only one direct feedback link,
i.e., $\mathcal{C}^{(1000)} \subset \mathcal{C}^{(1001)}$.

A key commonality in the feedback models shown in
Fig.~\ref{fig:altpath} and Fig.~\ref{fig:altpath2} is that the
alternate independent path in both feedback models necessarily
contains the pair of interfering links, ${\sf T}_1$-${\sf D}_2$ and
${\sf T}_2$-${\sf D}_1$ as a resource that is intelligently used to
boost the rate. The increase in the sum rate is limited by the capacity of
the shared resource, i.e.~the capacity of the interference links. Thus,
whether there is single direct feedback link or two direct feedback
links, the same gain in the sum-rate is possible, thus the sum-capacity of
all feedback models with at least one direct feedback link are
identical, i.e., $C_{\rm sum}^{(1000)} = C_{\rm sum}^{(1\mathsf{xxx})}
= C_{\rm sum}^{(1111)}$.

{\bf Cross link feedback creates no alternate path}: Any model of
feedback, which has only cross feedback links, does not result in any
alternate independent path for messages to travel from the source to
its destination. Consequently, no improvement in the individual rate
or in the rate region is observed. Thus, the capacity region with or
without cross link feedback are the same in the strong interference
regime, i.e. $\mathcal{C}^{(0110)} \equiv
\mathcal{C}^{(0000)}$. As the cross links do not bring in any gains, in the strong interference regime, the capacity
region of the feedback model with all four feedback links is the same
as the capacity region of the feedback model with only two direct
feedback links, i.e. $\mathcal{C}^{(1\mathsf{xx}1)} \equiv
\mathcal{C}^{(1111)}$.

\section{Feedback models: Deterministic channels}\label{SecDC}

In this section, we first present Theorem~\ref{th:detcapreg} and
Theorem~\ref{th:detsumcap}, which respectively state the capacity region and the
sum-capacity of all 9 canonical feedback models for the linear
deterministic channel. In Section~\ref{subsec:obound}, we provide
outer-bounds on the capacity region and the sum-capacity in
Lemma~\ref{lem:cutset} \cite{prabhakaran-sc}, Lemma~\ref{th:sumcap}
\cite{asahai1} and Lemma~\ref{th:sumcap2}. In
Sections~\ref{subsec:atomic}, \ref{subsec:capreg} and
\ref{subsec:sumcap}, we show the achievability of the capacity region
and the sum-capacity of all 9 feedback models.

\begin{theorem}\label{th:detcapreg}
  The capacity regions of the two-user symmetric deterministic
  interference channel for all the 9 canonical feedback models are
  given in Table~\ref{table:det_capreg}.
  \begin{table}[ht]
    \centering
    \caption{Capacity regions for all 9 canonical feedback models}
    \begin{tabular}{|c|l|}
      \hline & \\
      Feedback Models & { Capacity Region} \\
      & \\ \hline & \\
      & $R_1 \leq \max(n,m)$ \\
      $(1\mathsf{x} \mathsf{x}1)$ &  $R_2 \leq \max(n,m)$ \\
       & $R_1 + R_2 \leq (n - m)^+ + \max(n,m)$ \\ & \\ \hline & \\
      & $R_1 \leq n $ \\
      $(1100), (1110)$ & $R_2 \leq \max(n,m)$ \\
      $(1010)$ & $R_1 + R_2 \leq (n-m)^+ +\max(n,m)$ \\ & \\ \hline & \\
       & $R_1 \leq n$ \\
      $(0110)$, $(0010)$ & $R_2 \leq n$ \\
      & $R_1 + R_2 \leq (n-m)^+ + \max(n,m)$ \\ &  \\ \hline & \\
      & $R_1 \leq n$ \\
      $(1000)$ & $R_2 \leq \max(n,m)$\\
      & $R_1 + R_2 \leq  (n-m)^+ +\max(n,m)$\\
      & $2R_1 + R_2 \leq (n-m)^+ + \max(n,m) + \max(n - m , m)$ \\ & \\
      \hline
    \end{tabular}\label{table:det_capreg}
  \end{table}
\end{theorem}

\begin{theorem}\label{th:detsumcap}
  The sum-capacity of the two-user symmetric deterministic interference channel for all the 9 canonical feedback models is given in Table~\ref{table:det_sumcap}.
  \begin{table}[ht]
  \caption{Deterministic sum-capacity for all 9 canonical feedback
    models} \centering
\begin{tabular}{|c|c|}
\hline & \\ Feedback Models & Sum-capacity \\ & \\ \hline &
\\ $(1\mathsf{xxx})$ & $ (n-m)^+ + \max(n,m)$ \\ & \\ \hline &
\\ $(0110), (0010)$ & $\min\{(n-m)^+ + \max(n,m), 2n\}$ \\ & \\ \hline
\end{tabular} \label{table:det_sumcap}
\end{table}
\end{theorem}

\subsection{Outer Bounds}\label{subsec:obound}

Feedback in interference channels is a special case of source
cooperation. Thus the cut-set bounds on the interference channel with
source cooperation apply to interference channels with feedback as
well. In this subsection, along with the cut-set bound for
interference channels with feedback, we describe two new outer-bounds
for different feedback models.

\begin{lemma} [\cite{coverbook,prabhakaran-sc}] \label{lem:cutset}
  The cut-set and no-interference bound for different feedback combinations is given by
\begin{eqnarray}
\label{cutset1} R_1 & \leq & \max(n,c_1)  \\
\label{cutset2} R_2 & \leq & \max(n,c_2),
\end{eqnarray}
where
\begin{equation}
     c_1 = \begin{cases}
       0 & \text{if ${\sf T}_2$ receives no direct-link feedback} \\
       m & \text{otherwise}
     \end{cases}
\end{equation}
and
\begin{equation}
     c_2 = \begin{cases}
       0 & \text{if ${\sf T}_1$ receives no direct-link feedback} \\
       m & \text{otherwise}
     \end{cases}.
\end{equation}
\end{lemma}

Next, we present an outer-bound on the sum-capacity of the feedback
model $(1111)$ we derived in~\cite{asahai1}. Concurrent to
\cite{asahai1}, the authors in \cite{tunout} derive an outer-bound for
the generalized feedback model. The authors in \cite{prabhakaran-sc} also derive
outer bounds for interference channels with source cooperation, which
can be particularized for the linear symmetric deterministic
interference channel to obtain the same result.

\begin{lemma} [\cite{tunout,prabhakaran-sc,asahai1}]\label{th:sumcap}  The sum-capacity of the feedback model $(1111)$, $\mathcal{C}_{\rm sum}^{(1111)}$, is outer bounded by
\begin{equation}\label{boundsumcap}
R_1 + R_2 \leq (n-m)^+ + \max(n,m).
\end{equation}
\end{lemma}
\begin{remark} Since none of the feedback models can have a sum-capacity
 larger than the sum-capacity for the $(1111)$ feedback model, (\ref{boundsumcap}) is an
 outer-bound on the sum-capacity of all feedback models.
\end{remark}

\begin{lemma} \label{th:sumcap2} The capacity region of the two-user symmetric deterministic interference channel with feedback state $(1000)$ is outer bounded by
\begin{equation} \label{eqsumcap2}
2R_1 + R_2 \leq (n -m)^+ + \max(n,m) + \max(m, n - m).
\end{equation}
\end{lemma}
\begin{proof}
The proof is provided in Appendix \ref{apd:2}.
\end{proof}
\begin{remark}
We note that \eqref{eqsumcap2} is identical to \eqref{eq:th12r1r2}, i.e., the bound on $2R_1 + R_2$, when there is
no feedback.
\end{remark}

\subsection{Achievability for Two Atomic Feedback Models}
\label{subsec:atomic}
To show the achievability of the capacity region of all 9 canonical
cases of feedback, we first show an achievable strategy for the single
direct-link feedback model, which is based on Han-Kobayashi type
message splitting~\cite{han}. Then, to show the achievability of the
single cross-link feedback model, we present Lemma~\ref{ThCrslnkD},
which allows us to connect the achievable rate region of the single
direct-link feedback model with the achievable rate-region of the
single cross-link feedback model. To complete the achievability of the
single cross-link feedback model, we show another achievable strategy,
which is based on block-Markov and dirty paper encoding and
decoding. Finally, using the achievability for the single direct-link
feedback and single cross-link feedback models, we show the
achievability of the capacity region of all 9 canonical feedback
models.

To show the achievability of the capacity region of single direct-link
and single cross-link feedback models, we establish the achievability
of the corner points formed by the intersection of the outer-bounds given
by \eqref{cutset1}, \eqref{cutset2}, \eqref{boundsumcap} and
\eqref{eqsumcap2}. Since the capacity regions are formed by the
intersection of hyper-planes, the capacity regions are
convex polygons. The achievability of a convex polygon is proved, if
the non-trivial corner points of the convex polygon are shown to be
achievable. We define the following points
\begin{eqnarray}
  \mathcal{K}_{\rm A} = \text{\{$(R_1,R_2)$: \eqref{cutset1} and \eqref{eqsumcap2} hold with equality simultaneously\}},\nonumber \\
  \mathcal{K}_{\rm B} = \text{\{$(R_1,R_2)$: \eqref{cutset1} and \eqref{boundsumcap} hold with equality simultaneously\}}, \nonumber \\
  \mathcal{K}_{\rm C} = \text{\{$({R}_1, {R}_2)$: \eqref{boundsumcap} and \eqref{eqsumcap2} hold with equality simultaneously\}} \nonumber, \\
  \mathcal{K}_{\rm D} = \text{\{$(R_1,R_2)$: \eqref{cutset2} and \eqref{boundsumcap} hold with equality simultaneously\}} \nonumber, \\
  \mathcal{K}_{\rm E} = \text{\{$(R_1,R_2)$: \eqref{cutset1} and \eqref{cutset2} hold with equality simultaneously\}}. \label{eq:definecorner}\\
\end{eqnarray}



\subsubsection{Achievability for the $(1000)$ Feedback Model}
\label{subsec:(1000)}
The outer-bounds for the $(1000)$ feedback model are given by
\eqref{cutset1}, \eqref{cutset2}, \eqref{boundsumcap} and
\eqref{eqsumcap2}. It is easy to verify that in the weak interference
regime, among the corner points, the corner points $\mathcal{K}_{\rm
  A}$, $\mathcal{K}_{\rm C}$ and $\mathcal{K}_{\rm D}$ form the tightest
outer bound, while in the strong interference regime, among the corner
points, $\mathcal{K}_{\rm B}$ and $\mathcal{K}_{\rm D}$ describe the
tightest outer bound. The achievability is as follows:

 \emph{Encoding:} The messages to be transmitted at both the transmitters are split into common and private
 parts. The common and private messages transmitted from the $u^{\rm
   th}$ transmitter, ${\sf T}_u$, after encoding as channel inputs are
 denoted as $X_{ui,c}$ and $X_{ui,p}$. The corresponding rates are
 $R_{uc}$ and $R_{up}$. The common message generated at ${\sf T}_2$,
 $X_{2i-1,c}$, is learned by ${\sf T}_1$ through feedback before the
 $i^{\rm th}$ block of transmission and re-transmitted by ${\sf T}_1$
 in the $i^{\rm th}$ block. By re-transmitting ${\sf T}_2$'s common
 message, ${\sf T}_1$ performs a relaying action. The achievable rates
 are given by $R_1 =R_{1c} +R_{1p}$ and $R_2 =R_{2c} +R_{2p}$. The
 encoding runs for $B \to \infty$ blocks.


\emph{Decoding:} To ensure reliable decoding, ${\sf T}_1$ remains
silent in the first block and ${\sf T}_2$ remains silent in the last
block. At ${\sf D}_1$, backward decoding is applied, where the common
message of ${\sf T}_2$ is decoded starting $B^{\rm th}$ block. Thus,
at ${\sf D}_1$, before the $i^{\rm th}$ block is decoded $X_{2i,c}$ is
known. Then $X_{2i,c}$ is subtracted from the received message
$Y_{1i}$, assisting in decoding $X_{1i,c}$, $X_{2i-1,c}$ and
$X_{1i,p}$. At ${\sf D}_2$, since forward decoding is applied,
decoding in the $(i-1)^{\rm th}$ block yields $X_{2i-1,c}$ which is
then subtracted from the received message $Y_{2i}$ to decode
$X_{1i,c}$, $X_{2i,c}$ and $X_{2i,p}$. In Fig.~\ref{fig:ach35bf} two
intermediate blocks of received messages at the two receivers are
shown.


\begin{figure}[t]
\centering
\subfigure{\label{fig:d1}
\resizebox{2.3in}{!}{\input{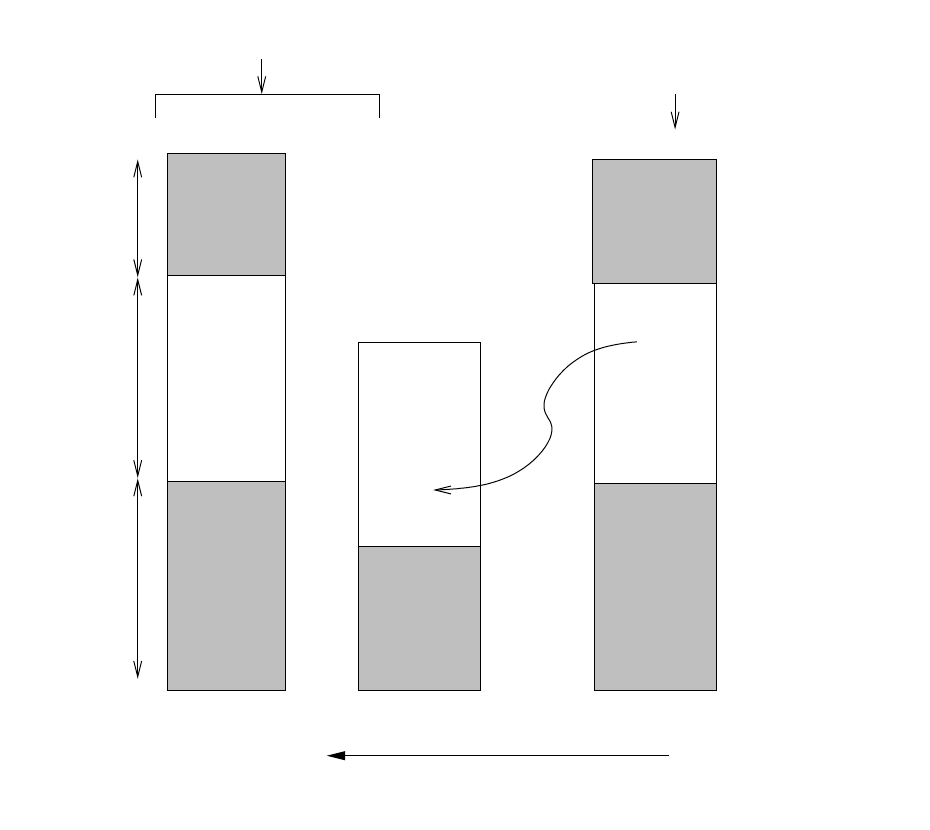_t}}} \hspace{2cm}
\subfigure{\label{fig:d2}
\resizebox{2.0in}{!}{\input{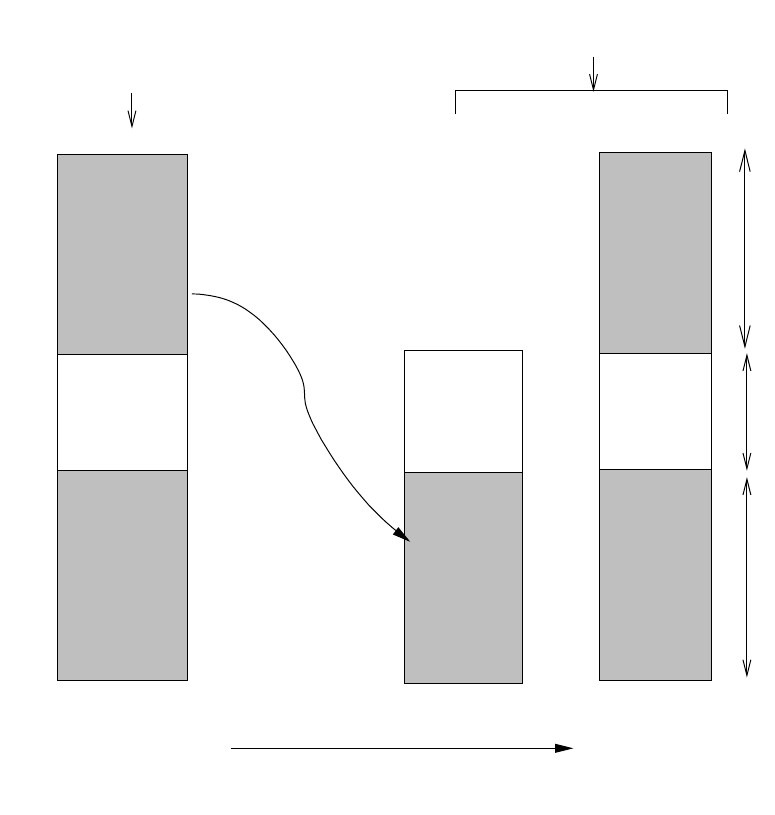_t}}}
   \caption {Achievability of $(R_1,R_2)= {(m, 2n - 2m)}$ with a single direct-link
     feedback. It lies on the boundary of the sum-rate upper bound
     (Lemma \ref{th:sumcap}). At either of the receivers, signals
     learned after being decoded can be subtracted out to
     further decode the rest of the signals.}
   \label{fig:ach35bf}
\end{figure}

\emph{Rate allocation:} In the weak interference regime, the corner
point $\mathcal{K}_{\rm D} \equiv (n-m,n)$ is not achievable without
feedback~\cite{costa82,bresler2,etkin}. Using the achievability
described above, the rate pair $(n-m,n)$ is achievable. In the weak
interference regime, where $\frac{m}{n} \leq 1$, both transmitters
transmit $(n- m)$ bits of private message. Transmitter ${\sf T}_2$
additionally transmits $m$ bits of common message. As $B \to \infty$,
the rates
\begin{equation}
  R_1 = \underbrace{n - m}_{\rm private}, \text{ } R_2 = \underbrace{n - m}_{\rm private} + \underbrace{m}_{\rm common} = n
\end{equation}
are achievable. The corner point $\mathcal{K}_{\rm C} \equiv (m, 2n
-2m)$ is achievable without any feedback except when $\frac{1}{2} \leq
\frac{m}{n} \leq \frac{2}{3}$. When $\frac{1}{2} \leq \frac{m}{n} \leq
\frac{2}{3}$, in order to achieve the rate-pair $(m, 2n - 2m)$, the
private and common message rates are set as
\begin{equation}
R_1 = \underbrace{n - m}_{\rm private} + \underbrace{2m - n}_{\rm
  common} = m,\text{ } R_2 = \underbrace{n-m}_{\rm private} + \underbrace{n -
  m}_{\rm common}  = 2n -2m.
\end{equation}
The corner point $\mathcal{K}_{\rm A}$ is achievable without any
feedback \cite{bresler2,etkin,costa82}.

In the strong interference regime, the corner point $\mathcal{K}_{\rm
  D} \equiv (0,m)$ is not achievable without feedback. However, with
direct-link feedback, the $\mathsf{T}_1$ - $\mathsf{D}_1$ pair along
with the feedback link, can be used as virtual relay node. More
precisely, setting the rates
\begin{equation}
  R_1 = 0, \text{ } R_2 = \underbrace{m}_{\rm common}
\end{equation}
as $B \to \infty$, the rate pair $(0,m)$ bits per block can be
achieved. The other non-trivial corner point $\mathcal{K}_{\rm B}
\equiv (n, m-n)$ is not achievable without feedback when $m >
2n$. Using feedback, when $m > 2n$, by setting rates
\begin{equation}
  R_1 = \underbrace{n}_{\rm common}, \text{ } R_2 =
  \underbrace{m - n}_{\rm common}.
\end{equation}
the desirable rate pair is achievable as $B \to \infty$. The detailed rate
allocation strategy is described in Appendix~\ref{apd_a}, which shows the
achievability of the capacity region of the $(1000)$ feedback model shown
in Table~\ref{table:det_capreg}. \\

\subsubsection{Relating $\mathcal{C}^{(1000)}$ and $\mathcal{C}^{(0010)}$}
In order to re-use the achievability for the $(1000)$ feedback model
described above, in feedback models, which do not have a direct-link
feedback, we show that the capacity regions satisfy $\mathcal{C}^{(1000)} \subseteq \mathcal{C}^{(0010)}$.

\begin{lemma}\label{ThCrslnkD}
The capacity region of the single cross-link feedback and single
direct-link feedback, for $n \geq m$, are related as
\begin{eqnarray}
\mathcal{C}^{(1000)} &\subseteq
&\mathcal{C}^{(0010)} \label{eq:crslnk_d1} \\ \mathcal{C}^{(0001)}
&\subseteq &\mathcal{C}^{(0100)}. \label{eq_crslnk_d2}
\end{eqnarray}
\end{lemma}
\begin{proof}
Due to the symmetry of the channel, it is sufficient to prove only one of
the above inequalities. We prove (\ref{eq:crslnk_d1}). For the single
direct-link feedback model, $(1000)$, the encoding is constrained such
that
\begin{equation} \label{enc_direct}
X_{1i} = f_{1i} (W_1, Y_1^{i-1}), \text{ }X_{2i} = f_{2i} (W_1).
\end{equation}
In the cross link feedback model
\begin{equation} \label{enc_crs}
X_{1i} = g_{1i} (W_1, Y_2^{i-1}), X_{2i} = g_{2i} (W_1).
\end{equation}
Here, $Y_2^{i-1} = X_2^{i-1} \oplus V_1^{i-1}$, where $V_{1i} =
\mathbf{S}^{q-m}X_{1i}$ is the interfering part of the transmitted
message from $\mathsf{T_1}$. Since $X_1^{i-1}$ is known to
$\mathsf{T_1}$ before the $i^\mathrm{th}$ block of encoding,
$V_1^{i-1}$, which is a subset of $X_1^{i-1}$ is also known to
$\mathsf{T}_1$. With the cross-link feedback, since $\mathsf{T}_1$ has
access to $Y_2^{i-1}$ before the $i^\mathrm{th}$ block of encoding, it
can obtain $X_2^{i-1}$. Now, $V_2^{i-1}$ is a subset of $X_2^{i-1}$
(since $m \leq n$), and $Y_1^{i-1} = X_1^{i-1} \oplus
V_2^{i-1}$. Thus, knowing $Y_2^{i-1}$, $\mathsf{T_1}$ can form
$Y_1^{i-1} = (X_{1}^{i-1} \oplus \mathbf{S}^{n-m}(Y_2^{i-1} \oplus
\mathbf{S}^{n-m}X_1^{i-1}))$. Thus, for every message pair $(W_1, W_2)$,
and encoding function $(f_{1i}, f_{2i})$, choosing $g_{1i} \equiv
f_{1i}$ and $g_{2i} \equiv f_{2i}$, the encoding operations defined in (\ref{enc_direct}) and
(\ref{enc_crs}) can be made identical. Identical decoding naturally
follows. Therefore,
\begin{equation}
\mathcal{C}^{(1000)} \subseteq \mathcal{C}^{(0010)}.
\end{equation}
\end{proof}

\begin{remark}
The result in Lemma~\ref{ThCrslnkD} is based on the simple observation
that when $n \geq m$ in the $(0010)$ feedback model, the transmitter
$\mathsf{T}_1$ receives a ``better'' copy of the message encoded at
$\mathsf{T}_2$ than in the $(1000)$ feedback model. This is because,
in the $(0010)$ feedback model, the feedback is received from
$\mathsf{D}_2$, while in the $(1000)$ model, feedback is received from
$\mathsf{D}_1$. At $\mathsf{D}_2$ and $\mathsf{D}_1$, the received
signals are linear combinations of $X_{1i}$ and $X_{2i}$. At
$\mathsf{T}_1$, $X_{1i}$ is known. As $n \geq m$, the bits of $X_{2i}$
that can be decoded from the received message at $\mathsf{D}_2$ form a
superset of the bits of $X_{2i}$ that can be decoded from the received message at
$\mathsf{D}_1$.
\end{remark}

\begin{corollary} \label{corol} When $n \geq m$,
the capacity regions $\mathcal{C}^{(1001)}$, $\mathcal{C}^{(1100)}$,
$\mathcal{C}^{(0110)}$ are related as follows
\begin{eqnarray}\label{eqcorollary}
\mathcal{C}^{(1001)} & \subseteq & \mathcal{C}^{(1100)} \label{eq:equiv1001_1100} \\
\mathcal{C}^{(1001)} & \subseteq &  \mathcal{C}^{(0110)}. \label{eq:equiv1001_0110}
\end{eqnarray}
\end{corollary}
\begin{proof}
In the $(1001)$ feedback model, before the $i^{\rm th}$ block of encoding,
${\sf T}_1$ and ${\sf T}_2$ have access to $Y_1^{i-1}$ and $Y_2^{i-1}$
through feedback. In the $(1100)$ feedback model, ${\sf T}_1$ has access
to $Y_1^{i-1}$ and ${\sf T}_2$ has also access to $Y_1^{i-1}$ before the
$i^{\rm th}$ block of encoding. As shown in Lemma~\ref{ThCrslnkD}, in
the weak interference regime, using $Y_1^{i-1}$, ${\sf T}_2$ can
construct $Y_2^{i-1}$. Therefore, the achievable rate-region of the
$(1100)$ feedback model is at least as large as that of the $(1001)$
feedback model. Thus, $\mathcal{C}^{(1001)} \subseteq
\mathcal{C}^{(1100)}$. Similar proof follows for
\eqref{eq:equiv1001_0110}.
\end{proof}\text{}\\

\subsubsection{Achievability for the $(0010)$ Feedback Model}
The outer bound on the $(0010)$ feedback model is characterized by the
corner points $\mathcal{K}_{\rm B}$ and $\mathcal{K}_{\rm D}$. From
Theorem~\ref{th:detcapreg}, we know that in the weak interference
regime, the corner point $\mathcal{K}_{\rm B} \equiv (n, n -m)$ is
outside the the boundary of $\mathcal{C}^{(1000)}$. Thus the
achievability of the $(1000)$ feedback model and Lemma~\ref{ThCrslnkD}
is not sufficient to show the achievability of the rate pair $(n,n-m)$
for the $(0010)$ feedback model. Therefore, we show a new
achievability based on block-Markov encoding and dirty paper
encoding/decoding to achieve the rate-pair $(n,n-m)$ for the $(0010)$
feedback model.

\emph{Encoding:} The encoding strategy is shown in
Table~\ref{table:encoding_dp}. At ${\sf T}_1$, there is no splitting
of messages. At ${\sf T}_2$, in the $i^{\rm th}$ block, the message is
split into two parts $X_{2i,d}$ and $X_{2i,nd}$. Also, in the $i^{\rm
  th}$ block $X_{2i-1,d}$ is transmitted by ${\sf T}_2$ such that it is
decodable at ${\sf D}_2$ right-away. The message $X_{2i-1,d}$ is known
at ${\sf T}_1$ before the $i^{\rm th}$ block of transmission due to the
cross-link feedback. Therefore, ${\sf T}_1$ can employ a dirty paper
coding like strategy to allow its receiver to decode in the presence of
interference $X_{2i-1,d}$ as shown in
Table~\ref{table:encoding_dp}.

\begin{table}[t]
\centering
  \caption{Encoding of messages in the weak interference regime for the $(0010)$ feedback model}
  \begin{tabular}{ | c | c | c | c |}
    \hline
    & Block 1 & Block $i$ & Block $B$   \\ \hline
    Message $X_{1i}$ at ${\sf T}_1$ & $[X_{11}]$ & $[X_{1i} \oplus \mathbf{S}^{n-m}X_{2i-1}]$ &  $\mathbf{0}_n^T$  \\ \hline
    Message $X_{2i}$ at ${\sf T}_2$ & $[\mathbf{0}_{m}^T, X_{21,nd}^T, X_{21,d}^T]^T$ & $[X_{2i-1,d}^T, \mathbf{0}_p^T, X_{2i,nd}^T, X_{2i,d}^T]^T$  & $[X_{2B-1,d}^T, \mathbf{0}_p^T, X_{2B,nd}^T, X_{2B,d}^T]^T$ \\
    \hline
  \end{tabular}
      \label{table:encoding_dp}
\end{table}

\emph{Decoding:} The messages $X_{1i}$ are decodable at
${\sf D}_1$ and $X_{2i-1,d}$ and $X_{2i,nd}$ are decodable at ${\sf D}_2$
in the $i^{\rm th}$ block as long as the cardinality of the messages are
\begin{eqnarray}
  |X_{2i -1,d}|  = \min(n-m,m),\text{ } |X_{2i,nd}|  =  \max(n - 2m,0),
\end{eqnarray}
and $p = \max(2m - n, 0)$. As $B \to \infty$, the rate-pair $(n, n -
m)$ is achievable, i.e., the corner point $\mathcal{K}_{\rm B}$ is
achievable. In the weak interference regime, from
Lemma~\ref{ThCrslnkD} and the achievability of the $(1000)$ feedback
model, we know that in the weak interference regime, the corner point
$\mathcal{K}_{\rm D} \equiv (n -m,n)$ is achievable with the $(0010)$
feedback model.

In the strong interference regime, from Theorem~\ref{lem:det_region},
and the outer-bounds \eqref{cutset1}, \eqref{cutset2} and
\eqref{boundsumcap}, we know that $\mathcal{C}^{(0010)} \equiv
\mathcal{C}^{(0000)}$. Thus, the capacity region characterization of
the $(0010)$ feedback model is complete.

\subsection{Capacity Regions of the rest of the Feedback Models}
\label{subsec:capreg}
For each feedback model, the capacity region is shown by the
achievability of the subset of corner points \eqref{eq:definecorner}
which form the tightest outer bound, among all the corner points. \\

\subsubsection{$\mathbf{(1001)}, \mathbf{(1101)}$ and
  $\mathbf{(1111)}$ Feedback Models} The capacity region of the
$(1001)$ feedback model is given in Theorem~\ref{th:psumcap}. It can
also be derived using the outer-bounds given by \eqref{boundsumcap},
\eqref{cutset1} and \eqref{cutset2}, and showing the achievability by
treating the $(1001)$ feedback model as a combination of the $(1000)$
and $(0001)$ feedback models. The outer-bound of the capacity region
$\mathcal{C}^{(1001)}$ can be sufficiently characterized by the corner
points $\mathcal{K}_{\rm B}$ and $\mathcal{K}_{\rm D}$. We know that
$\mathcal{K}_{\rm D}$ is achievable with $(1000)$ feedback and thus by
symmetry $\mathcal{K}_{\rm B}$ is achievable with $(0001)$
feedback. Since $\mathcal{C}^{(1000)} \subseteq \mathcal{C}^{(1001)}$ and $\mathcal{C}^{(0001)} \subseteq \mathcal{C}^{(1001)}$, we conclude that $\mathcal{K}_{\rm B}$ and
$\mathcal{K}_{\rm D}$ are achievable with the $(1001)$ feedback model.

The corner points $\mathcal{K}_{\rm B}$ and $\mathcal{K}_{\rm D}$ also
sufficiently characterize the outer-bound of the capacity region of
the $(1111)$ feedback model. As $\mathcal{K}_{\rm B}$ and $\mathcal{K}_{\rm D}$ are both achievable with the $(1001)$
feedback model,
\begin{equation}
\mathcal{C}^{(1001)}\label{eq1001}
\equiv \mathcal{C}^{(1111)}.
\end{equation}

\begin{figure}[t]
\subfigure[{$\frac{m}{n} \in \left[0, \frac{1}{2} \right)$, here $\frac{m}{n}= \frac{1}{3}$}]{\label{fig:gull1} \resizebox{2.2in}{!}{\input{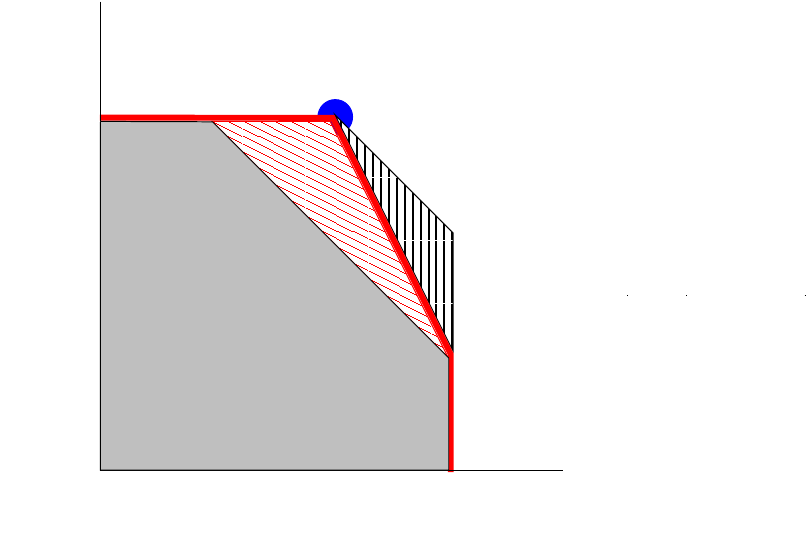_t}}}
\subfigure[{$ \frac{m}{n} \in \left[\frac{1}{2}, \frac{2}{3}\right)$, here $\frac{m}{n} = \frac{3}{5}$}]{\label{fig:gull2}\resizebox{1.8in}{!}{\input{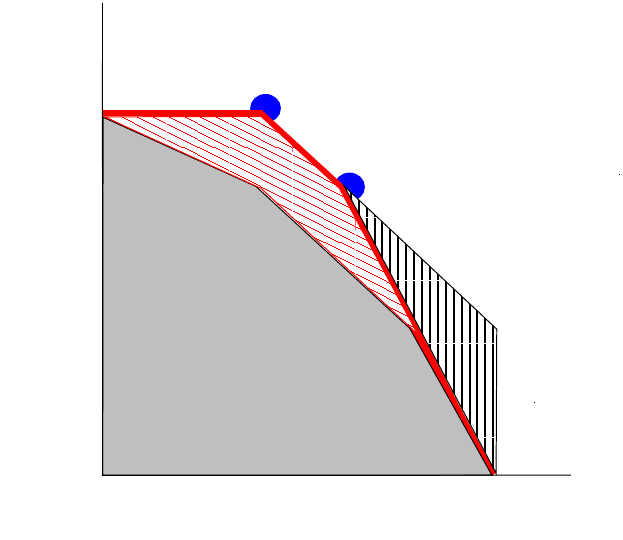_t}}}
\subfigure[{$ \frac{m}{n} \in \left[\frac{2}{3}, 1\right)$, here $\frac{m}{n} = \frac{4}{5}$}]{\label{fig:gull3} \resizebox{1.65in}{!}{\input{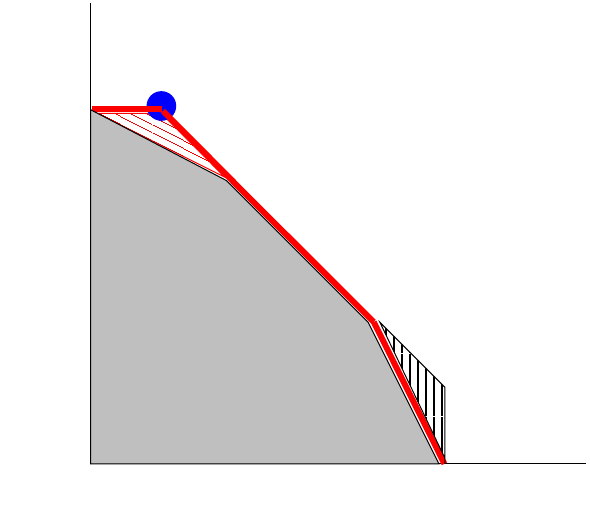_t}}}
\subfigure[{$\frac{m}{n} \in (1,2]$, here~$\frac{m}{n} = \frac{4}{3}$}]{\label{fig:gull4} \resizebox{2.0in}{!}{\input{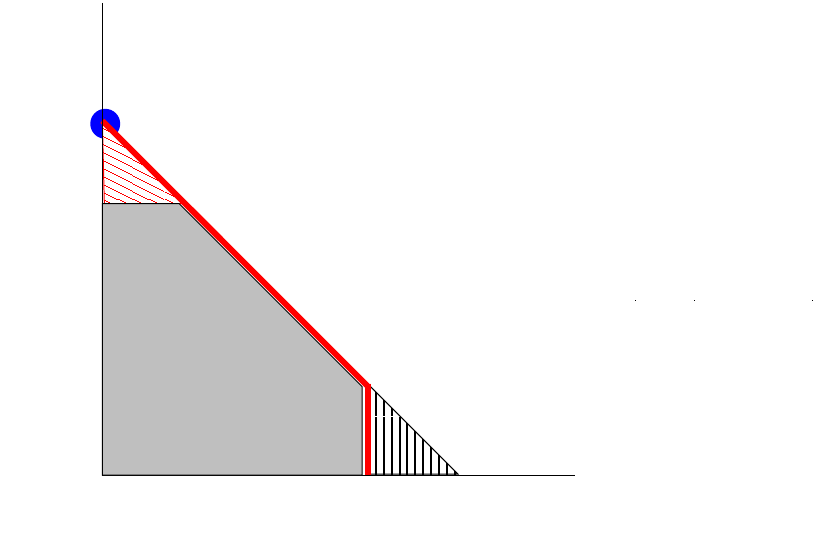_t}}}
\subfigure[{$\frac{m}{n} \in (2, \infty)$, here $\frac{m}{n} = 3$}]{\label{fig:gull5} \resizebox{4.00in}{!}{\input{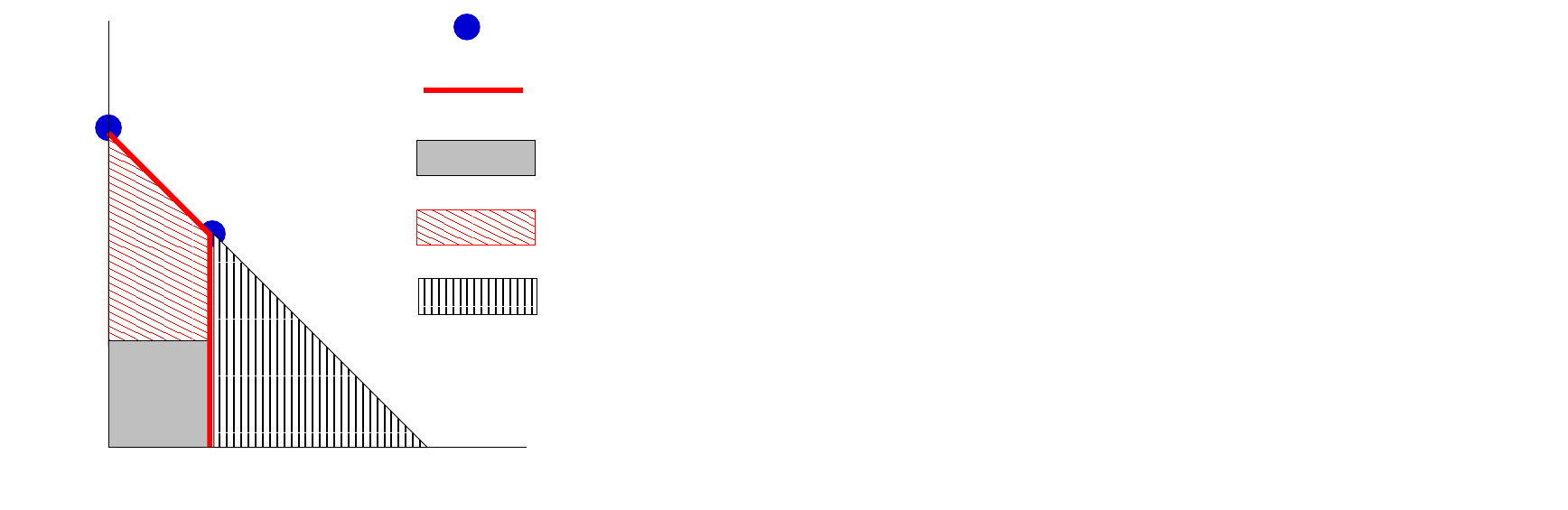_t}}}
   \caption {Typical \emph{normalized} capacity region poly-topes of
     the (1000) feedback model in different regimes of interference. The
     large dots represent the corner points \emph{not}
     achievable without the (1000) feedback model. The figure also
     shows the capacity region of the interference channel with no
     feedback and with the (1111) feedback model.}
   \label{fig:capreg}
\end{figure}

To compare the $(1\mathsf{x}\mathsf{x}1)$ feedback model with the
$(1000)$ feedback model, we note that the point
$\mathcal{K}_{\rm B}$ is not achievable with the latter. This is
because there is no virtual relay path available between
$\mathsf{T}_1$ and $\mathsf{D}_1$. In the $(1\mathsf{x}\mathsf{x}1)$
feedback model, a virtual relay route between $\mathsf{T}_1$ and
$\mathsf{D}_1$ is available and therefore $\mathcal{K}_{\rm B}$ is
achievable. Hence, $\mathcal{C}^{(1\mathsf{x} \mathsf{x} 1)} \supset
\mathcal{C}^{(1000)}$, and the capacity region of the $(1\mathsf{x}
\mathsf{x}1)$ feedback model is strictly larger than the capacity
region of the (1000) feedback model. \\

\subsubsection{$\mathbf{(1100)}, \mathbf{(1110)}$ and $\mathbf{(1010)}$ Feedback Models} We know that
\begin{equation} \label{eq1100}
\mathcal{C}^{(1100)} \subseteq \mathcal{C}^{(1110)} \subseteq
\mathcal{C}^{(1111)}.
\end{equation}
In the weak interference regime, where $n \geq m$, we know from
Corollary~\ref{corol} that $\mathcal{C}^{(1100)} \supseteq
\mathcal{C}^{(1001)}$. Sandwiching the capacity regions
$\mathcal{C}^{(1100)}$ and $\mathcal{C}^{(1110)}$ in between
$\mathcal{C}^{(1001)}$ and $\mathcal{C}^{(1111)}$, from
(\ref{eq1001}), (\ref{eq1100}) and (\ref{eqcorollary}), we conclude
that in the weak interference regime
\begin{equation}
\mathcal{C}^{(1001)} \equiv \mathcal{C}^{(1100)} \equiv \mathcal{C}^{(1110)} \equiv
\mathcal{C}^{(1111)}.
\end{equation}
In the strong interference regime, where $n < m$, the corner points
$\mathcal{K}_{\rm B}$ and $\mathcal{K}_{\rm D}$ sufficiently
characterize $\mathcal{C}^{(1100)}$. We know that when $n < m$, $\mathcal{K}_{\rm B}$ and $\mathcal{K}_{\rm D}$ are also achievable
with $(1000)$ feedback. Since
$\mathcal{C}^{(1110)} \supseteq \mathcal{C}^{(1100)} \supseteq \mathcal{C}^{(1000)}$, we can conclude
that in the strong interference regime
\begin{equation}
\mathcal{C}^{(1110)} \equiv \mathcal{C}^{(1100)} \equiv \mathcal{C}^{(1000)}.
\label{eq:(1010)}
\end{equation}

For the $(1010)$ feedback model, the outer bound of the capacity region in
the weak interference regime is characterized by $\mathcal{K}_{\rm B}$
and $\mathcal{K}_{\rm D}$. The corner point $\mathcal{K}_{\rm D}$ is shown to be achievable with $(1000)$ and corner point
$\mathcal{K}_{\rm B}$ is achievable with $(0010)$ feedback
models. Thus, $(1010)$ can achieve both corner points $\mathcal{K}_{\rm
  B}$ and $\mathcal{K}_{\rm D}$. In the weak interference regime,
since $\mathcal{K}_{\rm B}$ and $\mathcal{K}_{\rm D}$ also
characterize the outer-bound of the $(1111)$ feedback model, in
the weak interference regime we have
\begin{equation}
\mathcal{C}^{(1010)} \equiv \mathcal{C}^{(1111)}.
\end{equation}
As $\mathcal{C}^{(1010)}\subseteq\mathcal{C}^{(1110)}$, from
\eqref{eq:(1010)}, we conclude that in the strong interference
regime
\begin{equation}
\mathcal{C}^{(1010)} \equiv \mathcal{C}^{(1110)} \equiv \mathcal{C}^{(1100)} \equiv \mathcal{C}^{(1000)}.
\end{equation} \\

\subsubsection{$\mathbf{(0110)}$ Feedback Models}
The corner points $\mathcal{K}_{\rm B}$ and $\mathcal{K}_{\rm D}$
characterize the outer-bound for the $(0110)$ feedback model as well
as $(0010)$, and since they are achievable with $(0010)$, they are
also achievable with the $(0110)$ feedback model. Thus,
\begin{equation}
\mathcal{C}^{(0010)} \equiv \mathcal{C}^{(0110)}.
\label{eq:(0110)}
\end{equation}
It is noteworthy that in the strong interference regime, from
Theorem~\ref{lem:det_region} and outer-bounds \eqref{cutset1},
\eqref{cutset2} and \eqref{boundsumcap}, it can easily be confirmed
that
\begin{equation}
\mathcal{C}^{(0000)} \equiv \mathcal{C}^{(0110)}. \label{eq:(0110)_2}
\end{equation}

\subsection{Sum-capacity}
\label{subsec:sumcap}
\subsubsection{Equivalence of the sum-capacity of all $\mathbf{(1\mathsf{xxx})}$ Feedback Models}
In the achievability of the $(1000)$ feedback model, we showed that in the
weak interference regime, the rate pair $(n-m,n)$ and in the strong
interference regime, the rate pair $(0,m)$ is achievable. These rate pairs $(n-m,n)$ and $(0,m)$ both lie on the
outer-bound of the sum-capacity \eqref{boundsumcap}. Thus, using the achievability of the $(1000)$ feedback
model, we can show the achievability of the rate pairs $(n-m,n)$ and
$(0,m)$ for $\mathsf{(1xxx)}$ feedback models, which proves the result
$C_{\rm sum}^{(1\mathsf{xxx})} = C_{\rm sum}^{(1000)}$.\\

\subsubsection{Sum-capacity of the $\mathbf{(0110)}$ and $\mathbf{(0010)}$ Feedback Models}
From Lemma~\ref{th:sumcap}, in the weak interference regime $(n-m,n)$
lies on the outer-bound of the $(1111)$ feedback model. Hence, it is
sum-capacity achieving for $(0010)$ as well as $(0110)$ feedback
models. From Lemma~\ref{ThCrslnkD}, in the weak interference regime
any rate pair that is achievable with the $(1000)$ feedback model should also
be achievable with the $(0010)$ feedback model. Since the rate pair
$(n-m,n)$ is achievable with $(1000)$, it is also achievable with
$(0010)$ and subsequently the $(0110)$ feedback model. Therefore, in the
weak interference regime $C_{\rm sum}^{(0010)} = C_{\rm sum}^{(0110)}
= C_{\rm sum}^{(1111)}$.

In the strong interference regime, we know from \eqref{eq:(0110)} and
\eqref{eq:(0110)_2}, that feedback does not improve the capacity
region for the $(0110)$ feedback model and therefore does not improve the
capacity region of $(0010)$ either. Thus, in the strong interference
regime $C_{\rm sum}^{(0010)} = C_{\rm sum}^{(0110)} = C_{\rm
  sum}^{(0000)}$.

\section{Feedback Models: Gaussian channel}
\label{SecGC}

In this section, the approximate Gaussian capacity regions are
derived for all 9 canonical feedback models. First, we derive two new
outer bounds for the $(1111)$ and $(1000)$ feedback models. Then, we show
an achievability based on Han-Kobayashi type message splitting for the
$(1000)$ model. We prove Lemma~\ref{th:crosslink}, which relates the
achievable rate regions of the $(0010)$ and $(1000)$
feedback models. Additionally, we propose a block-Markov and dirty
paper encoding based achievability scheme for the $(0010)$ feedback model. Finally,
using the achievability of the $(1000)$ and $(0010)$ feedback models, we
show the approximate capacity regions for all canonical feedback
models.

\subsection{Outer Bounds for the Gaussian Channel} \label{sec:outerbound_g}
Now, we present two new outer bounds and the cut-set bound for the
two-user interference channel with various feedback states.

\begin{lemma} [\cite{coverbook,prabhakaran-sc}] \label{lem:cutsetg}
 The two-user symmetric Gaussian interference channel with any one of
 the feedback models is outer bounded by
\begin{eqnarray}
 R_1 & \leq & c_1 \label{cutsetg1}  \\ R_2 & \leq &
 c_2, \label{cutsetg2}
\end{eqnarray}
where
\begin{equation}
     c_1 = \begin{cases}
       \log(1 + {\sf SNR}) & \text{if ${\sf T}_2$ receives no direct-link feedback} \\
       \log(1 + {\sf SNR + INR}) & \text{otherwise},
     \end{cases}\end{equation}\begin{equation}
     c_2 = \begin{cases}
       \log(1 + {\sf SNR}) & \text{if ${\sf T}_1$ receives no direct-link feedback} \\
       \log(1 + {\sf SNR + INR}) & \text{otherwise}.
     \end{cases}
\end{equation}
\end{lemma}

The following theorem provides an outer-bound on the sum-capacity of the
$(1111)$ feedback model.
\begin{theorem} \label{th:sum-capg}
The sum capacity of the two-user symmetric Gaussian interference channel for the
$(1111)$ feedback model is outer bounded by
\begin{equation}
\label{eq:sum-capg} R_1 + R_2 \leq \sup_{0 \leq |\rho| \leq 1}  \log \left( 1 +
\frac{(1 - |\rho|^2) \mathsf{SNR}}{1 + (1 - |\rho|^2)
  \mathsf{INR}}\right) + \log (1 + \mathsf{SNR + INR} +
2|\rho|\sqrt{\mathsf{SNR. INR}} ).
\end{equation}
\end{theorem}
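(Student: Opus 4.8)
The plan is to establish (\ref{eq:sum-capg}) by a genie-aided converse combined with Fano's inequality, translating the argument behind Theorem~\ref{ThSumCap} from the deterministic to the Gaussian setting (differential entropies in place of Shannon entropies); this is the symmetric-channel counterpart of the Suh--Tse feedback sum-rate bound \cite{suh-2009}. It is convenient to introduce the interference-plus-noise signals $S_{1}^{N}:=g_{c}X_{1}^{N}+Z_{2}^{N}$ and $S_{2}^{N}:=g_{c}X_{2}^{N}+Z_{1}^{N}$ that the unintended receiver sees from each transmitter, so that $Y_{1}^{N}=g_{d}X_{1}^{N}+S_{2}^{N}$ and $Y_{2}^{N}=g_{d}X_{2}^{N}+S_{1}^{N}$. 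Fano's inequality gives $N(R_{1}+R_{2})\le I(W_{1};Y_{1}^{N})+I(W_{2};Y_{2}^{N})+N\epsilon_{N}$ with $\epsilon_{N}\to0$, so it suffices to bound the right-hand side.

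Next I would supply a genie to decoder~$1$, namely the side information $S_{1}^{N}$ together with $W_{2}$, which can only enlarge $I(W_{1};Y_{1}^{N})$ and makes decoder~$1$'s link decompose into a ``single-user with residual interference'' channel for $X_{1}^{N}$. The delicate point, and the place where the $(1111)$ feedback must be handled with care rather than by simply conditioning on the inputs as one would without feedback, is a reconstruction fact: conditioned on \emph{both} messages, every input symbol is a deterministic function of the past noise realizations, and conditioned on $(W_{1},W_{2},S_{1}^{N},Y_{1}^{N})$ the whole input--output trajectory is determined; unrolling these relations block by block lets a full block of noise entropies $h(Z_{1}),h(Z_{2})$ telescope out of $I(W_{1};\cdot)+I(W_{2};\cdot)$. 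After this cancellation the bound collapses, up to $N\epsilon_{N}$, to a term of the form $\tfrac{1}{N}\big(h(Y_{1}^{N}\mid S_{1}^{N},W_{2})-Nh(Z_{1})\big)$ for user~1 and a ``multiple-access'' term $\tfrac{1}{N}h(Y_{2}^{N})-h(Z_{2})$; it is precisely the conditioning on $S_{1}^{N}$ in the first term (equivalently, subtracting the corresponding genie information) that yields the ratio form with denominator $1+(1-|\rho|^{2})\mathsf{INR}$ rather than the cruder $\log\!\big(1+(1-|\rho|^{2})(\mathsf{SNR}+\mathsf{INR})\big)$.

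The last step is single-letterization. Since the circularly symmetric Gaussian maximizes differential entropy for a fixed covariance, $h(Y_{2}^{N})\le\sum_{j}\log\!\big(\pi e\,\mathsf{E}|Y_{2j}|^{2}\big)$ with $\mathsf{E}|Y_{2j}|^{2}\le1+\mathsf{SNR}+\mathsf{INR}+2|\rho_{j}|\sqrt{\mathsf{SNR}\cdot\mathsf{INR}}$, where $\rho_{j}:=\mathsf{E}[X_{1j}X_{2j}^{*}]/P$ is the per-symbol input correlation; and the first term single-letterizes, via the same extremal inequality applied to the conditional covariance of $Y_{1j}$ given $(S_{1j},W_{2})$ (whose relevant entries are governed by the residual power $(1-|\rho_{j}|^{2})P$), to $\sum_{j}\log\!\big(1+\tfrac{(1-|\rho_{j}|^{2})\mathsf{SNR}}{1+(1-|\rho_{j}|^{2})\mathsf{INR}}\big)$. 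Both summands are concave in $|\rho_{j}|^{2}$, so Jensen's inequality replaces the time average by a single effective correlation and one passes to the supremum over $0\le|\rho|\le1$, giving (\ref{eq:sum-capg}).

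I expect the main obstacle to be exactly the feedback bookkeeping in the middle step: making precise the block-by-block reconstruction identities in the $(1111)$ model so that the noise-entropy block genuinely cancels, and verifying that after the cancellation the two surviving terms are precisely the ones whose single-letter forms are the logarithms in (\ref{eq:sum-capg}) --- in particular that the $1+(1-|\rho|^{2})\mathsf{INR}$ denominator appears and is not lost. The Gaussian extremal-entropy and Jensen/concavity steps are then routine.
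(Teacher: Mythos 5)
Your overall strategy is the one the paper uses (it is the symmetric-channel version of the Suh--Tse argument): Fano's inequality, a genie that hands receiver~1 the message $W_2$ plus enough side information to reconstruct transmitter~2's trajectory, cancellation of a full block of noise entropies, and then Gaussian maximum-entropy/Schur-complement single-letterization with a supremum over the input correlation $\rho$. The surviving terms $h(Y_{1i}\mid X_{2i},S_{1i})$ and $h(Y_{2i})$ and their single-letter evaluations are exactly those of the paper's Appendix~B.

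The one step that does not go through as written is the claimed collapse to $\tfrac1N\bigl(h(Y_1^N\mid S_1^N,W_2)-Nh(Z_1)\bigr)+\tfrac1N h(Y_2^N)-h(Z_2)$ when the genie is $(S_1^N,W_2)$. Carrying that decomposition out gives $I(W_1;Y_1^N,S_1^N\mid W_2)+I(W_2;Y_2^N)= h(Y_1^N\mid S_1^N,W_2)+h(Y_2^N)-N\bigl(h(Z_1)+h(Z_2)\bigr)+\bigl[h(S_1^N\mid W_2)-h(Y_2^N\mid W_2)\bigr]$, and the bracketed residual has no evident sign in the $(1111)$ model: since $X_{2i}$ is a function of $(W_2,Y_1^{i-1},Y_2^{i-1})$, the output $Y_2^N$ is not a causal deterministic function of $(S_1^N,W_2)$ alone, and transmitter~2 could in principle exploit its feedback to make $Y_2^N$ more predictable given $W_2$ than $S_1^N$ is. The fix is what the paper does: start from $I(W_1;Y_1^N\mid W_2)+I(W_2;Y_2^N)$, lower-bound $h(Y_1^N\mid W_1,W_2)$ and $h(Y_2^N\mid Y_1^N,W_2)$ by blocks of noise entropy, and convert $h(Y_1^N\mid W_2)-h(Y_2^N\mid W_2)$ into $h(Y_1^N\mid Y_2^N,W_2)-h(Y_2^N\mid Y_1^N,W_2)$ by adding and subtracting $h(Y_1^N,Y_2^N\mid W_2)$ --- equivalently, make the genie $Y_2^N$ rather than $S_1^N$, so that $h(Y_2^N\mid W_2)$ cancels exactly against the second mutual information. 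The surviving term $h(Y_1^N\mid Y_2^N,W_2)$ then single-letterizes to $\sum_i h(Y_{1i}\mid X_{2i},S_{1i})$ because $(X_2^i,S_1^i)$ is reconstructible from $(W_2,Y_1^{i-1},Y_2^{i})$, after which your Gaussian extremal-entropy and Jensen steps finish the proof as you describe.
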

\begin{proof}
 The proof details are provided in Appendix \ref{pthscg}.
\end{proof}
\begin{remark}
As the sum-capacity of the $(1111)$ feedback model is an outer bound
on the sum-capacity of all feedback models, Theorem~\ref{th:sum-capg} also applies as an outer bound on the
sum-capacity of all feedback models.
\end{remark}
Note that \eqref{eq:sum-capg} can further be upper bounded to yield
\begin{equation}\label{eq:sum-capg_norho}
  R_1 + R_2 \leq \log \left( 1 +
  \frac{\mathsf{SNR}}{1 + \mathsf{INR}}\right) + \log (1 + \mathsf{SNR + INR} +
2\sqrt{\mathsf{SNR. INR}} ).
\end{equation}

As observed in the deterministic case, the bound on the sum-capacity is not sufficient to describe the capacity region of the $(1000)$
feedback model. The following theorem is an upper bound on the rate
$2R_1 + R_2$.
\begin{theorem}\label{th:sum-capg2}
The capacity region of the two-user symmetric Gaussian interference channel with feedback state $(1000)$ is outer bounded by
\begin{eqnarray}\label{eq:sum-capg2} 2R_1 + R_2 & \leq &
  \sup_{0 \leq |\rho| \leq 1} \log \left( 1 + \frac{(1 - |\rho|^2)
    \mathsf{SNR}}{1 + (1 - |\rho|^2) \mathsf{INR}}\right) + \log (1 +
  \mathsf{SNR + INR} + 2|\rho|\sqrt{\mathsf{SNR. INR}} )  \nonumber
  \\ && + \log\left(1 + \mathsf{INR} + \frac{\mathsf{SNR} - (1 +
    |\rho|^2) \mathsf{INR} + 2 |\rho| \sqrt{\mathsf{SNR.INR} }}{1+
    \mathsf{INR}}\right) .
\end{eqnarray}
\end{theorem}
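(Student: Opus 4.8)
The plan is to establish (\ref{eq:sum-capg2}) by a genie-aided Fano-inequality argument, combining the $2R_1+R_2$ outer-bounding technique of \cite{etkin} with the feedback-aware entropy manipulations used to prove Theorem \ref{th:sum-capg}. First I would apply Fano's inequality to write $N(2R_1+R_2)-N\epsilon_N$, with $\epsilon_N\to0$, as a sum of two $R_1$ mutual-information terms and one $R_2$ mutual-information term (the feedback does not affect this step, since $\mathsf{D}_u$ decodes $W_u$ from $Y_u^N$). Next I would introduce a genie in the style of \cite{etkin}: reveal to $\mathsf{D}_1$ the interference sequence $S^N$ with $S_j=g_cX_{1j}+Z_{2j}$ that $\mathsf{T}_1$ creates at $\mathsf{D}_2$ (so that $Y_{2j}=g_dX_{2j}+S_j$), and, for the appropriate term(s), additionally reveal $W_2$ — legitimate because in the $(1000)$ model $X_2^N$ is a deterministic function of $W_2$, and because a genie only increases mutual information. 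Since $h(S^N\mid\cdot)$ and the conditional output entropies $h(Y_2^N\mid W_2)$ are the same quantity, the entropies produced by different terms will partially cancel when the three mutual informations are added.

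The heart of the proof is then an entropy computation. I would expand each mutual information into a difference of conditional differential entropies. The negative, noise-only entropies are pinned down using the strict causality of the feedback: because $X_{1j}=f_{1j}(W_1,Y_1^{j-1})$ and $X_{2j}=f_{2j}(W_2)$ depend only on strictly past observations, given $(W_1,W_2)$ the map $(Z_1^N,Z_2^N)\mapsto(Y_1^N,S^N)$ is lower-triangular with unit diagonal, so these conditional entropies equal sums of per-symbol noise entropies exactly and cancel the matching positive entropies. The surviving positive entropies are of three types — a full output entropy $h(Y_u^N)$, a genie-conditioned output entropy $h(Y_1^N\mid S^N,W_2)$, and a residual obtained by pitting $h(S^N\mid W_2)$ against $h(Y_2^N\mid W_2)$ — and I would bound each by its Gaussian (maximum-entropy) value subject to the operative second-moment constraint, substituting the relevant MMSE estimates. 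The only cross-covariance appearing is $\mathsf{E}[X_{1j}X_{2j}^\ast]$, nonzero because feedback lets $X_{1j}$ depend on $X_2$-bearing observations; I would package all of it into a single parameter $|\rho|\in[0,1]$, the normalized time-average of its magnitude, using concavity (Jensen) of the resulting $\log$ and $\log\det$ expressions so that one $\rho$ controls all three terms at once. The full output entropy gives $\log(1+\mathsf{SNR}+\mathsf{INR}+2|\rho|\sqrt{\mathsf{SNR}\cdot\mathsf{INR}})$; the genie-conditioned entropy gives $\log\!\bigl(1+\tfrac{(1-|\rho|^2)\mathsf{SNR}}{1+(1-|\rho|^2)\mathsf{INR}}\bigr)$; and the $h(S^N\mid W_2)$-versus-$h(Y_2^N\mid W_2)$ residual gives $\log\!\bigl(1+\mathsf{INR}+\tfrac{\mathsf{SNR}-(1+|\rho|^2)\mathsf{INR}+2|\rho|\sqrt{\mathsf{SNR}\cdot\mathsf{INR}}}{1+\mathsf{INR}}\bigr)$, whose numerator is the worst-case MMSE residual for $g_dX_{1j}$ given the genie and $X_{2j}$. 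Letting $N\to\infty$ and taking $\sup_{0\le|\rho|\le1}$ yields (\ref{eq:sum-capg2}).

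The main obstacle is the choice of genie together with the grouping of the three mutual-information terms. A naive split $2R_1+R_2=R_1+(R_1+R_2)$ — using Theorem \ref{th:sum-capg} for the sum and a standalone genie bound for $R_1$ — produces something strictly weaker than (\ref{eq:sum-capg2}); the three terms must genuinely interlock through a shared genie whose entropy is ``spent once and recovered once,'' and conditioning on $W_2$ in the right places is what prevents the feedback-induced correlation $I(S^N;W_2)$ from leaking into the bound (this leakage is the new feature absent from \cite{etkin}, where $X_1\perp X_2$ and $\rho=0$). The second delicate point is justifying that one $\rho$ simultaneously governs all three terms, which is exactly where the strict-causality unit-Jacobian identity and the concavity/Jensen step are needed. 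The remaining MMSE and $\log\det$ calculations, and the concavity checks, would be carried out in an appendix paralleling the proofs of Theorems \ref{th:sum-capg} and \ref{ThSumCap2}.
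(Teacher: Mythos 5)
Your overall architecture matches the paper's proof: the same Fano decomposition $N(2R_1+R_2)\le I(W_1;Y_1^N)+I(W_1;Y_1^N\mid W_2)+I(W_2;Y_2^N)$, the same insistence that the three mutual informations interlock rather than splitting as $R_1+(R_1+R_2)$, the same genie-conditioned term $h(Y_1^N\mid X_2^N,S_1^N)$ inherited from the sum-rate bound of Theorem \ref{th:sum-capg}, and the same single-$\rho$ Gaussian maximization at the end. The one place the plan breaks is the middle term. With your definition $S_j=g_cX_{1j}+Z_{2j}$, the quantity you propose to ``pit'' --- $h(S^N\mid W_2)$ against $h(Y_2^N\mid W_2)$ --- is identically zero: in the $(1000)$ model $X_2^N$ is a deterministic function of $W_2$ and $Y_{2j}=g_dX_{2j}+S_j$, so $h(Y_2^N\mid W_2)=h(S^N\mid W_2)$ exactly. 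That difference cannot produce the third logarithm, and as written your accounting leaves the pair $h(Y_2^N)-h(Y_1^N\mid W_1)$ (contributed by $I(W_2;Y_2^N)$ and $I(W_1;Y_1^N)$) unhandled.

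The paper's treatment of that pair is the genuinely new step, and it uses the feedback structure in a way your ``lower-triangular with unit diagonal given $(W_1,W_2)$'' observation does not capture: one must show $h(Y_1^N\mid W_1)=h(S_2^N)$, where $S_{2i}=g_cX_{2i}+Z_{1i}$ is the interference-plus-noise seen at $\mathsf{D}_1$, conditioning on $W_1$ \emph{alone} (so the entropy that survives is that of $S_2^N$, not a noise-only entropy). This holds because $W_1$ and $Y_1^{i-1}$ determine $X_1^i$ recursively, hence $X_1^i$ is a function of $(W_1,S_2^{i-1})$ and the map $S_2^N\mapsto Y_1^N$ given $W_1$ is a causal bijection; it is exactly here that the presence of feedback at $\mathsf{T}_1$ and its absence at $\mathsf{T}_2$ both enter. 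One then replaces $S_2^N$ by an identically distributed fresh-noise copy $S_2'^N=g_cX_2^N+Z_2'^N$ and bounds $h(Y_2^N)-h(S_2'^N)\le\sum_i\left[h(Y_{2i}\mid S_{2i}')-h(Z_{2i}')\right]$, which is what yields the third logarithm --- note that the conditioning variable there is a noisy copy of $X_2$, not your $g_cX_1+Z_2$. With that correction the rest of your plan (the Schur-complement/MMSE evaluations and the single-$\rho$ packaging) goes through as in the paper's appendix.
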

\begin{proof}
The proof details are provided in Appendix \ref{proof_th_sum-capg2}.
\end{proof}

To characterize the approximate capacity region of the $(1000)$ feedback model, we
will use the bound in \eqref{eq:sum-capg2} only in the weak
interference regime. In the weak interference regime, an upper bound
for \eqref{eq:sum-capg2} is
\begin{eqnarray} 2R_1 + R_2  \leq
  \log \left( 1 + \frac{
    \mathsf{SNR}}{1 + \mathsf{INR}}\right) + \log (1 +
  \mathsf{SNR + INR} + 2\sqrt{\mathsf{SNR. INR}} ) +
   \log\left(1 + \mathsf{INR} + \frac{\mathsf{SNR} - \mathsf{INR}}{1+
    \mathsf{INR}}\right)  \label{eq:sum-capg2_norho1}
\end{eqnarray}

In Table~\ref{table:gauss_region}, we present the approximate capacity regions of the different feedback models
studied in this paper. The table also lists the gap to capacity for each of the feedback
models. These gaps are computed for the achievability schemes that will be described in
Section \ref{sec:gauss-ach}.

\begin{table}[h]
\caption{Approximate capacity regions for all 9 canonical feedback models}
\centering
\begin{tabular}{|c|l|c|}
  \hline && \\
  Cases & Outer bound of Capacity Region  & Gap to Capacity \\ && \\
  \hline & & \\
     & $R_1\le \log(1 + \mathsf{SNR} + \mathsf{INR})$  &\\
    & $R_2\le \log(1 + \mathsf{SNR} + \mathsf{INR})$  & $2.59$ bits/Hz \\
  $(1\mathsf{x}\mathsf{x}1)$ &$R_1+R_2\le \sup_{0 \leq |\rho| \leq 1} \{ \log \left( 1 +
 \frac{(1 - |\rho|^2) \mathsf{SNR}}{1 + (1 - |\rho|^2)\mathsf{INR}}\right) +$  &\\
 & $\log (1 + \mathsf{SNR + INR} +
 2|\rho|\sqrt{\mathsf{SNR. INR}} )\}$ & \\ && \\ \hline
    &  & \\
     &  $R_1\le \log(1 + \mathsf{SNR})$ & $2.59$~bits/Hz  \\
   $(1100)$, $(1110)$     & $R_2\le \log(1 + \mathsf{SNR} + \mathsf{INR})$  & for $(1100)$ and $(1110)$ \\
 $(1010)$  &$R_1+R_2\le \sup_{0 \leq |\rho| \leq 1} \{ \log \left( 1 +
 \frac{(1 - |\rho|^2) \mathsf{SNR}}{1 + (1 - |\rho|^2)\mathsf{INR}}\right) +$  & \\
  & $\log (1 + \mathsf{SNR + INR} +
 2|\rho|\sqrt{\mathsf{SNR. INR}} )\}$ & $4.59$~bits/Hz for $(1010)$
\\ && \\
    \hline &  &\\
& $R_1\le \log(1 + \mathsf{SNR})$  & $2.59$~bits/Hz for $(0110)$  \\
& $R_2\le \log(1 + \mathsf{SNR})$  &
  \\
$(0110),(0010)$ & $R_1+R_2\le \sup_{0 \leq |\rho| \leq 1} \log \left( 1 +
 \frac{(1 - |\rho|^2) \mathsf{SNR}}{1 + (1 -
   |\rho|^2)\mathsf{INR}}\right) + $  & $4.59$~bits/Hz for $(0010)$ \\
& $\log (1 + \mathsf{SNR + INR} +
 2|\rho|\sqrt{\mathsf{SNR. INR}} )$ &    \\  & &\\
\hline & & \\
     &  $R_1\le \log(1 + \mathsf{SNR})$ & \\
      & $R_2\le \log(1 + \mathsf{SNR} + \mathsf{INR})$ &\\
    $(1000)$  &$R_1+R_2\le \sup_{0 \leq |\rho| \leq 1} \{ \log \left( 1 +
 \frac{(1 - |\rho|^2) \mathsf{SNR}}{1 + (1 - |\rho|^2)\mathsf{INR}}\right) +$ &  \\
 & $\log (1 + \mathsf{SNR + INR} +
 2|\rho|\sqrt{\mathsf{SNR. INR}} )\}$  & $4.59$~bits/Hz \\
& $2R_1+R_2\le \sup_{0 \leq |\rho| \leq 1} \{ \log \left( 1 +
 \frac{(1 - |\rho|^2) \mathsf{SNR}}{1 + (1 - |\rho|^2)\mathsf{INR}}\right) +$ & \\
 & $\log (1 + \mathsf{SNR + INR} +
 2|\rho|\sqrt{\mathsf{SNR. INR}} ) + $ & \\
& $\log\left(1 + \mathsf{INR} + \frac{\mathsf{SNR} - (1 +
|\rho|^2) \mathsf{INR} + 2 |\rho| \sqrt{\mathsf{SNR.INR} }}{1+
\mathsf{INR}}\right) \}$  & \\
& & \\
    \hline
  \end{tabular}\label{table:gauss_region}
\end{table}

 \subsection{Achievability}\label{sec:gauss-ach}
In this section, we show the achievability of the sum-rate and the
rate regions, which are within a constant number of bits of the outer
bound developed in Section \ref{sec:outerbound_g}. The achievable rate
region as well as the outer-bound are implicitly parameterized by the
pair $(\mathsf{SNR,INR})$. 
Let the set of all corner points (vertices) of the convex polygon,
which forms the outer bound for feedback state $(F_{11} F_{12} F_{21}
F_{22})$, be denoted by $\mathcal{Q}^{(F_{11} F_{12} F_{21}
  F_{22})}$. Then in order to prove that the capacity region is within
a constant number of bits of the outer bound, it is sufficient to
prove
\begin{equation}
\max_{\mathsf{SNR, INR}}
\min_{\mathcal{R}^{(F_{11} F_{12} F_{21}
F_{22})}} \max(\overline{C}_1 -
R_1, \overline{C}_2 - R_2) \leq c,
\end{equation}
where $(R_1,R_2) \in \mathcal{R}^{(F_{11}F_{12}F_{21}F_{22})}$, $(\overline{C}_1, \overline{C}_2) \in \mathcal{Q}^{(F_{11}
  F_{12} F_{21} F_{22})}$ and $c$ is a fixed constant independent of $\mathsf{SNR}$ and $\mathsf{INR}$. Therefore, in this section, for each corner
point on the outer bound, we show an achievable rate pair that is
within $c$ bits from it. The corner points of relevance are defined
here as
\begin{eqnarray}
  \overline{\mathcal{K}}_{\rm A}= \{(\overline{C}_1, \overline{C}_2):
  \text{$\overline{C}_1 = R_1$ \&  $\overline{C}_2 = R_2$ such that \eqref{cutsetg1} and \eqref{eq:sum-capg2_norho1} hold with equality simultaneously}\},
  \nonumber \\
  \overline{\mathcal{K}}_{\rm B}= \{(\overline{C}_1, \overline{C}_2):
  \text{$\overline{C}_1 = R_1$ \&  $\overline{C}_2 = R_2$ such that \eqref{cutsetg1} and \eqref{eq:sum-capg_norho} hold with equality simultaneously}\},\nonumber   \\
  \overline{\mathcal{K}}_{\rm C}= \{(\overline{C}_1, \overline{C}_2):
  \text{$\overline{C}_1 = R_1$ \&  $\overline{C}_2 = R_2$ such that \eqref{eq:sum-capg_norho} and \eqref{eq:sum-capg2_norho1} hold with equality simultaneously}\}, \nonumber  \\   \overline{\mathcal{K}}_{\rm D}= \{(\overline{C}_1, \overline{C}_2):
  \text{$\overline{C}_1 = R_1$ \&  $\overline{C}_2 = R_2$ such that \eqref{cutsetg2} and \eqref{eq:sum-capg_norho} hold with equality simultaneously}\},\nonumber
  \\
  \overline{\mathcal{K}}_{\rm E}= \{(\overline{C}_1, \overline{C}_2):
  \text{$\overline{C}_1 = R_1$ \&  $\overline{C}_2 = R_2$ such that \eqref{cutsetg1} and \eqref{cutsetg2} hold with equality simultaneously}\}.  \label{eq:definecornerg}
\end{eqnarray}
Note that $\overline{\mathcal{K}}_{\rm A}$ and
$\overline{\mathcal{K}}_{\rm C}$ are defined only for the weak
interference regime as we will need to show achievable rate pairs
within constant number of bits from them only in the weak interference
regime. Moreover, note that for a fixed ${\sf SNR, INR}$ the rate pair
described by a corner point in the outer bound will change based on
the feedback model, since the bounds \eqref{cutsetg1} and
\eqref{cutsetg2} vary based on the feedback model.

We next describe the achievability for the $(1000)$ and $(0010)$ feedback
models, find $\mathcal{R}^{(1000)}$ and $\mathcal{R}^{(0010)}$, and
then use them to obtain the approximate capacity regions of all 9
canonical feedback models.\\

\subsubsection{Achievability for the $\mathbf{(1000)}$ Feedback Model}
\label{subsec:onelink_a_l}
\paragraph{{\bf Weak Interference}, $\alpha \leq 1$} The outer-bound of the capacity region of the
$(1000)$ feedback model is sufficiently characterized by
$\overline{\mathcal{K}}_{\rm A}$, $\overline{\mathcal{K}}_{\rm C}$ and
$\overline{\mathcal{K}}_{\rm D}$. To achieve within constant number of
bits of $\overline{\mathcal{K}}_{\rm A}$, feedback is not required,
while to achieve within a constant number bits of
$\overline{\mathcal{K}}_{\rm C}$ and $\overline{\mathcal{K}}_{\rm D}$,
feedback is needed.

\emph{Encoding}: Similar to the
achievability in Section~\ref{SecDC}, we use the Han-Kobayashi
rate-splitting approach \cite{han}. At both transmitters the message
to be transmitted is split into common and private parts. The common
message generated by ${\sf T}_2$ in the $i^{\rm th}$ block is learned
by ${\sf T}_1$ via feedback, decoded, re-encoded and re-transmitted
in the $(i+1)^{\rm th}$ block. In the $i^\mathrm{th}$ block of
transmission, the common and private messages generated by the
$u^\mathrm{th}$ transmitter are denoted by $X_{ui,c}$ and $X_{ui,p}$
respectively. Rates $R_{up}$, $R_{uc}$ denote the private and common rates for the
${\sf T}_u - {\sf D}_u$ pair. Thus, $R_u = R_{up} + R_{uc}$.
The fraction of power allocated to the common and
private parts is $\lambda_{uc}$ and $\lambda_{up}$. To transmit the
common message of ${\sf T}_2$, ${\sf T}_1$ allocates $\lambda_{1r}$
fraction of its power. The power constraint implies $\lambda_{1c} +
\lambda_{1p} + \lambda_{1r} \leq 1$ and $\lambda_{2c} + \lambda_{2p}
\leq 1$. As a simplification step, we propose $\lambda_{1p} =
\lambda_{2p}$. The following communication strategy, which extends to
$B$ blocks is proposed
\begin{equation}
X_{1i} = \left\{
\begin{array}{cl}
0 &  i = 1 \\ \sqrt{\lambda_{1p}}X_{1i,p} +
\sqrt{\lambda_{1c}}X_{1i,c} + \sqrt{\lambda_{1r}}X_{2i-1,c} & 1< i \leq B
\end{array} \right.
\label{eq:acht1}
\end{equation}
and
\begin{equation}
X_{2i} = \left\{
\begin{array}{cl}
\sqrt{\lambda_{2p}}X_{2i,p} + \sqrt{\lambda_{2c}}X_{2i,c} &  1 \leq i <B \\ 0 &  i = B \\
\end{array} \right.
\label{eq:acht2}
\end{equation}

\emph{Decoding}: We will employ \emph{forward decoding} at
$\mathsf{D_2}$ and \emph{backward decoding} (starting from the
$B^\mathrm{th}$ block) at $\mathsf{D_1}$. Since forward decoding is
used at ${\sf D}_2$, the message $X_{2i - 1,c}$ is decoded before decoding the $i^{\rm th}$ block. Thus,
$g_c\sqrt{\lambda_{1r}}X_{2i-1,c}$ can be subtracted from the received
message $Y_{2i}$ while decoding the messages received in the
$i^\mathrm{th}$ block. On the other hand, at $\mathsf{D_1}$, since
backward decoding is employed, message $X_{2i,c}$ is decoded while
decoding the $(i + 1)^\mathrm{th}$ block of received messages. Thus,
it can be used to subtract out $g_c\sqrt{\lambda_{2c}}X_{2i,c}$ from
the received message $Y_{1i}$ to assist decoding the $i^\mathrm{th}$
block. At ${\sf D}_1$, the private messages $X_{1i,p}$ and $X_{2i,p}$
are treated as noise while decoding $X_{1i,c}$ and $X_{2i-1,c}$. After
decoding $X_{1i,c}$ and $X_{2i-1,c}$, they are subtracted out from
$Y_{1i}$ and $X_{1i,p}$ is decoded treating $X_{2i,p}$ as
noise. Similar steps follow at the receiver ${\sf D}_2$.
At $\mathsf{D}_1$,
the decoding constraint can be written as
\begin{eqnarray}
 \label{eq:11} R_{1c} & \leq & \log\left(1  + \frac{\lambda_{1c}{\sf SNR}}{\lambda_{1p}{\sf SNR} + \lambda_{2p}{\sf INR} + 1}\right) \\
 \label{eq:12} R_{2c} & \leq & \log\left(1  + \frac{\lambda_{1r}{\sf SNR}}{\lambda_{1p}{\sf SNR} + \lambda_{2p}{\sf INR} + 1}\right) \\
\label{eq:13}  R_{1c} + R_{2c} & \leq & \log\left(1  + \frac{(\lambda_{1c} + \lambda_{1r}){\sf SNR}}{\lambda_{1p}{\sf SNR} + \lambda_{2p}{\sf INR} + 1}\right) ,
\end{eqnarray}
while at ${\sf D}_2$, the decoding constraints are
\begin{eqnarray}
\label{eq:21}  R_{1c} & \leq & \log\left(1  + \frac{\lambda_{1c}{\sf INR}}{\lambda_{1p}{\sf SNR} + \lambda_{2p}{\sf INR} + 1}\right) \\
\label{eq:22}  R_{2c} & \leq & \log\left(1  + \frac{\lambda_{2c}{\sf SNR}}{\lambda_{1p}{\sf SNR} + \lambda_{2p}{\sf INR} + 1}\right) \\
\label{eq:23}  R_{1c} + R_{2c} & \leq & \log\left(1  + \frac{\lambda_{1c}{\sf INR} + \lambda_{2c}{\sf SNR}}{\lambda_{1p}{\sf SNR} + \lambda_{2p}{\sf INR} + 1}\right).
\end{eqnarray}

Further, since we are employing a decode and forward kind of strategy
for re-transmitting $X_{2i-1,c}$, before forwarding it, $\mathsf{T_1}$
has to decode it using the signal $(Y_{1i} - g_dX_{1i})$ ($X_{1i}$
available via feedback). This imposes the following decoding
constraints
\begin{eqnarray}\label{eq:2mac}
R_{2c} & \leq &\log\left( 1 + \frac{\lambda_{2c} \mathsf{INR}}{\lambda_{2p}
\mathsf{INR} + 1}\right),
\end{eqnarray}

Finally, the decoding constraints for the private messages are
\begin{eqnarray}
R_{1p} & \leq & \log \left( 1 + \frac{\lambda_{1p}
\mathsf{SNR}}{\lambda_{2p}\mathsf{INR} + 1} \right)  \label{eq:r1p}\\
R_{2p} & \leq & \log \left( 1 + \frac{\lambda_{2p}
\mathsf{SNR}}{\lambda_{1p}\mathsf{INR} + 1} \right).\label{eq:r2p}
\end{eqnarray}

\emph{Choice of power and rate allocation}: In Tables
\ref{table:lambdas} and \ref{table:rates}, the power and corresponding
rate allocation for four different rate pairs $(R_1,R_2)$, labeled
$\mathcal{P}_{\rm A}, \mathcal{P}_{\rm B}, \mathcal{P}_{\rm C}$ and
$\mathcal{P}_{\rm D}$ are shown. Note that, for each of the rate pairs
labeled by $\mathcal{P}_{\rm A}, \mathcal{P}_{\rm B}, \mathcal{P}_{\rm
  C}$ and $\mathcal{P}_{\rm D}$, we can obtain $R_1 = R_{1p} + R_{1c}$
and $R_2 = R_{2p} + R_{2c}$ from Table \ref{table:rates}. Using the
achievable strategy for the $(1000)$ feedback model, in the weak
interference regime, the rate pairs labeled by $\mathcal{P}_{\rm A},
\mathcal{P}_{\rm C}$ and $\mathcal{P}_{\rm D}$, described in Table
\ref{table:rates}, are easily shown to be feasible for the power
allocation described in Table~\ref{table:lambdas}.

The rate pairs described by $\mathcal{P}_{\rm A}$, $\mathcal{P}_{\rm
  C}$ and $\mathcal{P}_{\rm D}$ in Table \ref{table:rates} are
within a constant number of bits from $\overline{\mathcal{K}}_{\rm A}$,
$\overline{\mathcal{K}}_{\rm C}$ and $\overline{\mathcal{K}}_{\rm D}$
respectively. The gaps of $\mathcal{P}_{\rm A}$, $\mathcal{P}_{\rm C}$
and $\mathcal{P}_{\rm D}$ from $\overline{\mathcal{K}}_{\rm A}$,
$\overline{\mathcal{K}}_{\rm C}$ and $\overline{\mathcal{K}}_{\rm D}$
for the $(1000)$ feedback model are evaluated in
Appendix~\ref{para:2r1r2-r1}, \ref{para:2r1r2-r1r2} and
\ref{para:r2-r1r2_l} and the maximum gap is found to be
$4.59$~bits/Hz.  \\

\begin{table}[t] \caption{Power allocation for the private and common messages for $(1000)$ feedback model}
\label{table:lambdas}
\centering
\begin{tabular}{ |c|l|c|c |c|c|c|}
\hline %
Corner Point  & $\alpha$& $\lambda_{1p}$ & $\lambda_{2p}$ &$\lambda_{1c}$ & $\lambda_{2c}$ & $\lambda_{1r}$ \\\hline %
& $[0,1/2)$ & 1 & $\min(1, 1/{\sf INR})$ & 0  & 0 & 0  \\ \cline{2-7}
$\mathcal{P}_{\rm A}$  & $[1/2, 1]$ & 1 & 0 & 0 & 0 & 0 \\ \hline
 & $(1,2]$ & 0 & 0  & 1 &  1 & 0 \\ \cline{2-7}
$\mathcal{P}_{\rm B}$ & $(2,\infty$) & 0 & 0 & $1 - \frac{1}{\sf SNR}$& $1$ & $\frac{1}{\sf SNR}$  \\ \hline
 & $[0, 1/2)$ & $\min(1,1/{\sf INR})$ & $\min(1,1/{\sf INR})$ & 0 & $1 - \lambda_{1p}$ & $1 - \lambda_{2p}$  \\\cline {2 -7}
$\mathcal{P}_{\rm C}$ & $[1/2,2/3)$ & $\min(1,1/{\sf INR})$ & $\min(1,1/{\sf INR})$ & $\frac{(1 - \lambda_{1p})}{2}$ & $(1 - \lambda_{2p})$ & $\frac{(1 - \lambda_{1p})}{2}$  \\ \cline{2-7}
& $[2/3,1]$ & $\min(1,1/{\sf INR})$ & $\min(1,1/{\sf INR})$ & 1 - $\lambda_{1p}$ & 1 - $\lambda_{2p}$ & 0 \\ \hline
 & $[0,1]$ & $\min(1,1/{\sf INR})$& $\min(1,1/{\sf INR})$ & 0 & $1 - \lambda_{1p}$ & $1 - \lambda_{2p}$   \\ \cline {2-7}
$\mathcal{P}_{\rm D}$ & $(1,\infty)$ & 0 & 0 & 0 &   1  &  1 \\\hline
\end{tabular}
\end{table}

\begin{table}[t]\caption{Rate allocation to the private and common messages for $(1000)$ feedback model}
\label{table:rates}
\centering
\begin{tabular}{ |c|l|c|c |c|c|c|}
\hline %
Corner Point & $\alpha$& $R_{1p}$ & $R_{2p}$ &$R_{1c}$ & $R_{2c}$ \\\hline %
 & $[0,1/2)$ & $\log({\sf SNR}/{2})$ & $\log({\sf SNR}/{\sf 2 INR^2})$ & 0  & 0   \\ \cline{2-6}
$\mathcal{P}_{\rm A}$  & $[1/2, 1]$ & $\log(1 + {\sf SNR})$ & 0 & 0 & 0  \\ \hline
 & $(1,2]$ & 0 & 0  &   $\log({\sf SNR})$ & $\log(1 + \frac{\sf INR}{\sf SNR})$  \\ \cline{2-6}
$\mathcal{P}_{\rm B}$ & (2,$\infty$) & 0 & 0 & $\log({\sf SNR})$& $\log({\sf \frac{INR}{SNR}}) $  \\ \hline
  & $[0, 1/2)$ & $\log(1 + {\sf SNR}/{\sf 2INR} )$& $R_{1p}$ & 0 & $\log({\sf INR}/{3})$  \\\cline {2 -6}
& & & & &  \\
$\mathcal{P}_{\rm C}$ & $[1/2,2/3)$ & $\log(1 + \frac{\sf SNR}{\sf 2INR})$ & $R_{1p}$ & $\log(1 + \frac{\sf INR^2}{\sf SNR}) -2$  & $\log(\frac{1 + {\sf SNR}/{\sf INR}}{4})$   \\
 & & & & &  \\ \cline{2-6}
 & $[2/3,1]$ & $\log(1 + {\sf SNR}/{\sf 2INR})$ & $R_{1p}$  & $\log({\sf INR^2}/{\sf 3SNR})$ & $\log({\sf 2SNR}/{\sf 3INR})$  \\ \hline
 & $[0,1]$ & $\log(1 + \frac{\sf SNR}{2 {\sf INR} })$& $R_{1p}$ & 0 & $\log(\frac{{\sf INR}}{3})$  \\ \cline {2 -6}
$\mathcal{P}_{\rm D}$ & $(1,\infty)$ & 0 & 0  & 0 & $\log(1 + {\sf INR})$   \\\hline
\end{tabular}
\end{table}

\paragraph{{\bf Strong Interference}, $\alpha > 1$} The outer-bound is sufficiently described by
$\overline{\mathcal{K}}_{\rm B}$ and $\overline{\mathcal{K}}_{\rm
  D}$. For $1 < \alpha < 2$, the achievable rate pair described by
$\mathcal{P}_{\rm B}$ in Table~\ref{table:rates} can be achieved
without feedback and is within constant number of bits from
$\overline{\mathcal{K}}_{\rm B}$. The constant is evaluated to be
$2.59$~bits/Hz in Appendix~\ref{para:r1-r_1r_2}. For the rest, the
following achievable strategy is employed.

The encoding is identical with \eqref{eq:acht1} and \eqref{eq:acht2}. In
contrast to the decoding scheme for $\alpha \le 1$, in strong
interference, we employ forward decoding at ${\sf D}_1$ and backward
decoding at ${\sf D}_2$. Private messages are not needed in this
regime, thus $\lambda_{1p} = \lambda_{2p} = 0$, and correspondingly
$R_{1p} = R_{2p} = 0$. Since forward decoding is employed at
$\mathsf{D}_1$, $X_{2i-1,c}$ is decoded from the received message in
the $(i-1)^\mathrm{th}$ block of decoding, and thus
$g_d\sqrt{\lambda_{1r}}X_{2i-1,c}$ can be subtracted out from the
received message $Y_{1i}$ for decoding the $i^{\rm th}$ block. On the
other hand, at $\mathsf{D_2}$, backward decoding is applied. Thus,
prior to decoding the $i^\mathrm{th}$ block, $X_{2i,c}$ is known and
can be used to subtract $g_d\sqrt{\lambda_{2c}}X_{2i,c}$ from $Y_{2i}$.
Then, $X_{2i-1,c}$ and $X_{1i,c}$ are decoded. With
$\lambda_{up} = 0$, the decoding constraints at ${\sf D}_1$, ${\sf
  D}_2$ and ${\sf T}_1$ are the same as \eqref{eq:11}-\eqref{eq:13},
\eqref{eq:21}-\eqref{eq:23} and \eqref{eq:2mac}.

\emph{Choice of power and rate allocation:} In the strong interference
regime, rate pairs described by $\mathcal{P}_{\rm B}$ and
$\mathcal{P}_{\rm D}$ in Table~\ref{table:rates} are feasible for the
power allocation described by Table~\ref{table:lambdas}. The rate pairs
described by $\mathcal{P}_{\rm B}$ and $\mathcal{P}_{\rm D}$ in
Table~\ref{table:rates} are within constant number of bits of
$\overline{\mathcal{K}}_{\rm B}$ and $\overline{\mathcal{K}}_{\rm D}$
respectively, for which the gap is computed in
Appendix~\ref{para:r1-r_1r_2} and \ref{para:r2-r1r2_h}. The maximum
gap is found to be $2.59$~bits/Hz. This completes the characterization
of the approximate capacity for the $(1000)$ feedback within
$4.59$~bits/Hz.\\

\subsubsection{Relating $\mathcal{R}^{(1000)}$ and $\mathcal{R}^{(0010)}$}
For proving the achievability of the rest of the feedback models, we
prove the following lemma, which relates $\mathcal{R}^{(1000)}$ and
$\mathcal{R}^{(0010)}$.
\begin{lemma}\label{th:crosslink}
When $\alpha \leq 1$, there exist achievable rate regions
$\mathcal{R}^{(0010)}$ and $\mathcal{R}^{(0100)}$ such that
\begin{eqnarray}
\label{eq:crslnk1} \mathcal{R}^{(1000)} & \subseteq &\mathcal{R}^{(0010)}  \\
\label{eq:crslnk2} \mathcal{R}^{(0001)} & \subseteq &\mathcal{R}^{(0100)},
\end{eqnarray}
where the region $\mathcal{R}^{(1000)}$ is described for the feedback
model $(1000)$ in Section~\ref{subsec:onelink_a_l}. The region
$\mathcal{R}^{(0001)}$ is such that if $(R_x,R_y) \in
\mathcal{R}^{(1000)}$, then $(R_y,R_x) \in \mathcal{R}^{(0001)}$.
\end{lemma}
\begin{proof}
Due to the symmetry, proving (\ref{eq:crslnk1}) is sufficient. Suppose
that for the $(0010)$ feedback model, the encoding is identical to the
one in the $(1000)$ feedback model. Then, the feasibility of decoding
needs to be established for the $(0010)$ feedback model, given that
decoding is feasible for the $(1000)$ feedback model. Let the decoding at
both the receivers also be identical. Then for a given choice of
$\{R_{1c},R_{2c},R_{1p},R_{2p}\}$ and $\{\lambda_{1c},
\lambda_{2c},\lambda_{1p},\lambda_{2p}, \lambda_{2r}\}$, the decoding
constraints at the receivers for the $(0010)$ feedback model are
identical to (\ref{eq:11})-(\ref{eq:23}) and
\eqref{eq:r1p},\eqref{eq:r2p}, which are known to be feasible for the
$(1000)$ feedback model. The decoding constraints at $\mathsf{T_1}$
are different in $(0010)$ compared to $(1000)$, since the feedback
messages are different. Since $\mathsf{T_1}$ knows its own transmitted
symbol $X_{1i}$, the common message $X_{2i,c}$ needs to be decoded
from $Y_{2i} - g_cX_{1i}$, for which the decoding constraint is
\begin{eqnarray}\label{eq:22mac}
R_{2c}& \leq &\log\left( 1 + \frac{\lambda_{2c}
  \mathsf{SNR}}{\lambda_{2p} \mathsf{INR} + 1}\right).
\end{eqnarray}
Since $\alpha \leq 1$, i.e., $\mathsf{SNR} \geq \mathsf{INR}$, if a
rate $R_{2c}$ satisfies the constraint \eqref{eq:2mac}, then it also
satisfies the constraint \eqref{eq:22mac}. Thus
$\mathcal{R}^{(0010)}$ is achievable if $\mathcal{R}^{(1000)}$ is
achievable, and the proof is complete.\\
\end{proof}

\subsubsection{Achievability for the $\mathbf{(0010)}$ Feedback Model}
\label{sec:(0010)_g}
\paragraph{{\bf Weak Interference}, $\alpha \leq 1$} The corner points $\overline{\mathcal{K}}_{\rm B}$ and $\overline{\mathcal{K}}_{\rm D}$
characterize the outer-bound. From Lemma~\ref{th:crosslink} and
achievability of the $(1000)$ feedback model, we know that
$\mathcal{P}_{\rm D}$, described in Table~\ref{table:rates} is
achievable. The gap of $\mathcal{P}_{\rm D}$ from
$\overline{\mathcal{K}}_{\rm D}$ for $(0010)$ feedback model is
evaluated in Appendix~\ref{sec:kdfromd} and is found to be
$2.59$~bits/Hz. However, to achieve within a constant number of bits
of $\overline{\mathcal{K}}_{\rm B}$, we propose a different achievable
scheme based on block-Markov encoding at ${\sf T}_2$ and dirty paper
encoding at ${\sf T}_1$.

\emph{Encoding:} At ${\sf T}_2$, the message is split into two parts
 $X_{2i,d}$ and $X_{2i,nd}$ with rates $R_{2,d}$ and $R_{2,nd}$, such
 that $R_{2,d} + R_{2,nd} = R_2$. The transmitted message in the
 $i^{\rm th}$ block is
\begin{equation}
  X_{2i} = \sqrt{\lambda_{2,d}}X_{2i-1,d} + \sqrt{\lambda_{2,nd}}X_{2i,nd}  + \sqrt{1 - \lambda_{2,d} - \lambda_{2,nd}}X_{2i,d}
\end{equation}
such that the power constraint is $\lambda_{2,nd} + \lambda_{2,d} \leq
1$. At ${\sf T}_1$, assuming that $X_{2i-1,d}$ can be decoded from the
cross-link feedback, before the $i^{\rm th}$ block of transmission,
the message to be transmitted is encoded into $X_{1i}$ using dirty
paper coding, treating $g_c\sqrt{\lambda_{2,d}}X_{2i-1,d}$ as
interference. The encoded message is denoted as $X_{1i}$, and its rate
is denoted by $R_1$.

\emph{Decoding:} At ${\sf D}_2$, backward decoding is applied. In the
$(i+1)^{\rm th}$ block, $X_{2i,d}$ and $X_{2i+1,nd}$ are assumed to be
decoded. To decode $X_{2i-1,d}$ and $X_{2i,nd}$ from the $i^{\rm th}$
block, $g_d\sqrt{1 - \lambda_{2,d} - \lambda_{2,nd}}X_{2i,d}$ is
subtracted from $Y_{2i}$ and $X_{1i}$ is treated as noise. At ${\sf
  D}_1$, dirty paper decoding is performed to decode the message from
${\sf T}_1$ assuming $X_{2i,nd}$ and $X_{2i,d}$ as noise. At ${\sf
  T}_1$, after the $i^{\rm th}$ transmission block $Y_{2i}$ is
received from the cross-link feedback from ${\sf D}_2$. Since
$X_{2i-1,d}$ is assumed to be known at ${\sf T}_1$ before the $i^{\rm
  th}$ block, from $Y_{2i}$, $g_cX_{1i} +
g_d\sqrt{\lambda_{2,d}}X_{2i-1,d}$ is subtracted to decode $X_{2i,nd}$
and $X_{2i,d}$.

\emph{Choice of power and rate allocation:} Using the above encoding
and decoding strategy, power and rate allocation for the rate pair
described by $\mathcal{P}_{\rm B2}$ in Table~\ref{table:lambdas2} and
\ref{table:rates2} is feasible in the weak interference regime. The
gap of $\mathcal{P}_{\rm B2}$ from
$\overline{\mathcal{K}}_{\rm B}$ is computed in
Appendix~\ref{sec:kbfromb2} and the gap is found to be $4.59$~bits/Hz.

\begin{table}[t] \caption{Power allocation for the $(0010)$ feedback model}
\label{table:lambdas2}
\centering
\begin{tabular}{ |c|l|c|c |c|}
\hline %
Corner Point  & $\alpha$& $\lambda_{1}$ & $\lambda_{2,d}$ &$\lambda_{2,nd}$   \\\hline %
& $[0,1/2)$ & 1 & $1 - 1/{\sf INR}$ & $1/{\sf INR} - {\sf SNR}/{\sf INR}$     \\ \cline{2-2} \cline{4-5}
$\mathcal{P}_{\rm B2}$  & $[1/2, 1]$ &  & $1 - 1/{\sf INR}$ & 0  \\ \hline
\end{tabular}\\
\vspace{0.3cm}
\begin{tabular}{ |c|l|c|c |c|c|}\hline
Corner Point  & $\alpha$& $\lambda_{1p}$ & $\lambda_{2p}$ &$\lambda_{1c}$ & $\lambda_{2c}$\\ \hline
 $\mathcal{P}_{\rm B2}$ & $(1,2)$ & 0 & 0 & $1$ & ${\sf INR}/{\sf SNR^2}$   \\ \hline
 $\mathcal{P}_{\rm D2}$ & $(1,2)$ & 0 & 0 & ${\sf INR}/{\sf SNR^2}$ & $1$   \\ \hline
 $\mathcal{P}_{\rm E}$ & $[2,\infty)$ & 0 & 0 & $1$ & $1$   \\ \hline
\end{tabular}
\end{table}

\begin{table}[t]\caption{Rate allocation for the $(0010)$ feedback model}
\label{table:rates2}
\centering
\begin{tabular}{ |c|l|c|c |c|}
\hline %
Corner Point  & $\alpha$& $R_{1}$ & $R_{2,d}$ &$R_{2,nd}$   \\\hline %
& $[0,1/2)$ & $\log(1 + {\sf SNR}/2)$ & $\log({\sf INR}/{2})$ & $\log({\sf SNR}/{\sf INR^2})$     \\ \cline{2-2} \cline{4-5}
$\mathcal{P}_{\rm B2}$  & $[1/2,1]$ & & $1 - 1/{\sf INR})$ & 0  \\ \hline
\end{tabular}\\
\vspace{0.3cm}
\begin{tabular}{ |c|l|c|c |c|c|}\hline
Corner Point  & $\alpha$& $R_{1p}$ & $R_{2p}$ &$R_{1c}$ & $R_{2c}$\\ \hline
 $\mathcal{P}_{\rm B2}$ & $(1,2)$ & 0 & 0 & $\log({\sf INR})$& $\log(1 + {\sf INR}/{\sf SNR})$   \\ \hline
 $\mathcal{P}_{\rm D2}$ & $(1,2)$ & 0 & 0 & $\log(1 + {\sf INR}/{\sf SNR})$ & $\log({\sf INR})$   \\ \hline
 $\mathcal{P}_{\rm E}$ & $[2,\infty)$ & 0 & 0 & $\log(1 + {\sf SNR})$ & $\log(1 + {\sf SNR})$   \\ \hline
\end{tabular}
\end{table}

\paragraph{{\bf Strong Interference}, $\alpha > 1$}
In this regime of interference, the approximate capacity region can be
achieved without feedback. When $1 < \alpha < 2$, the corner points
$\overline{\mathcal{K}}_{\rm B}$ and $\overline{\mathcal{K}}_{\rm D}$
characterize the outer bound of the capacity region. We have shown the
power and rate allocation for rate pairs described by
$\mathcal{P}_{\rm B2}$ and $\mathcal{P}_{\rm D2}$ in
Table~\ref{table:lambdas2} and \ref{table:rates2}, which can be
achieved without any feedback \cite{etkin}.  The gaps of
$\mathcal{P}_{\rm B2}$ and $\mathcal{P}_{\rm D2}$ from
$\overline{\mathcal{K}}_{\rm B}$ and $\overline{\mathcal{K}}_{\rm D}$
respectively are both computed to be $2$ bits/Hz. For $\alpha \geq 2$,
$\overline{\mathcal{K}}_{\rm E}$ is the only non-trivial corner point
on the outer bound. As the interference is strong enough, it can be
completely decoded. Therefore, the channel is equivalent to two
parallel point to point channels. Thus the rate pair
described by $\mathcal{P}_{\rm E}$ in Tables~\ref{table:lambdas2} and
\ref{table:rates2} is achievable. The gap of $\mathcal{P}_{\rm E}$ from
$\overline{\mathcal{K}}_{\rm E}$ is 0. Thus, the achievability of
$(0010)$ feedback model within $4.59$~bits/Hz is complete.\\

\subsubsection{Achievability for the $\mathbf{(1001)}$, $\mathbf{(1101)}$ and $\mathbf{(1111)}$
  Feedback Models} We will find an achievable rate region for
$(1001)$, which is within a constant number of bits away from the
outer-bound of the $(1111)$ feedback model. For $(1111)$, $(1101)$ and
$(1001)$ feedback models, $\overline{\mathcal{K}}_{\rm B}$ and
$\overline{\mathcal{K}}_{\rm D}$ sufficiently characterize the
outer-bound. Using $(1000)$ feedback model, the rate pair described by
$\mathcal{P}_{\rm D}$ in Table~\ref{table:rates} is achievable, and
thus $\mathcal{P}_{\rm D}$ in Table~\ref{table:rates} is achievable
with $(1111)$, $(1101)$ and $(1001)$ feedback models. The gap of
$\mathcal{P}_{\rm D}$ from $\overline{\mathcal{K}}_{\rm D}$ is
evaluated in Appendix~\ref{para:r2-r1r2_l} and \ref{para:r2-r1r2_h} to
be $2.59$ bits/Hz. From symmetry, $(0001)$ feedback model can achieve
a rate-pair within $2.59$ bits/Hz from $\overline{\mathcal{K}}_{\rm
  B}$. As, $\mathcal{R}^{(1001)} \supseteq \mathcal{R}^{(1000)} $, and
$\mathcal{R}^{(1001)} \supseteq \mathcal{R}^{(0001)}$, $\mathcal{R}^{(1001)}$ contains achievable rate pairs within $2.59$
bits/Hz from both $\overline{\mathcal{K}}_{\rm B}$ and
$\overline{\mathcal{K}}_{\rm D}$. Also, as the following relation holds
\begin{equation}
\mathcal{R}^{(1001)} \subseteq \mathcal{C}^\mathrm{(1001)} \subseteq \mathcal{C}^{(1101)} \subseteq
\mathcal{C}^\mathrm{(1111)}
\end{equation}
and since $\mathcal{R}^{(1001)}$ is within $2.59$ bits/Hz of the outer
bound on the $(1111)$ feedback model, we conclude that
$\mathcal{C}^{(1001)}$ is within $2.59$ bits/Hz of
$\mathcal{C}^{(1111)}$. Since the achievability of
$\mathcal{R}^{(1001)}$ directly follows from the achievability of
$\mathcal{R}^{(1000)}$, the approximate capacity region
characterization of all the feedback models of type
$(1\mathsf{x}\mathsf{x}1)$ is complete.\\

\subsubsection{Achievability for the $\mathbf{(1100)}$, $\mathbf{(1110)}$ and $\mathbf{(1010)}$
  Feedback Models} As more feedback can only increase the capacity region, we have
\begin{equation}
\mathcal{C}^{(1100)} \subseteq \mathcal{C}^{(1110)}.
\end{equation}
The outer bound of the $(1110)$ feedback model is characterized by the
corner points $\overline{\mathcal{K}}_{\rm B}$ and
$\overline{\mathcal{K}}_{\rm D}$. Note that in the previous subsection, it is proved that the rate pair $\mathcal{P}_{\rm D}$, which is described by Table~\ref{table:rates}, is within $2.59$ bits/Hz
from $\overline{\mathcal{K}}_{\rm D}$. As the achievable rate region
$\mathcal{R}^{(1000)}$ contains $\mathcal{P}_{\rm D}$ and
\begin{equation}
\mathcal{R}^{(1000)} \subseteq \mathcal{R}^{(1100)} \subseteq \mathcal{R}^{(1110)} \text{ and } \mathcal{R}^{(1000)} \subseteq \mathcal{R}^{(1010)},
\end{equation}
$(1100)$, $(1110)$ and $(1010)$ feedback models also contain $\mathcal{P}_{\rm D}$, which is within $2.59$ bits/Hz
from $\overline{\mathcal{K}}_{\rm D}$. Now, we show the achievability
of a rate pair within a constant number of bits from
$\overline{\mathcal{K}}_{\rm B}$ for $(1100)$, $(1110)$ and $(1010)$
feedback models.\\

\paragraph{{\bf Weak Interference}, $\alpha \leq 1$}
In this regime, from Lemma~\ref{th:crosslink}, we know that
$\mathcal{R}^{(0100)} \supseteq \mathcal{R}^{(0001)}$. As we have
$\mathcal{R}^{(0110)} \supseteq \mathcal{R}^{(0100)}$, then
$\mathcal{R}^{(0110)} \supseteq \mathcal{R}^{(0100)} \supseteq
\mathcal{R}^{(0001)}$. From symmetry, we know that the achievable rate
region $\mathcal{R}^{(0001)}$ contains a rate pair within
$2.59$~bits/Hz from $\overline{\mathcal{K}}_{\rm B}$. Thus, the rate
regions $\mathcal{R}^{(0100)}$ and $\mathcal{R}^{(0110)}$ and
subsequently $\mathcal{R}^{(1100)}$ and $\mathcal{R}^{(1110)}$ contain
a rate pair within $2.59$ bits/Hz from $\overline{\mathcal{K}}_{\rm
  B}$.

The $(1010)$ feedback model can achieve any rate pair, which the
$(0010)$ feedback model can achieve. Since the rate pair described by
$\mathcal{P}_{\rm B2}$ in Table~\ref{table:rates2} is achievable
within $4.59$~bits/Hz from $\overline{\mathcal{K}}_{\rm B}$ for
$(0010)$ feedback model, thus $\mathcal{P}_{\rm B2}$ is also
achievable with $(1010)$ feedback model and is within $4.59$~bits/Hz
from $\overline{\mathcal{K}}_{\rm B}$.\\

\paragraph{{\bf Strong Interference}, $\alpha > 1$}
In this regime, recall that the achievable rate region
$\mathcal{R}^{(1000)}$ itself contains the rate pair $\mathcal{P}_{\rm
  B}$, described in Table~\ref{table:rates}, which is within $2.59$
bits/Hz from $\overline{\mathcal{K}}_{\rm B}$. Thus,
$\mathcal{R}^{(1100)}$, $\mathcal{R}^{(1010)}$ and
$\mathcal{R}^{(1110)}$ also contain a rate pair within $2.59$ bits/Hz
from $\overline{\mathcal{K}}_{\rm B}$.\\

\subsubsection{Achievability for the $\mathbf{(0110)}$ Feedback Model}
\paragraph{{\bf Weak Interference}, $\alpha \leq 1$}
In this regime of interference, the outer bound on the capacity region
of the $(0110)$ feedback model is characterized by the corner points
$\overline{\mathcal{K}}_{\rm B}$ and $\overline{\mathcal{K}}_{\rm
  D}$. When $\alpha \leq 1$, due to Lemma~\ref{th:crosslink} we know
that $\mathcal{R}^{(1000)} \subseteq \mathcal{R}^{(0010)}$ and
$\mathcal{R}^{(0001)} \subseteq \mathcal{R}^{(0100)}$. Also,
$\mathcal{R}^{(1000)}$ contains the rate pair $\mathcal{P}_{\rm D}$,
described in Table~\ref{table:rates}, which is within $2.59$ bits/Hz of
$\overline{\mathcal{K}}_{\rm D}$ as shown in
Appendix~\ref{sec:kdfromd}. Symmetrically a rate pair is achievable
within $2.59$~bits/Hz from $\overline{\mathcal{K}}_{\rm B}$. Thus,
$\mathcal{R}^{(0010)}$ and $\mathcal{R}^{(0100)}$ also contain rate
pairs within $2.59$ bits/Hz from $\overline{\mathcal{K}}_{\rm D}$ and
$\overline{\mathcal{K}}_{\rm B}$. Consequently $\mathcal{R}^{(0110)}$
includes a rate pair, which is within $2.59$~bits/Hz of both
$\overline{\mathcal{K}}_{\rm D}$ and $\overline{\mathcal{K}}_{\rm
  B}$. \\

\paragraph{{\bf Strong Interference}, $\alpha > 1$}
The rate pair achievable by the $(0010)$ feedback model is also
achievable by the $(0110)$ feedback model. The corner points
$\overline{\mathcal{K}}_{\rm B}$ and $\overline{\mathcal{K}}_{\rm D}$,
which characterize the outer bound of the capacity region when $1 <
\alpha < 2$, are both achievable within a constant number of bits
without feedback by the rate pairs described by $\mathcal{P}_{\rm B2}$ and
$\mathcal{P}_{\rm D2}$ in Table~\ref{table:rates2}. The gap of $\mathcal{P}_{\rm D2}$ from $\overline{\mathcal{K}}_{\rm D}$ is
computed in Appendix~\ref{sec:kdnfb} and is found to be $2$~bits/Hz. Due to
symmetry, the gap of $\mathcal{P}_{\rm B2}$ from
$\overline{\mathcal{K}}_{\rm B}$ is also $2$~bits/Hz. When $\alpha >
2$, the only non-trivial corner point on the outer bound is
$\overline{\mathcal{K}}_{\rm E}$, which is achievable without any
feedback with 0 gap from the rate pair $\mathcal{P}_{\rm E}$ described
in Table~\ref{table:rates2}.\\


\subsubsection{Sum-capacity of all feedback models}
In order to characterize the sum-capacity of all feedback models, with
at least one feedback link, in the weak interference regime, we use
the outer bound on the $(1111)$ feedback model. In the weak
interference regime, the corner point $\overline{\mathcal{K}}_{\rm D}$
for $(1111)$ feedback model is outside the capacity region of all
feedback models. We know that the corner point $\mathcal{P}_{\rm D}$,
described by Table~\ref{table:rates} is achievable for all feedback
models in the weak interference regime. The distance of the sum-rate
described by $\mathcal{P}_{\rm D}$ from $\overline{\mathcal{K}}_{\rm
  D}$ for the $(1111)$ feedback model can be computed from Appendix
\ref{para:r2-r1r2_l} and is found to be $4.59$~bits/Hz. Thus, all
feedback models can achieve a rate pair within $4.59$~bits/Hz from
$\overline{\mathcal{K}}_{\rm D}$. Since $\overline{\mathcal{K}}_{\rm
  D}$ for the $(1111)$ feedback model lies on the sum-rate outer bound on the
$(1111)$ feedback models, thus it lies on the sum-rate outer bound of
all feedback models. For all feedback models, we have shown an
achievable rate pair, $\mathcal{P}_{\rm D}$, which is within
$4.59$~bits/Hz from the sum-rate outer bound of $(1111)$ feedback
model. Thus, the sum-capacity of all feedback models, in the weak
interference regime, is within $4.59$~bits/Hz of each other. \\

\subsubsection{Sum-capacity of $(1\mathsf{xxx})$ feedback models}
In the strong interference regime, all feedback models of type
$(1\mathsf{xxx})$, can achieve a rate pair described by
$\mathcal{P}_{\rm D}$, which is within $3$~bits/Hz from the corner
point $\overline{\mathcal{K}}_{\rm D}$ that lies on sum-rate outer
bound of the $(1111)$ feedback model as shown in
Appendix~\ref{para:r2-r1r2_h}. Thus, in strong interference regime,
all feedback models of type $(1\mathsf{xxx})$, in the strong
interference regime, is within $3$~bits/Hz of each other.


\section{Conclusion}
\label{sec:conclude}
In this paper, we characterize the capacity region of all channel
output feedback models in a two user symmetric interference
channel. Depending on whether an infinite capacity feedback link
exists between a receiver and a transmitter, a total of 9 canonical
feedback models are present. In case of the symmetric linear
deterministic interference channel, we find the exact capacity region,
while for the Gaussian channel we find the approximate capacity region
within at most $4.59$~bits/Hz for all the 9 feedback
models. Interestingly, in the weak interference regime all models of
feedback have the identical capacity region except the feedback model
with a single direct feedback link. In other words, all feedback
models (except the single direct link feedback model) have the same
capacity region as the capacity region achievable with all four
feedback links. In particular, this includes that the capacity region
of the single cross link feedback model is identical with the capacity
region of the feedback model with all four feedback links. Although
the single direct-link feedback has a smaller capacity region than
other feedback models, in the weak interference regime, its
sum-capacity is identical to the sum-capacity of the rest of the
feedback models. In the strong interference regime as well, single
direct-link feedback is sufficient to achieve the same sum-capacity as
that achievable with all four feedback links.

To prove these results, we proposed two new outer-bounds, one for the single direct link
feedback model and another for the feedback model with all four
feedback links. The two new outer bounds together with the cut-set
bound form a comprehensive outer bound for all feedback models, which allow for exact/approximate capacity region calculations for deterministic/Gaussian channel models. 

In the weak interference regime, two new achievable strategies are
proposed: one which is based on Han-Kobayashi type message splitting
and the other which is based on block-Markov coding (at one
transmitter) and dirty paper coding (at the other
transmitter). Together, the two strategies achieve the
exact/approximate capacity region for all 9 canonical feedback models
for deterministic/Gaussian channels. In the achievable strategy
involving Han-Kobayashi type message splitting, the transmitted
message from each of the transmitters is split into two parts: private
and common. The common part of the message of one of the transmitters
is transmitted twice: once by the transmitter, which generates it, and
once again (in the subsequent block) by the other transmitter after
decoding it. The rate of the common message, which is re-transmitted is
finely tuned so that it is decodable at the intended receiver after
its first transmission, while it is decodable at the interfering
receiver only after its second transmission. Although the common
message first causes interference at one of the receivers, it allows
for higher communication rates after interference resolution in the
subsequent block. In the achievable strategy involving block-Markov
encoding and dirty paper encoding/decoding, one of the transmitters
employs block-Markov encoding, thereby correlating the interference it
generates over blocks. The other transmitter knows the channel output
via feedback, and using the knowledge of correlation of interference,
it encodes its message using dirty paper coding to make its intended
signal robust against future interference.

In the strong interference regime, feedback helps create a relay
route, which is better than the direct channel from a transmitter to
its intended receiver. The messages generated at a transmitter are
first passed on to the interfering receiver. The interfering receiver
then passes it on to its own transmitter (via feedback), which can
then relay it to the intended receiver. This way the intended receiver
receives the message through an alternate path. Since the interference
is stronger than the direct channel, relaying of messages can support
higher rates than otherwise.


\bibliographystyle{IEEEtran}
\bibliography{references}

\appendix

\subsection{Proof of Lemma~\ref{th:sumcap2}}
\label{apd:2}
Let $V_{1i}=\mathbf{S}^{q-m}X_{1i}$ and
$V_{2i}=\mathbf{S}^{q-m}X_{2i}$. We know that $X_{1i}$ and $X_{2i}$
are given by
\begin{eqnarray*}
X_{1i}  =  f_{1i}(W_1, Y_1^{i-1}), \text{ } X_{2i} = f_{2i}(W_2)
\end{eqnarray*}
where $f_{1i}(.), f_{2i}(.)$ are some deterministic functions. We have
\begin{eqnarray}
\lefteqn{N(2R_1 + R_2)} \nonumber \\ & \le & 2H(W_1) + H(W_2) \nonumber \\
 & \stackrel{\text{(a)}}{=} & H(W_1) + H(W_1|W_2) + H(W_2) \nonumber \\
 & \stackrel{\mathrm{(Fano)}}{\leq} & I(W_1; Y_1^{N}) + I(W_1; Y_1^N|
W_2) + I(W_2; Y_2^N) + N(\epsilon_{1N} + \epsilon_{2N} + \epsilon_{3N}) \nonumber \\
& = & H(Y_1^N) - H(Y_1^N|W_1) + H(Y_1^N|W_2) - H(Y_1^N| W_1 W_2) + \nonumber  \\
& & H(Y_2^N) - H(Y_2^N|W_2) + N\epsilon_N, \label{eq:proof21}
\end{eqnarray}
where $\epsilon_{1N}$, $\epsilon_{2N}$ and $\epsilon_{3N}$ correspond
to the Fano's inequality applied to three different entropy terms, and
$\epsilon_N = 3\max(\epsilon_{1N}, \epsilon_{2N}, \epsilon_{3N})$ and
(a) holds because $W_1$ and $W_2$ are
independent. Rearranging~\eqref{eq:proof21} yields
\begin{eqnarray*}
\lefteqn{N(2R_1 + R_2)} \nonumber \\
& \leq & H(Y_1^N) + \underbrace {H(Y_2^N) - H(Y_1^N| W_1)} +\underbrace{
H(Y_1^N|W_2) - H(Y_1^N| W_1 W_2) - H(Y_2^N| W_2)} + N\epsilon_N \nonumber \\
& \stackrel{\text{(b)}}{\leq} & H(Y_1^N) + \underbrace{H(Y_2^N) + H(V_2^N| Y_2^N) - H(Y_1^N|W_1)}
+ \underbrace{ H(Y_1^N|W_2) - H(Y_2^N| W_2)} + N\epsilon_N,
\end{eqnarray*}
where (b) is true as entropy for discrete random variables is always
positive.  The three sub-expressions are independently bounded. The
first sub-expression satisfies $H(Y_1^N) = \sum_{i = 1}^N
H(Y_{1i}|Y_{1}^{i-1}) \leq \sum_{i = 1}^N H(Y_{1i})$ due to the chain
rule of entropy followed by the fact that removing conditioning does
not reduce entropy. The second sub-expression is bounded as follows:
\begin{equation}
H(Y_2^N) + H(V_2^N|Y_2^N) - H(Y_1^N|W_1) = H(Y_2^N|V_2^N) + H(V_2^N) - H(Y_1^N|W_1) \label{eq:subexpbounding1}
\end{equation}

Observe the following:
\begin{eqnarray}
\lefteqn{H(V_2^N) - H(Y_1^N| W_1)} && \nonumber \\
& \stackrel{\text{(c)}}{=} & H(V_2^N|W_1) - \sum_{i = 1}^N H(Y_{1i}| W_1, Y_1^{i-1}) \nonumber \\
& \stackrel{\text{(d)}}{=} & H(V_2^N|W_1) - \sum_{i = 1}^N H(Y_{1i}| W_1, Y_1^{i -1}, X_1^{i}) \nonumber \\
& \stackrel{\text{(e)}}{=} & H(V_2^N|W_1) - \sum_{i = 1}^N H(V_{2i}| W_1, V_2^{i-1}, Y_1^{i-1}, X_1^{i}) \nonumber \\
 & \stackrel{\text{(f)}}{=}  & \sum_{i=1}^N H(V_{2i}|W_1, V_2^{i-1}) - \sum_{i = 1}^N H(V_{2i}|W_1, V_2^{i-1},X_1^{i}, Y_1^{i-1}) \nonumber \\
 & = &
 \sum_{i=1}^N I(V_{2i}; X_1^{i}, Y_1^{i-1}|W_1, V_2^{i-1}) \nonumber \\ & = &
 \sum_{i=1}^N [H(X_1^{i}|W_1,V_2^{i-1}) + H(Y_1^{i-1}|W_1,V_2^{i-1}, X_1^i)] - [H(X_1^{i}
 |W_1,V_2^i) + H( Y_1^{i-1}|W_1,V_2^i, X_1^i) ] \nonumber \\
 &\stackrel{\text{(g)}}{=} & \sum_{i=1}^N [ H(X_1^{i}|W_1,V_2^{i-1}) - H(X_1^{i}
 |W_1,V_2^i)] \nonumber \\
 & {=} & \sum_{i =1}^N I(X_1^{i}; V_{2i}|W_1, V_2^{i-1}) \nonumber \\
& \stackrel{\text{(h)}}{=}  & \sum_{i =1}^N I(f(V_2^{i-1}, W_1); V_{2i}|W_1,V_2^{i-1}) \nonumber \\
& = &  0 \label{eq:subexpbounding2},
 \end{eqnarray}
where (c) is true because $V_2^N$ depends only on $W_2$ and thus
independent of $W_1$, (d) holds because $X_1^i$ is a deterministic
function of $W_1$ and $Y_1^{i-1}$, (e) is justified because $Y_{1i} =
X_{1i} + V_{2i}$, (f) is due to the chain rule of entropy, (g) holds
because $Y_1^{i-1}$ is a deterministic function of $X_1^{i-1}$ and
$V_2^{i-1}$, (h) is true because of the chain rule because of the
following: $X_{1i}$ depends on $W_1$ and $Y_1^{i-1}$, but $Y_{1i} =
X_{1i-1} + V_{2i-1}$. Thus $X_{1i}$ is function of $W_1$, $V_{2i-1}$,
and $Y_1^{i-2}$. This implies that $X_1^i$ is a function of $W_1$ and
$V_2^{i-1}$ only. Combining \eqref{eq:subexpbounding1} and
\eqref{eq:subexpbounding2}, we have
\begin{eqnarray}
 H(Y_2^N) + H(V_2^N|Y_2^N) - H(Y_1^N|W_1)  =  H(Y_2^N|V_2^N)  =  \sum_{i = 1}^N H(Y_{2i}|V_{2i},Y_{2}^{i-1},V_{2}^{i-1})  \leq  \sum_{i = 1}^N H(Y_{2i}|V_{2i}),
\end{eqnarray}
where the inequality follows from the fact that removing conditioning
cannot decrease entropy.

Finally, for the third sub-expression, we have
\begin{eqnarray}
\lefteqn{H(Y_1^N|W_2) - H(Y_2^N|W_2)} \nonumber  \\
& = & H(Y_1^N|W_2) +H(Y_1^N|Y_2^N,W_2) - H(Y_1^N,Y_2^N|W_2)\nonumber \\
& = & H(Y_1^N|Y_2^N, W_2) - H(Y_2^N|Y_1^N,W_2)  \nonumber \\
& \leq & H(Y_1^N|Y_2^N,W_2) \nonumber  \\
& \stackrel{\text{(j)}}{=} & \sum_{i=1}^N H(Y_{1i}| Y_{2}^N, Y_{1}^{i-1}, W_2) \nonumber \\
& \stackrel{\text{(k)}}{=} & \sum_{i=1}^N H(Y_{1i}| Y_{2}^N, Y_{1}^{i-1}, W_2, X_{2}^i, V_2^i, V_1^{i})  \nonumber \\
& \stackrel{\text{(l)}}{\leq} & \sum_{i=1}^N H(Y_{1i}| V_{1i}, V_{2i})  \label{eq:th31_2}
\end{eqnarray}
 (j) follows from the chain rule of entropy, (k) follows from the
observation that $X_{2}^i$ is a function of only ($W_2, Y_{1}^{i-1},
Y_{2}^{i}$), $V_2^{i}$ is function of $X_2^i$, and $V_1^{i}$ is a
function of ($X_2^{i}$, $Y_{2}^i$), and (l) follows since conditioning
reduces entropy.

Now combining all the expressions together we have
\begin{equation}
N(2R_1 + R_2) \le \sum_{i = 1}^N (H(Y_{1i}) + H(Y_{2i}| V_{2i}) +
H(Y_{1i} | V_{1i},V_{2i}) + N\epsilon_N
\end{equation}
By randomization of time indices and letting $\epsilon_N \to 0$ as $N
\to \infty$, we get

\begin{equation}
2R_1 + R_2 \le H(Y_1) + H(Y_2|V_2) + H(Y_1|V_1, V_2).
\end{equation}
 The RHS is maximized when $X_1$ and $X_2$ are drawn from an
 i.i.d. distribution over $\mathbb{F}_2^q$, where each entry of
 the $q$-bit vector is i.i.d. $\mathrm{Bern}(\frac{1}{2})$. This gives
 us the outer bound as in the statement of Lemma~\ref{th:sumcap2}.

\subsection{Achievable strategy for the corner points of the capacity region of the $(1000)$ feedback model}
\label{apd_a}
 \emph{Encoding:} At the $u^{\rm th}$ transmitter ${\sf T}_u$, in the
 $i^{\rm th }$ block two i.i.d.~bit vectors $X_{ui,c}$ and $X_{ui,p}$
 are generated. The total number of transmission blocks is $B$. Let
 $\mathbf{0}_l = [0,0,\ldots 0]$ such that $|\mathbf{0}_l| = l$. The
 encoding of messages is for all the $B$ blocks is shown in
 Table~\ref{table:encoding}.

\begin{table}[t]
\centering
  \caption{Encoding of messages in the weak interference regime for the $(1000)$ feedback model}
  \begin{tabular}{ | c | c | c | c |}
    \hline
    & Block 1 & Block $i$ & Block $B$   \\ \hline
    Message $X_{1i}$ at ${\sf T}_1$ & $\mathbf{0}_n^T$ & $[X_{1i,c}^T,X_{2i-1,c}^T, X_{1i,p}^T]^T$ & $[X_{1B,c}^T, X_{2B-1,c}^T, X_{1B,p}^T]^T$ \\ \hline
    Message $X_{2i}$ at ${\sf T}_2$ & $[X_{21,c}^T, \mathbf{0}_{l}^T, X_{21,p}^T]^T$ & $[X_{2i,c}^T, \mathbf{0}_{l}^T,X_{2i,p}^T]^T$ & $\mathbf{0}_n^T$ \\
    \hline
  \end{tabular}
      \label{table:encoding}
\end{table}
\subsubsection{Weak interference regime $n \geq m$} In the weak interference regime, we note that $|X_{1i}| = |X_{2i}| = n$. The encoding scheme is
complete, if the cardinality of $X_{1i,c}, X_{2i,c}, X_{1i,p}$ and
$X_{2i,p}$ are specified.

\emph{Decoding:} To allow reliable decoding, we specify the
cardinality of the common and private message for corner points
$\mathcal{K}_{\rm A}$, $\mathcal{K}_{\rm B}$, $\mathcal{K}_{\rm C}$,
$\mathcal{K}_{\rm D}$ \eqref{eq:definecorner} as the following:

\paragraph{Corner point $\mathcal{K}_{\rm A}$}
When $m < \frac{n}{2}$, the
desired corner point is $(n, n- 2m)$, and it is achievable without
feedback~\cite{bresler2}. When $\frac{n}{2} \leq m \leq n$, the
desired corner point is $(n,0)$, which is trivially achievable without
feedback.

\paragraph{Corner point $\mathcal{K}_{\rm C}$}
When ${m < \frac{n}{2}}$, the intersection is at the corner point
$(n-m, n)$, which is identical to the corner point $\mathcal{K}_{\rm D}$
that will be shown to achievable in Appendix~\ref{sec:kd}. When $\frac{2n}{3} \leq m \leq
n$, we conclude from Lemma~\ref{th:sumcap} and Theorem~2.1 that
the corner point is achievable without feedback. When ${\frac{n}{2}
  \le m < \frac{2n}{3}}$, the corner point $(m, 2n-2m)$ can be
achieved as $B\to \infty$, if
\begin{equation}
   |X_{1i,c}| = 2m - n, \text{ } |X_{2i,c}| = n - m, \text{ } |X_{1i,p}| = |X_{2i,p}| = n-m, \text{ } \mathbf{0}_l = l = 2m -n.
\end{equation}

\paragraph{Corner point $\mathcal{K}_{\rm D}$}
\label{sec:kd}
The corner point $(n-m,n)$ can be achieved as $B \to \infty $ if
\begin{equation}
   |X_{1i,c}| =  0, \text{ } |X_{2i,c}| = m, \text{ } |X_{1i,p}| = |X_{2i,p}| = n-m.
\end{equation}

${\sf D}_1$ and ${\sf D}_2$ respectively perform backward and forward
decoding. Due to backward decoding at ${\sf D}_1$, before decoding the
$i ^{\rm th}$ block $X_{2i,c}$ is known. Thus, $X_{2i,c}$ can be
subtracted from $Y_{1i}$, after which $X_{1i,c}, X_{2i-1,c}$ and
$X_{1i,p}$ can be decoded. Due to forward decoding at ${\sf D}_2$,
while decoding the $i^{\rm th}$ block $X_{2i-1,c}$ is already
known. Thus, $X_{2i-1,c}$ can be subtracted from $Y_{2i}$, after which
$X_{2i,c}, X_{1i,c}$ and $X_{2i,p}$ can be decoded.

\subsubsection{Strong interference regime $n < m$}
In the strong interference regime, $|X_{1i}| = |X_{2i}| = m$. Proving
achievability for the following two corner points is sufficient to show
the achievability of the outer-bound.

\paragraph{Corner point $\mathcal{K}_{\rm B}$}
The desired corner point is $(n, m -n)$, which is achievable without
feedback for $n < m \le 2n$~\cite{bresler2}. For $m>2n$, the rate pair
$(n, m -n)$ can be achieved if
\begin{equation}
  |X_{1i,c}| = n \text{ } |X_{2i,r}| = m - n, \text{ } |X_{1i,p}| = |X_{2i,p}| = 0, \text{ } \mathbf{0}_l = l = n.
\end{equation}

\paragraph{Corner point $\mathcal{K}_{\rm D}$}
In this case, the desired corner point is $(0,m)$. It can be achieved
if
\begin{equation}
 |X_{1i,c}| =  0, \text{ } |X_{2i,c}| = m, \text{ } |X_{1i,p}| = |X_{2i,p}| = 0.
\end{equation}
In this case, the unit ${\sf D}_1$-feedback-${\sf T}_1$ entirely
serves as a relay node. Forward decoding at both receivers is used to decode the desired messages.

\subsection{Proof of Theorem~\ref{th:sum-capg}}
\label{pthscg}
Let's define $S_{1i} = g_cX_{1i} +
Z_{2i}$ and $S_{2i} = g_cX_{2i} + Z_{1i}$
 \begin{eqnarray}\label{eq16}
\lefteqn{N(R_1 + R_2)} \nonumber \\
 & \leq & H(W_1, W_2) = H(W_1|W_2) + H(W_2) \label{eq_et} \\ &
 \stackrel{\mathrm{(Fano)}}{\leq} & I(W_1; Y_1^N| W_2) + I(W_2;Y_2^N)
 + N(\epsilon_{1N}+ \epsilon_{2N})\nonumber  \\
 & = & \underbrace{ h(Y_1^N|W_2) - h(Y_1^N| W_1, W_2) - h(Y_2^N|W_2)}
 + h(Y_2^N) + N\epsilon_N \label{equb}
\end{eqnarray}
where $\epsilon_{1N}, \epsilon_{2N}$ appear after applying Fano's inequality to the two entropy
terms in \eqref{eq_et}. Also, $\epsilon_N = \epsilon_{1N}+ \epsilon_{2N}$.
We now bound the expression in the under-brace in \eqref{equb} as
\begin{eqnarray}
\lefteqn{h(Y_1^N|W_2) - h(Y_1^N| W_1, W_2) - h(Y_2^N|W_2)} \nonumber \\
 & \stackrel{\text{(a)}}{\leq} & h(Y_1^N| W_2) + h(Y_1^N|Y_2^N, W_2) -
 h(Y_1^N| Y_2^N, W_2) - h(Y_2^N| W_2) - \sum_{j = 1}^N h(Z_{1i}) \nonumber \\
 & = & h(Y_1^N|W_2) + h(Y_1^N|Y_2^N, W_2) - h(Y_1^N, Y_2^N| W_2) - \sum_{i = 1}^N h(Z_{1i}) \nonumber \\
 & = & h(Y_1^N|Y_2^N, W_2) - h(Y_2^N|Y_1^N, W_2) - \sum_{i = 1}^N h(Z_{1i}) \nonumber  \\
 & \stackrel{\text{(b)}}{\leq} & h(Y_1^N| Y_2^N, W_2) - \sum_{i = 1}^N
 [h(Z_{1i}) + h(Z_{2i}) ] \nonumber \\
 & \stackrel{\text{(c)}}{=} & \sum_{ i = 1}^N  h(Y_{1i}| Y_2^N, W_2, Y_1^{i -1}) - \sum_{i = 1}^N [h(Z_{1i}) + h(Z_{2i}) ]  \nonumber
 \end{eqnarray}
 \begin{eqnarray}
 & \stackrel{\text{(d)}}{=} & \sum_{i=1}^N h(Y_{1i}| Y_2^N, W_2, Y_1^{i-1},
 X_2^i, S_{1i}) - \sum_{i = 1}^N [h(Z_{1i}) + h(Z_{2i}) ] \nonumber \\
 & \stackrel{\text{(e)}}{\leq} & \sum_{i = 1}^N h(Y_{1i}| X_{2i},
 S_{1i}) - \sum_{i = 1}^N [h(Z_{1i}) + h(Z_{2i}) ] \label{eq29}
 \end{eqnarray}
where (a) holds since
\begin{eqnarray*}
h(Y_1^N| W_1, W_2) & = & \sum_{i=1}^N h(Y_{1i} | W_1, W_2, Y_1^{i-1} )
\geq \sum_{i=1}^N h(Y_{1i} | W_1, W_2, Y_1^{i-1}, Y_2^{i-1} ) \\ & = &
\sum_{i=1}^N h(Y_{1i} | W_1, W_2, Y_1^{i-1}, Y_2^{i-1} , X_{1i},
X_{2i}) = \sum_{i= 1}^N h(Z_{1i}),
\end{eqnarray*}
(b) follows from\begin{eqnarray*}
h(Y_2^N|Y_1^N, W_2) & = & \sum_{i= 1}^N h(Y_{2i}| Y_1^N, W_2,
Y_2^{i-1}) = \sum_{i= 1}^N h(Y_{2i}| Y_1^N, W_2, Y_2^{i-1}, X_{2i})\\
 & \geq &  \sum_{i= 1}^N h(Y_{2i}| Y_1^N, W_2, Y_2^{i-1}, X_{2i}, X_{1i})  =  \sum_{i = 1}^N h(Z_{2i}) ,
\end{eqnarray*}
 (c) is due to the chain rule of entropy, (d) holds because given $W_2$
and $X_2^N$ can be precisely determined and $Y_{2i} = X_{2i} + S_{1i}$
and thus given $Y_{2i}$ and $X_{2i}$, $S_{2i}$ can be precisely
determined, (e) uses the fact that removing conditioning does not
increase the entropy.

We plug-in this part in the original sum-rate bound \eqref{equb} to get
\begin{equation}
R_1 + R_2 \leq \frac{1}{N} \left(h(Y_2^N) + \sum_{i = 1}^N h(Y_{1i}| X_{2i}, S_{1i}) - \sum_{i = 1}^N [h(Z_{1i} + h(Z_{2i})]  \right) + \epsilon_N \nonumber
\end{equation}

Letting $N \to \infty$ we can make $\epsilon_N \to 0$. Moreover
applying the chain rule of entropy and noting that removing
conditioning does not increase entropy, the following outer bound is
obtained
\begin{equation}
R_1 + R_2 \leq \frac{1}{N} \left( \sum_{i = 1}^N h(Y_{2i}) + \sum_{i =
1}^N h(Y_{1i}| X_{2i}, S_{1i}) - \sum_{i = 1}^N [h(Z_{1i}) +
h(Z_{2i})] \right) \nonumber
\end{equation}

By simply interchanging the indices of the users,
i.e., following the substitution $1 \to 2$ and vice versa, we obtain
\begin{equation}\label{eq:finalg}
R_1 + R_2 \leq \frac{1}{N} \left(\sum_{i =1}^N h(Y_{1i}) + \sum_{i =
1}^N h(Y_{2i}| X_{1i}, S_{2i}) - \sum_{i = 1}^N [h(Z_{1i}) +
h(Z_{2i})] \right)
\end{equation}

Assuming that both $X_1$ and $X_2$ is drawn from complex Gaussian
distributions with mean $0$ and variance $1$, and the correlation
between $X_1$ and $X_2$ is $\rho$, i.e.~$\rho =
\mathsf{E}[X_1{X_2}^*]$, and then the \eqref{eq:finalg} can be
expressed in terms of ${\sf SNR}$ and ${\sf INR}$ as
\begin{equation}
R_1 + R_2 \leq \sup_{0 \leq |\rho| \leq 1} \left\{ \log \left( 1 +
\frac{(1 - |\rho|^2) \mathsf{SNR}}{1 + (1 - |\rho|^2)
\mathsf{INR}}\right) + \log (1 + \mathsf{SNR + INR} +
2|\rho|\sqrt{\mathsf{SNR. INR}} )\right\}.
\end{equation}
which is the statement of Theorem~\ref{th:sum-capg}.

\subsection{Proof of Theorem~\ref{th:sum-capg2}}
\label{proof_th_sum-capg2}
\begin{eqnarray}
\lefteqn{N(2R_1 + R_2)} \nonumber \\
& \stackrel{\text{(a)}}{=} &  H(W_1) + H(W_1|W_2) + H(W_2) \nonumber  \\
 & \stackrel{\text{(b)}}{\leq} & I(W_1; Y_1^{N}) + I(W_1; Y_1^N|
W_2) + I(W_2; Y_2^N) + N(\epsilon_{1N} + \epsilon_{2N} + \epsilon_{3N}) \nonumber  \\
& = & h(Y_1^N) - h(Y_1^N| W_1) + h(Y_1^N|W_2) - h(Y_1^N| W_1 W_2) + h(Y_2^N)  - h(Y_2^N|W_2) + N\epsilon_N
\end{eqnarray}
(a) is due to the independence of the messages at the two transmitters. (b) follows due to applying Fano's inequality to each of the entropy terms and $\epsilon_N =
3\max(\epsilon_{1N}, \epsilon_{2N}, \epsilon_{3N})$. Rearranging the
terms, the following expression is obtained
\begin{eqnarray}\label{eq:3subex}
  h(Y_1^N) + \underbrace {h(Y_2^N) - h(Y_1^N| W_1)} +\underbrace{
 h(Y_1^N|W_2) - h(Y_1^N| W_1 W_2) - h(Y_2^N| W_2)} + N\epsilon_N
\end{eqnarray}

The three sub-expressions are separately bounded. The first
sub-expression is $h(Y_1^N) = \sum_{i=1}^N h(Y_{1i}|Y_1^{i-1}) \leq
\sum_{i=1}^N h(Y_{1i})$, because removing conditioning does not reduce
entropy.

In order to bound the second sub-expression, observe the following:
\begin{eqnarray}
\lefteqn{ h(S_2^N) - h(Y_1^N|W_1)} \nonumber \\
 & = &  h(S_2^N) - \sum_{i = 1}^N
h(Y_{1i}| W_1 Y_1^{i - 1}) \nonumber \\
 & \stackrel{\mathrm{(c)}}{=} & h(S_2^N)
- \sum_{i=1}^N h(Y_{1i}| W_1 Y_1^{i-1} X_1^{i}) \nonumber \\
 & \stackrel{\mathrm{(d)}}{=} & h(S_2^N| W_1) - \sum_{i = 1}^N h(S_{2i}|
W_1 Y_1^{i-1} X_1^i) \nonumber \\
& \stackrel{\mathrm{(e)}}{=}& \sum_{i = 1}^N
h(S_{2i}| W_1 S_2^{i-1}) - \sum_{i = 1}^N h(S_{2i}| W_1 Y_1^{i-1}
X_1^i) \nonumber \\
 & \stackrel{\mathrm{(f)}}{=}& \sum_{i = 1}^N h(S_{2i}| W_1
S_2^{i-1}) - \sum_{i = 1}^N h(S_{2i}| W_1 Y_1^{i-1} X_1^i S_2^{i-1}) \nonumber \\
 & = & \sum_{i = 1}^N I(S_{2i}; Y_1^{i-1} X_1^i | S_2^{i-1} W_1 ) \nonumber \\
& = & \sum_{i = 1}^N h(X_1^i Y_1^{i-1} | S_2^{i-1} W_1) - h(X_1^i
Y_1^{i-1}| S_2^i W_1 ) \nonumber \\
 & = & \sum_{i = 1}^N h(X_1^i | S_2^{i-1}
W_1) + h(Y_1^{i-1} | X_1^{i} S_2^{i-1} W_1) - ( h(X_1^i| S_2^i W_1 ) +
h( Y_1^{i-1}| X_1^{i} S_2^i W_1 ) ) \nonumber \\
 &\stackrel{\mathrm{(g)}}{=} & \sum_{i =
1}^N h(X_1^i | S_2^{i-1} W_1) - \sum_{i =1}^N h(X_1^i| S_2^i W_1 ) \nonumber \\
 & \stackrel{\mathrm{(h)}}{=} &\sum_{i = 1}^N \sum_{j = 1}^i h(X_{1j}| X_1^{j-1}, S_2^{i-1}, W_1) - \sum_{i = 1}^N \sum_{j = 1}^i  h(X_{1j}| X_1^{j-1}, S_2^{i}, W_1) \nonumber \\
& \stackrel{\mathrm{(i)}}{=}&  \sum_{i = 1}^N \sum_{j = 1}^i h(X_{1j}| X_1^{j-1}, S_2^{i-1}, W_1, Y_1^{j-1}) - \sum_{i = 1}^N \sum_{j = 1}^i  h(X_{1j}| X_1^{j-1}, S_2^{i}, W_1, Y_1^{j-1}) \nonumber \\
&\stackrel{\mathrm{(j)}}{=}& 0 \label{eq:s2}
\end{eqnarray}
(c) holds because $X_1^i$ is a deterministic function of $W_1$ and
$Y_1^{i-1}$, (d) is justified because the message $W_1$ is independent
of $W_2$, and $S_{2i}$ depends only on $W_2$ and the noise
$Z_{1i}$, (e) holds due to the chain rule of entropy (f) is because
$S_{2}^{i-1}$ can be precisely determined from $X_{1}^i$ and
$Y_{1}^{i-1}$ (g) holds because given $X_1^{i-1}$ and $S_2^{i-1}$,
$Y_1^{i-1}$ can be precisely determined, (h) is obtained by applying the chain
rule of entropy to both of the summation terms, (i) holds as $Y_1^{j-1}$
can be precisely determined using $X_1^{j-1}$ and $S_2^{j-1}$, (j) is
true because given $W_1$ and $Y_1^{j-1}$, $X_{1j}$ can be precisely
determined and hence the value of each of the entropy terms is 0.

Let us introduce $S_{2i}' = g_cX_{2i} + Z_{2i}'$, where for every $i$,
the $Z_{2i}'$ is independently distributed with
$\mathcal{CN}(0,1)$. Since entropy is a function of the probability
density function, $h(S_2^N) = h(S_2'^N)$. From \eqref{eq:s2}, we know
that $h(Y_1^N|W_1) = h(S_2^N)$. Thus, $h(Y_1^N|W_1) = h(S_2^N) =
h(S_2'^N)$, which can be used in the second subexpression in
\eqref{eq:3subex} such that
\begin{eqnarray*}
\lefteqn{h(Y_2^N) - h(Y_1^N|W_1)} \\ & = & h(Y_2^N) - h(S_2'^N) \\ & = &
h(Y_2^N ) + h( S_2'^N|Y_2^N) - h(S_2'^N | Y_2^N) - h(S_2'^N) \\ & = &
h(Y_2^N, S_2'^N) - h(S_2'^N ) - h(S_2'^N |Y_2^N) \\ &
\stackrel{\text{(a)}}{\leq} & h(Y_2^N| S_2'^N) - h(S_2'^N |Y_2^N,
X_2^N) \\ & \stackrel{\text{(b)}}{=} & h(Y_2^N| S_2'^N) - h(Z_2'^N
|Y_2^N, X_2^N) \\ & \stackrel{\text{(c)}}{=} & h(Y_2^N| S_2'^N) -
h(Z_2'^N) \\ & \stackrel{\text{(d)}}{\leq} &\sum_{i = 1}^N
(h(Y_{2i}|S'_{2i}) - h(Z'_{2i} ))
\end{eqnarray*}
where (a) holds because conditioning reduces entropy, (b) holds
because $S_2'^N$ is a function of $X_2^N$ and $Z_{2}'^N$, (c) holds
because $Z_2'^N$ is independent of $(Y_2^N, X_2^N)$, (d) holds because
entropy can only increase if conditioning is removed and noise
$Z'_{2i}$ is independent of $Z_{2j}'$ for $i \neq j$.

The third subexpression in \eqref{eq:3subex} is also
bounded with (\ref{eq29}). Putting them together, we finally have
the following bound
\begin{eqnarray*}
N(2R_1 + R_2) & \leq & \sum_{i = 1}^N (h(Y_{1i}) + h(Y_{1i}| S_{1i}
X_{2i}) - h(Z_{1i}) -h(Z_{2i}) +  h(Y_{2i}| S_{2i}') - h(Z_{2i}'))  + N\epsilon_N.
\end{eqnarray*}
Again letting $N \to \infty$ we can make $\epsilon_N \to 0$ and thus
we have the upper bound

\begin{equation} \label{eq:finalg2}
2R_1 + R_2  \leq \frac{1}{N} \left( \sum_{i = 1}^N [h(Y_{1i}) + h(Y_{1i}| S_{1i}
X_{2i}) + h(Y_{2i}| S'_{2i}) - h(Z_{1i}) -h(Z_{2i}) - h(Z'_{2i}) ]     \right).
\end{equation}
Assuming that both $X_1$ and $X_2$ is drawn from complex Gaussian
distributions with mean $0$ and variance $1$, and the correlation
between $X_1$ and $X_2$ is $\rho$, i.e.~$\rho =
\mathsf{E}[X_1{X_2}^*]$, and then the \eqref{eq:finalg2} can be
expressed in terms of ${\sf SNR}$ and ${\sf INR}$ as
\begin{eqnarray}
2R_1 + R_2 & \leq & \sup_{0 \leq |\rho| \leq 1} \{ \log \left( 1 +
\frac{(1 - |\rho|^2) \mathsf{SNR}}{1 + (1 - |\rho|^2)
\mathsf{INR}}\right) + \log (1 + \mathsf{SNR + INR} +
2|\rho|\sqrt{\mathsf{SNR. INR}} )  \nonumber \\
&& + \log\left(1 + \mathsf{INR} + \frac{\mathsf{SNR} - (1 +
|\rho|^2) \mathsf{INR} + 2 |\rho| \sqrt{\mathsf{SNR.INR} }}{1+
\mathsf{INR}}\right) \}
\end{eqnarray}
which is the statement of the Theorem~\ref{th:sum-capg2}.

\subsection{Gap to Capacity}
\label{gap}
Corresponding to the relevant corner point, we show the gap of the
achievable rate pairs described in Table~\ref{table:rates}. First, we
bound the gap for $\alpha \in [0,1]$ and then for $\alpha \in
(1,\infty)$.

\subsubsection{Corner point $\overline{\mathcal{K}}_{\rm A}$ for $(1000)$ feedback model}\label{para:2r1r2-r1}  It is sufficient to consider only two interference regimes. The
achievable rate pair is described by the corner point
$\mathcal{P}_{\rm A}$ in Table~\ref{table:rates}, which is achievable
without feedback.

\paragraph{$\alpha \in \left[0, \frac{1}{2}\right)$} The gaps of the achievable rate $R_2$ from the outer bound is
\begin{eqnarray}
\overline{C}_2 - R_2 & \leq & \log \left( 1 + \frac{ \mathsf{SNR}}{1 +
  \mathsf{INR}}\right) + \log (1 + \mathsf{SNR + INR} +
2\sqrt{\mathsf{SNR. INR}} )  \nonumber \nonumber \\ && + \log\left(1 +
  \mathsf{INR} + \frac{\mathsf{SNR} -  \mathsf{INR}
    }{1+ \mathsf{INR}}\right) - 2\log(1 +
  \mathsf{SNR}) -
\log\left(\frac{\mathsf{SNR}}{2\mathsf{INR}^2}\right) \leq 3 + \log(3)
\end{eqnarray}
and
\begin{eqnarray}
\overline{C}_1 - R_1 & \leq& \log(1 + \mathsf{SNR}) - [\log(\mathsf{SNR}) - 1 ] \leq \log\left(\frac{1 + \mathsf{SNR}}{\mathsf{SNR}}\right) + 1 = 2.
\end{eqnarray}

\paragraph{$\alpha \in \left[\frac{1}{2},  1\right]$} The achievable
rates and the corresponding distances from the outer bounds are
\begin{eqnarray}
\overline{C}_2 - R_2 & \leq &   \log \left( 1 +
\frac{ \mathsf{SNR}}{1 + \mathsf{INR}}\right) + \log (1 + \mathsf{SNR + INR} +
2\sqrt{\mathsf{SNR. INR}} )  \nonumber \nonumber \\
&& + \log\left(1 + \mathsf{INR} + \frac{\mathsf{SNR} -\mathsf{INR} }{1+
\mathsf{INR}}\right)   - 2\log(1 + \mathsf{SNR})
\leq 3.
\end{eqnarray}
Both the outer and inner bounds for $R_1$ are $\log(1 + \mathsf{SNR})$ and
thus the gap is 0. The point $(\log(1 + \mathsf{SNR}), 0)$ is
trivially achievable.

\subsubsection{Corner point $\overline{\mathcal{K}}_{\rm C}$ for $(1000)$ feedback model}
\label{para:2r1r2-r1r2}
The gap between the corner point $\mathcal{P}_{\rm C}$ in
Table~\ref{table:rates} and the outer bound
$\overline{\mathcal{K}}_{\rm C}$, is computed separately for three
different regimes of interference.

\paragraph{$\alpha \in \left[0,\frac{1}{2}\right)$} The distance of the outer
bound from the achievable rate $R_2$ is
\begin{eqnarray}
\overline{C}_2 - R_2 & \leq & \left[\log \left( 1 +
\frac{\mathsf{SNR}}{1 + \mathsf{INR}}\right) + \log\left(1 + \mathsf{SNR} + \mathsf{INR} + 2\sqrt{\mathsf{SNR}.\mathsf{INR}}\right) - \log\left(1 + \mathsf{INR} +  \frac{\mathsf{SNR} -\mathsf{INR} }{1 + \mathsf{INR}}\right)\right] \nonumber \nonumber \\
&& - \left[\log\left(1 + \frac{\mathsf{SNR}}{2\mathsf{INR}}\right) + \log(\mathsf{INR}) - \log(3)\right] 
\leq 1 + 2\log(3). \label{eq:kc1r2}
\end{eqnarray}
The corresponding gap for the achievable rate $R_1$ is
\begin{eqnarray}
\overline{C}_1 - R_1 & \leq &\log\left(1 + \mathsf{INR} + \frac{\mathsf{SNR - INR}}{\mathsf{INR + 1}}\right) -\log\left( 1 + \frac{\mathsf{SNR}}{2\mathsf{INR}}\right) 
\leq 2. \label{eq:kc1r1}
\end{eqnarray}

\paragraph{${\alpha \in \left[\frac{1}{2},\frac{2}{3}\right)}$} The distance of $R_2$ from the outer bound is bounded as follows
\begin{eqnarray}
\overline{C}_2 - R_2 & \leq &  \log \left( 1 +
\frac{ \mathsf{SNR}}{1 + \mathsf{INR}}\right) + \log (1 + \mathsf{SNR + INR} +
2\sqrt{\mathsf{SNR. INR}} ) - \log\left(1 +
\mathsf{INR} + \frac{\mathsf{SNR - INR }}{\mathsf{INR + 1}}\right) \nonumber \\
 && - \left[\log\left(1 + \frac{\mathsf{SNR}}{\mathsf{2INR}}\right) + \log\left(1 + \frac{\mathsf{SNR}}{\mathsf{INR}}\right) - \log(4)\right] 
\leq 3 + \log(3) \label{eq:kc2r2}
\end{eqnarray}
and the distance of $R_1$ from the outer bound is bounded as
\begin{eqnarray}
\overline{C}_1 - R_1 & = &\left[\log\left(1 + \mathsf{INR} + \frac{\mathsf{SNR - INR }}{\mathsf{INR + 1}}\right)\right] - \left[\log\left(1 + \frac{\mathsf{SNR}}{2\mathsf{INR}}\right) + \log\left(1 + \frac{\mathsf{INR}^2}{\mathsf{SNR}}\right) -\log(4)\right] 
\nonumber \\ & \leq & 3. \label{eq:kc2r1}
\end{eqnarray}

\paragraph{${\alpha \in \left[\frac{2}{3}, 1 \right]}$} The distance of $R_2$ from its corresponding outer bound is
\begin{eqnarray}
\overline{C}_2 - R_2 & \leq & \log \left( 1 + \frac{ \mathsf{SNR}}{1 +
  \mathsf{INR}}\right) + \log (1 + \mathsf{SNR + INR} +
2\sqrt{\mathsf{SNR. INR}} ) - \log\left(1 + \mathsf{INR} +
\frac{\mathsf{SNR - INR }}{\mathsf{INR + 1}}\right) \nonumber \nonumber \\ & & -
\left[\log\left(\frac{\mathsf{SNR}}{\mathsf{INR}}\right) + \log\left(1
  + \frac{\mathsf{SNR}}{2\mathsf{INR}}\right) - \log(1.5)\right] 
 \leq  2 + \log(3). \label{eq:kc3r2}
\end{eqnarray}
and the corresponding gap between the outer bound and $R_1$ is given by
\begin{eqnarray}
\overline{C}_1 - R_1 & \leq & \log\left(1 + \mathsf{INR} + \frac{\mathsf{SNR - INR }}{\mathsf{INR + 1}}\right) - [\log \left(1 + \frac{\mathsf{SNR}}{2\mathsf{INR}}\right) + \log\left(
\frac{\mathsf{INR}^2}{\mathsf{SNR}}\right) - \log(3)] \nonumber \nonumber \\
& \leq & 2 + \log(3). \label{eq:kc3r1}
\end{eqnarray}

\subsubsection{Corner point $\overline{\mathcal{K}}_{\rm D}$ for $(1111)$ feedback model}
\label{para:r2-r1r2_l}
Note that the corner point $\overline{\mathcal{K}}_{\rm D}$ is
identical for all $(1\mathsf{xxx})$ feedback models. This corner point
is within a constant gap from the achievable rate pair described by
the corner point $\mathcal{P}_{\rm D}$ in Table~\ref{table:rates}.
\paragraph{$\alpha \in [0,1]$}
\noindent The gap between the achievable rate $R_2$ from $\overline{C}_2$ is
\begin{eqnarray}
\overline{C}_2 - R_2  \leq  \log(1 + \mathsf{SNR} + \mathsf{INR}) - \log\left( 1 + \frac{\mathsf{SNR}}{2\mathsf{INR}}\right) - \log(\mathsf{INR}) + \log(3) \leq 1 + \log(3). \label{eq:kd1r2}
\end{eqnarray} The corresponding gap of $R_1$ from $\overline{C}_1$ is
\begin{eqnarray}
\overline{C_1} - R_1 & \leq & \left[\log \left( 1 + \frac{
    \mathsf{SNR}}{1 + \mathsf{INR}}\right) + \log \left(\frac{1 +
    \mathsf{SNR + INR} + 2\sqrt{\mathsf{SNR. INR}}}{1 + \mathsf{SNR +
      INR}}\right)\right] - \log\left(1 +
\frac{\mathsf{SNR}}{2\mathsf{INR}} \right) \leq 2. \nonumber \\  \label{eq:kd1r1}
\end{eqnarray}

\subsubsection{Corner point $\overline{\mathcal{K}}_{\rm D}$ of $(0110)$ feedback model}\label{sec:kdfromd}
Note that the corner point $\overline{\mathcal{K}}_{\rm D}$ is
identical for $(0110)$ and $(0010)$ feedback models.
\noindent The gap between the achievable rate $R_2$ from $\overline{C}_2$ is
\begin{eqnarray}
\overline{C}_2 - R_2  \leq  \log(1 + \mathsf{SNR})  - \log\left( 1 + \frac{\mathsf{SNR}}{2\mathsf{INR}}\right) - \log(\mathsf{INR}) + \log(3) \leq 1 + \log(3).
\end{eqnarray} The corresponding gap of $R_1$ from $\overline{C}_1$ is
\begin{eqnarray}
\overline{C_1} - R_1 & \leq & \left[\log \left( 1 + \frac{
    \mathsf{SNR}}{1 + \mathsf{INR}}\right) + \log \left(\frac{1 +
    \mathsf{SNR + INR} + 2\sqrt{\mathsf{SNR. INR}}}{1 + \mathsf{SNR +
      INR}}\right)\right] - \log\left(1 +
\frac{\mathsf{SNR}}{2\mathsf{INR}} \right) \leq 2. \nonumber \\
\end{eqnarray}

\subsubsection{Corner point $\overline{\mathcal{K}}_{\rm B}$ of $(0110)$ feedback model}\label{sec:kbfromb2}
Note that the corner point $\overline{\mathcal{K}}_{\rm B}$ is
identical for $(0110)$ and $(0010)$ feedback models. The gap of the
achievable rate pair described by $\mathcal{P}_{\rm B2}$, in the weak
interference regime, described in Table~\ref{table:rates2} is computed
as follows
\begin{eqnarray}
\overline{C}_2  - R_2 & \leq & \log\left(1 + \mathsf{\frac{SNR}{1 + INR}}\right) + \log(1 + \mathsf{SNR + INR + 2\sqrt{SNR.INR}}) \nonumber \\
& &  - \log\left(1 + {\sf SNR}\right) - \min\{\log\left(\frac{\sf SNR}{\sf 2INR}\right), \log\left(\frac{\sf SNR}{\sf 4INR}\right)\} \leq 3 + \log(3)
\end{eqnarray}
and the corresponding gap of $R_1$ from $\overline{C}_1$ is
\begin{equation}
  \overline{C}_1 - R_1 \leq \log(1 + {\sf SNR}) - \log\left(1 + \frac{\sf SNR}{2}\right)
\leq 1
\end{equation}

Now we list the gap for corner points where $\alpha \in (1, \infty)$.

\subsubsection{Corner point $\overline{\mathcal{K}}_{\rm B}$ of $(1000)$ feedback model}\label{para:r1-r_1r_2}
The achievability is described in Section \ref{subsec:onelink_a_l},
and the achievable rate pair is described by $\mathcal{P}_{\rm B}$ in
Table~\ref{table:rates}.
\paragraph{$\alpha \in (1,2)$} The gap for the achievable rate $R_2$ is
\begin{eqnarray}
\overline{C}_2 - R_2 & \leq & \left[\log\left(1 +
  \frac{\mathsf{SNR}}{1 + \mathsf{INR}}\right) + \log(1 + \mathsf{SNR}
  + \mathsf{INR} + 2\sqrt{\mathsf{SNR}.\mathsf{INR}}) - \log(1 +
  \mathsf{SNR})\right] -
\log\left(1 + \frac{\mathsf{INR}}{\mathsf{SNR}}\right) \nonumber \nonumber \\ & \leq & \log(2)
+ \log\left(\frac{\mathsf{SNR} + \mathsf{SNR}^2 +
  \mathsf{SNR}.\mathsf{INR} +
  2\sqrt{\mathsf{SNR}.\mathsf{INR}}.\mathsf{SNR}}{{\sf SNR}^2 + {\sf SNR} + \mathsf{SNR}.\mathsf{INR}
  + \mathsf{INR}}\right) \leq 1 + \log(3). \label{eq:kb1r2}
\end{eqnarray}
The gap between the achievable rate $R_1$ and the outer bound is
\begin{eqnarray}
\overline{C}_1 - R_1  \leq  \log(1 + \mathsf{SNR}) - \log(\mathsf{SNR})\leq  \log(2)  = 1.\label{eq:kb1r1}
\end{eqnarray}

\paragraph{$\alpha \in [2, \infty)$} The outer bound
  on $R_2$ and its gap from the outer bound is
\begin{eqnarray}
\overline{C}_2 - R_2 & \leq &  \left[\log(1 + \mathsf{INR} + \mathsf{SNR} + 2\sqrt{\mathsf{INR}.\mathsf{SNR}}) + \log\left(1 + \frac{\sf SNR}{\sf INR + 1}\right)- \log(1 + \mathsf{SNR})\right] - \left[\log\left(\frac{\mathsf{INR}}{\mathsf{SNR}}\right) \right] \nonumber \nonumber \\
& \leq & \log\left(\frac{\mathsf{SNR} + \mathsf{INR}.\mathsf{SNR} + \mathsf{SNR}^2 + 2\sqrt{\mathsf{INR}.\mathsf{SNR}}.{\sf SNR}}{\mathsf{INR} + \mathsf{INR}.\mathsf{SNR}}\right) + 1  \leq 
 \log(3) + 1. \label{eq:kb2r2}
\end{eqnarray}
The gap from the outer bound for the achievable rate $R_1$ is
\begin{eqnarray}
\overline{C}_1 - R_1 \leq  \log\left(1 + \mathsf{SNR}\right) - [\log(\mathsf{SNR}) ] \leq  \log\left(1 + \frac{1}{\mathsf{SNR}}\right)  \leq \log(2)  =1. \label{eq:kb2r1}
\end{eqnarray}

\subsubsection{Corner point $\overline{\mathcal{K}}_{\rm D}$ of $(1111)$ feedback model} \label{para:r2-r1r2_h}
The corner point $\overline{\mathcal{K}}_{\rm D}$ is identical for all
$(1\mathsf{xxx})$ feedback models. The achievable rate pair is
described by $\mathcal{P}_{\rm D}$ in Table~\ref{table:rates}.
\paragraph{$\alpha \in (1, \infty)$} The gap between $R_2$ and the outer bound is
\begin{eqnarray}
  \overline{C}_2 - R_2 &\leq & \log(1 + \mathsf{INR} + \mathsf{SNR}) - \log(1 + \mathsf{INR})  \leq  1\label{eq:kd2r2}
\end{eqnarray}
The achievable rate $R_1 = 0$ and the corresponding gap from the outer
bound corner point is
\begin{eqnarray}
\overline{C}_1 - R_1 &\leq & \log\left(1 + \frac{\mathsf{SNR}}{1 + \mathsf{INR}}\right) + \log(1 + \mathsf{INR} + \mathsf{SNR} + 2\sqrt{\mathsf{INR}.\mathsf{SNR}}) - \log(1 + \mathsf{INR} + \mathsf{SNR}) \nonumber \nonumber \\
& \leq & \log(2) + \log\left(1 + \frac{2\sqrt{\mathsf{SNR}.\mathsf{INR}}}{1 + \mathsf{INR} + \mathsf{SNR}}\right)  \leq  \log(2) + \log(2) = 2. \label{eq:kd2r1}
\end{eqnarray}

\subsubsection{Corner point $\overline{\mathcal{K}}_{\rm D}$ of $(0110)$ feedback model}\label{sec:kdnfb}
When $1 < \alpha < 2$, the rate pair described by $\mathcal{P}_{\rm
  D2}$ in Table~\ref{table:rates2} is achievable without any feedback. Its gap from $\overline{\mathcal{K}}_{\rm D}$ is computed as
follows:

\paragraph{$\alpha \in (1,2)$}
\begin{eqnarray}
\overline{C}_1 - R_1 & \leq & \log \left(1 + \frac{\mathsf{SNR}}{ 1 +
  \mathsf{INR}}\right) + \log(1 + \mathsf{SNR + INR} +
2\sqrt{\mathsf{SNR.INR}}) - \log(1 + \mathsf{SNR}) - \log\left(1 +
\frac{\mathsf{INR}}{\mathsf{SNR}} \right)  \nonumber \\
 & \leq & \log(2) + \log(2) + \log \left(\frac{\mathsf{SNR} + \mathsf{SNR}^2
  + \mathsf{INR.SNR} +
  2\mathsf{SNR}.\sqrt{\mathsf{SNR}.\mathsf{INR}}}{\mathsf{2SNR + 2SNR^2
    + 2INR + 2INR.SNR }} \right) \leq 2.
\end{eqnarray}

\begin{eqnarray}
\overline{C}_2 - R_2 & \leq & \log(1 + \mathsf{SNR})  -
\log(\mathsf{SNR}) \leq  1.
\end{eqnarray}



\end{document}